\documentclass[journal,twoside,web]{ieeecolor}

\usepackage{textcomp}
\usepackage{color}
\usepackage{generic}
\usepackage{graphicx}
\usepackage{subcaption}
\usepackage{scalerel}

\usepackage{siunitx}
\usepackage{pifont}
\usepackage{graphicx}
\usepackage{amsmath,amssymb,textcomp}
\usepackage{algorithm} 
\usepackage{cite}

\usepackage{booktabs}
\usepackage{url}
\usepackage{algpseudocode}
\usepackage{paralist}
\usepackage{stmaryrd}
\usepackage{epsfig} 
\usepackage{bm} 
\usepackage{listings,lstautogobble}

\usepackage{epstopdf}
\usepackage{parcolumns}
\usepackage{mathtools}
\usepackage{mathrsfs}
\usepackage{xspace}
\usepackage[utf8]{inputenc}
\usepackage{calrsfs}
\usepackage{rotating,multirow}
\usepackage[compact]{titlesec}
\titlespacing{\section}{0pt}{1.5ex}{1.5ex}
\titlespacing{\subsection}{0pt}{0.7ex}{0.7ex}
\titlespacing{\subsubsection}{0pt}{0.4ex}{0.4ex}

\setlength{\abovedisplayskip}{4pt}
\setlength{\belowdisplayskip}{4.4pt}

\usepackage[T1]{fontenc}
\usepackage[utf8]{inputenc}

\newtheorem{definition}{Definition}
\newtheorem{proposition}{Proposition}
\newtheorem{theorem}{Theorem}
\newtheorem{remark}{Remark}
 \newtheorem{lemma}{Lemma} 
 \newtheorem{assumption}{Assumption}

  
\usepackage{authblk}

%
%

\DeclareMathAlphabet{\mathcal}{OMS}{cmsy}{m}{n}

\DeclarePairedDelimiter{\abs}{\lvert}{\rvert}
\DeclarePairedDelimiter{\norm}{\lVert}{\rVert}

\usepackage{hyperref}




\newcommand{\R}{{\mathbb{R}}}

\newcommand{\x}{\textbf{x}}

\def\t{\top}
\newcommand{\zono}[1]{\langle #1 \rangle}
\newcommand{\bigzono}[1]{\Big\langle #1 \Big\rangle}
\newcommand{\setdef}[2][]{
	\left\{
	\ifblank{#1}{}{#1 \hspace{.1cm} \middle| \hspace{.1cm}}
	#2
	\right\}
}


\allowdisplaybreaks

\begin{document}

\title{Data-Driven Reachability Analysis \\ from Noisy Data}

\author[1,2]{Amr Alanwar}
\author[3]{Anne Koch}
\author[3]{Frank Allgöwer} 
\author[1]{Karl Henrik Johansson}
\affil[1]{KTH Royal Institute of Technology \authorcr Email: {\tt \{alanwar,kallej\}@kth.se}\vspace{1ex}}
\affil[2]{Constructor University \authorcr Email: {\tt aalanwar@constructor.university}\vspace{1ex}}
\affil[3]{Institute for Systems Theory and Automatic Control, University of Stuttgart \authorcr Email: {\tt \{anne.koch,frank.allgower\}@ist.uni-stuttgart.de} \vspace{-6ex}
} 

\maketitle


\begin{abstract}

We consider the problem of computing reachable sets directly from noisy data without a given system model. Several reachability algorithms are presented for different types of systems generating the data. First, an algorithm for computing over-approximated reachable sets based on matrix zonotopes is proposed for linear systems. Constrained matrix zonotopes are introduced to provide less conservative reachable sets at the cost of increased computational expenses and utilized to incorporate prior knowledge about the unknown system model. Then we extend the approach to polynomial systems and, under the assumption of Lipschitz continuity, to nonlinear systems. Theoretical guarantees are given for these algorithms in that they give a proper over-approximate reachable set containing the true reachable set. Multiple numerical examples and real experiments show the applicability of the introduced algorithms, and comparisons are made between algorithms. 
\end{abstract}

\section{Introduction}

 Safety-critical applications require guaranteed state inclusion in a bounded set to provably avoid unsafe sets. This motivates the need for reachability analysis which computes the union of all possible trajectories that a system can reach within a finite or infinite time when starting from a bounded set of initial states \cite{conf:reach2000ellipsoidal}. The most popular approaches for computing reachable sets are set-propagation and simulation-based techniques. Set-propagation techniques propagate reachable sets for consecutive time steps. The efficiency depends on the set representation and the technique used. For instance, Taylor series methods propagate enclosures over discrete time by constructing a Taylor expansion of the states with respect to time and bounding the coefficients \cite{conf:rigorousreachtaylor}. The resulting enclosure is then inflated by a bound on the truncation error. Other set representations are polyhedra \cite{conf:reachpoly1}, zonotopes \cite{conf:reachabilityzono}, (sparse) polynomial zonotopes \cite{conf:sparsepolyzono}, ellipsoids \cite{conf:reachellipsoidal}, and support functions \cite{conf:reachsupport}. Zonotopes have favorable properties as they can be represented compactly and are closed under Minkowski sum and linear mapping. The simulation-based approach in \cite{conf:reachsim3} over-approximates the reachable set by a collection of tubes around trajectories such that the union of these tubes provides an over-approximation of the reachable set. Sampling-based approach that utilizes random set theory is presented in  \cite{conf:reachrandomsettheory}. 
\begin{figure}
    \centering
    \includegraphics[width=0.5\textwidth]{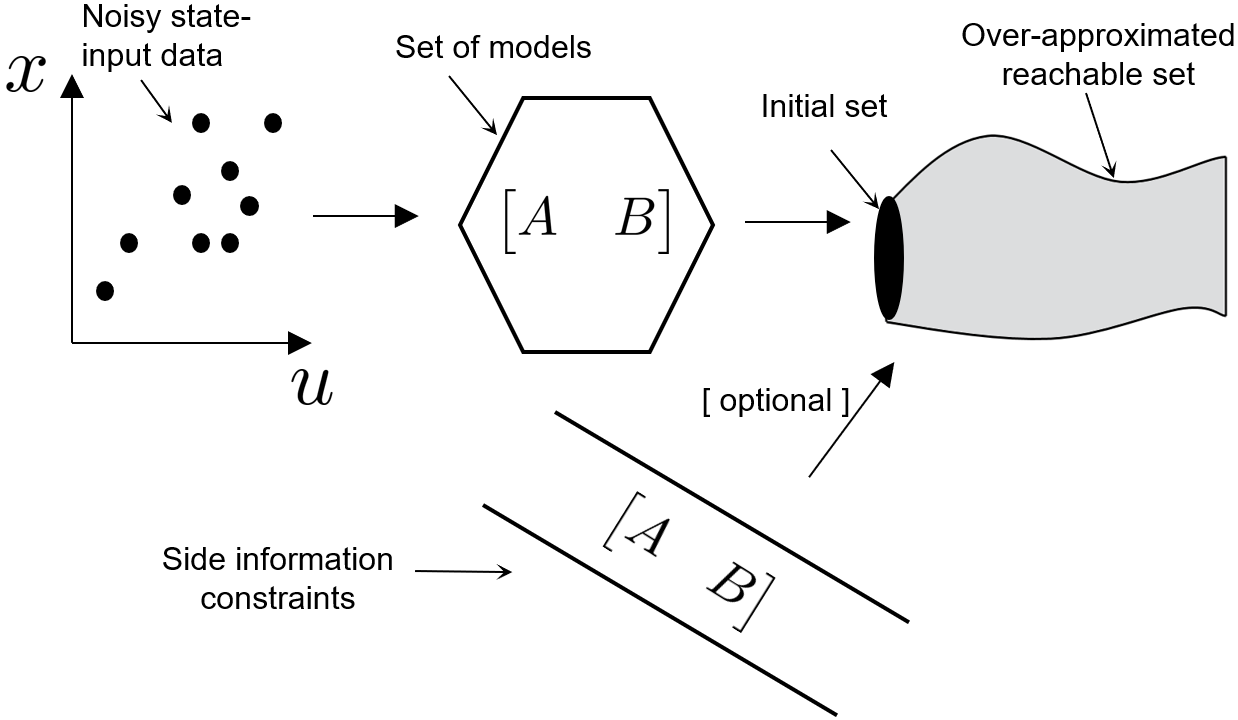}
    \caption{The reachable sets consistent with noisy input-state data are computed while making use of side information if available.}
    \label{fig:idea}
    \vspace{-4mm}
\end{figure}

 While there is a considerable amount of literature on computing reachable sets for different model classes, these methods assume an a priori given model. Obtaining a model that adequately describes the system from first principles or noisy data is usually challenging and time-consuming. Simultaneously, system data in the form of measured trajectories are often readily available in many applications. Therefore, we are interested in reachability analysis directly from noisy data of an unknown system model. One recent contribution in this direction can be found in \cite{conf:murat}, where the authors introduce two data-driven methods for computing the reachable sets with probabilistic guarantees. The first method represents the reachability problem as a binary classification problem using a Gaussian process classifier. The second method makes use of a Monte Carlo sampling approach to compute the reachable set. A probabilistic reachability analysis is proposed for general nonlinear systems using level sets of Christoffel functions in \cite{conf:murat_christoffel} where they guarantee that the algorithm's output is an accurate reachable set approximation in a probabilistic sense. 

Over-approximating reachable sets from data are considered in \cite{conf:onthefly} based on interval Taylor-based methods applied to systems with dynamics described as differential inclusions; however, the proposed approach only works under the assumption of prior nonlinear terms bound. Another interesting method is introduced in \cite{conf:tomlin_nonlin_reach}, where the model is assumed to be partially known, and data is used to learn an additional Lipschitz continuous state-dependent uncertainty, where the unknown part is assumed to be bounded by a known set. We believe that computing guaranteed reachable sets from noisy data is still an open problem. 

The main idea underlying the introduced data-driven reachability framework is visualized in Fig.~\ref{fig:idea}, where we compute data-driven reachable sets based on matrix zonotope recursion. In order to guarantee that the reachable set encloses all system trajectories from finite noisy data, we compute a matrix zonotope that encloses all models that are consistent with the noisy data instead of depending on a single model that might be incorrect. The true model is guaranteed to be within the set of models. 
We then propagate the initial set forward, utilizing this matrix zonotope to compute the reachable set. We also provide an improved reachability analysis algorithm that provides a less conservative over-approximation of the reachable set at the cost of additional computational expenses. This improved analysis scheme is enabled by introducing a new set representation, namely constrained matrix zonotopes. We utilize this novel set representation and the corresponding operations additionally to provide a systematic approach on how supplementary prior knowledge about the unknown model, like states decoupling, partial model knowledge, or given bounds on certain entries in the system matrices, can be incorporated into the reachability analysis. 
We then extend our approach to two classes of nonlinear systems: polynomial systems and Lipschitz systems. All used codes to recreate our findings are publicly available\footnotemark. 


 We will specifically build upon ideas used in \cite{conf:nonexciting,conf:dissipativity1,conf:dissipativity2} and \cite{conf:formulas,conf:annerobustcontrol,conf:berberich2020combining,conf:controlslemma} for data-driven analysis and data-driven controller design, respectively. In these works, the data is used to characterize all models that are consistent with the data. This characterization enables a computational approach for direct systems analysis and design without explicitly identifying a model. 


\footnotetext{\href{https://github.com/aalanwar/Data-Driven-Reachability-Analysis}{https://github.com/aalanwar/Data-Driven-Reachability-Analysis}}

The main contributions of this paper are as follows:
\begin{itemize}
    \item An algorithm is proposed to compute the reachable sets of an unknown control system from noise-corrupted input-state measurements using matrix zonotopes (Algorithm~\ref{alg:LTIreach}). The resultant reachable sets are shown to over-approximate the model-based reachable sets for linear time-invariant (LTI) systems (Theorem~\ref{th:reach_lin}). 
    \item A new set representation named constrained matrix zonotope (Definition~\ref{def:conmatzonotopes}) and its essential operations are proposed.
    \item The constrained matrix zonotope is exploited in Algorithm~\ref{alg:LTIConstrainedReachability} by providing less conservative reachable sets using the exact noise description. The computed reachable sets over-approximate the model-based reachable sets for LTI systems (Theorem~\ref{th:reach_lin_cmz}).
    \item A general framework is introduced for incorporating side information like states decoupling, partial model knowledge, or given bounds on certain entries in the system matrices about the unknown model (Algorithm~\ref{alg:LTISideInfoReachability}) into the reachability analysis, which further decreases the conservatism of the resulting reachable sets. The resultant reachable sets over-approximate the model-based reachable set for LTI systems (Theorem~\ref{th:sideinfo}).
    \item An algorithm is proposed for reachability analysis in the existence of process and measurement noise for LTI systems (Algorithm~\ref{alg:LTIMeasReachability}).
    \item An algorithm (Algorithm~\ref{alg:PolyReachability}) is proposed for computing the reachable sets that are guaranteed to over-approximate the exact reachable set for polynomial systems (Theorem~\ref{th:reach_lin_poly}). A variant of the LTI side information framework can be used for the polynomial systems.
    \item An algorithm is proposed for computing the reachable set  (Algorithm~\ref{alg:LipReachability}), which results in reachable sets that are guaranteed to over-approximate the exact reachable sets of nonlinear systems under a Lipschitz continuity assumption (Theorem~\ref{th:reachdisnonlin}).
    \item A real experiment along with a comparison between the proposed algorithms and the alternative direction of system identification, synthesizing a reachset
conformant model \cite{Kochdumper2020}, and model-based reachability analysis is provided.
\end{itemize} 

As discussed in more detail before, there have been a few approaches to infer the reachable sets directly from data, mostly without providing guarantees in the case of noisy data. An alternative approach is to apply well-known system identification approaches and consecutively do model-based reachability analysis. Thus, we include a comparison to a standard system identification method in Section~\ref{sec:eval} while considering $2\sigma$ uncertainty bound in the model-based reachability analysis. 
Interesting recent results on system identification with probabilistic guarantees from finite noisy data include concentration bounds \cite{conf:Matni2019}. Another very related approach is set membership estimation (see, e.g., \cite{conf:Milanese1991}), where the trade-off between conservatism and computational expenses is usually of central importance, which we will also encounter in the course of this paper.

This paper is an extension to our previous work in \cite{conf:ourL4dc}, where we introduced the basic idea of computing the reachable set by using matrix zonotopes over-approximating the set of models consistent with the data. In this work, we significantly extend and improve the ideas in \cite{conf:ourL4dc} by introducing the constrained matrix zonotope and its essential set of operations which allow the incorporation of side information about the unknown model and provide less conservative reachable sets. We also enhance the proposed nonlinear Lipschitz reachability analysis method in \cite{conf:ourL4dc}. Furthermore, we propose a new approach for computing the reachable sets of polynomial systems, and we consider measurement noise for linear systems. Our initial ideas in \cite{conf:ourL4dc} have been utilized in different applications like data-driven predictive control \cite{conf:zpc} and set-based estimation \cite{conf:setbaseddatareach}. 


The rest of the paper is organized as follows: the problem statement and preliminaries are defined in Section~\ref{sec:pb}. The new set representation (constrained matrix zonotope) is proposed in Section~\ref{sec:cmz}. Data-driven reachability analysis for LTI systems is proposed in Section~\ref{sec:reachlineardis}, including a framework to include prior knowledge into the analysis approach. Then, we extend the proposed approach to nonlinear systems in Section~\ref{sec:reachnonlinear}. The introduced approaches are applied to multiple numerical examples and experiments in Section~\ref{sec:eval}, and Section~\ref{sec:con} concludes the paper.

\section{Problem Statement and Preliminaries} \label{sec:pb}

We start by defining some set representations that are used in the reachability analysis. 

\subsection{Set Representations}
We define the following sets:

\begin{definition}(\textbf{Zonotope} \cite{conf:zono1998}) \label{def:zonotopes} 
Given a center $c_{\mathcal{Z}} \in \mathbb{R}^{n_x}$ and $\gamma_{\mathcal{Z}} \in \mathbb{N}$ generator vectors in a generator matrix $G_{\mathcal{Z}}=\begin{bmatrix} g_{\mathcal{Z}}^{(1)}& \dots &g_{\mathcal{Z}}^{(\gamma_{\mathcal{Z}})}\end{bmatrix} \in \mathbb{R}^{n_x \times \gamma_{\mathcal{Z}}}$, a zonotope is defined as
\begin{equation}
	\mathcal{Z} = \Big\{ x \in \mathbb{R}^{n_x} \; \Big| \; x = c_{\mathcal{Z}} + \sum_{i=1}^{\gamma_{\mathcal{Z}}} \beta^{(i)} \, g^{(i)}_{\mathcal{Z}} \, ,
	-1 \leq \beta^{(i)} \leq 1 \Big\} \; .
\end{equation}
We use the shorthand notation $\mathcal{Z} = \zono{c_{\mathcal{Z}},G_{\mathcal{Z}}}$ for a zonotope. 
\end{definition}
Let $L \in \mathbb{R}^{m \times n_x}$ be a linear map. Then $L\mathcal{Z}= \zono{Lc_{\mathcal{Z}},LG_{\mathcal{Z}}}$ \cite[p.18]{conf:thesisalthoff}.
Given two zonotopes $\mathcal{Z}_1=\langle c_{\mathcal{Z}_1},G_{\mathcal{Z}_1} \rangle$ and $\mathcal{Z}_2=\langle c_{\mathcal{Z}_2},G_{\mathcal{Z}_2} \rangle$, the Minkowski sum $\mathcal{Z}_1 \oplus \mathcal{Z}_2 = \{z_1 + z_2| z_1\in \mathcal{Z}_1, z_2 \in \mathcal{Z}_2 \}$ can be computed exactly as follows \cite{conf:zono1998}: 
\begin{equation}
     \mathcal{Z}_1 \oplus \mathcal{Z}_2 = \Big\langle c_{\mathcal{Z}_1} + c_{\mathcal{Z}_2}, [G_{\mathcal{Z}_1}, G_{\mathcal{Z}_2} ]\Big\rangle.
     \label{eq:minkowski}
\end{equation}
For simplicity, we use the notation $+$ instead of $\oplus$ to denote the Minkowski sum as the type can be determined from the context. Similarly, we use  $\mathcal{Z}_1 - \mathcal{Z}_2$ to denote $\mathcal{Z}_1 + -1 \mathcal{Z}_2$, not the Minkowski difference. We define and compute the Cartesian product of two zonotopes $\mathcal{Z}_1 $ and $\mathcal{Z}_2$ by 
\begin{align}\label{eq:cart}
\mathcal{Z}_1 \times \mathcal{Z}_2 &= \bigg\{ \begin{bmatrix}z_1 \\ z_2\end{bmatrix} \bigg| z_1 \in \mathcal{Z}_1, z_2 \in \mathcal{Z}_2 \bigg\} \nonumber\\
&= \Bigg\langle \begin{bmatrix} c_{\mathcal{Z}_1} \\ c_{\mathcal{Z}_2} \end{bmatrix}, \begin{bmatrix} G_{\mathcal{Z}_1} & 0 \\ 0 & G_{\mathcal{Z}_2}\end{bmatrix} \Bigg\rangle.
\end{align}

\begin{definition}\label{def:matzonotopes}(\textbf{Matrix Zonotope} \cite[p.52]{conf:thesisalthoff})  
Given a center matrix $C_{\mathcal{M}} \in \mathbb{R}^{n_x \times p}$ and $\gamma_{\mathcal{M}} \in \mathbb{N}$ generator matrices $\tilde{G}_{\mathcal{M}}=\begin{bmatrix} G_{\mathcal{M}}^{(1)}&\dots&G_{\mathcal{M}}^{(\gamma_{\mathcal{M}})} \end{bmatrix} \in \mathbb{R}^{n_x \times (p \gamma_{\mathcal{M}})}$, a matrix zonotope is defined as
\begin{equation}
	\mathcal{M} {=} \Big\{ X \in \mathbb{R}^{n_x \times p} \; \Big| \; X {=} C_{\mathcal{M}} + \sum_{i=1}^{\gamma_{\mathcal{M}}} \beta^{(i)} \, G_{\mathcal{M}}^{(i)} \, ,
	-1 \leq \beta^{(i)} \leq 1 \Big\} .
\end{equation}
We use the shorthand notation $\mathcal{M} = \zono{C_{\mathcal{M}},\tilde{G}_{\mathcal{M}}}$ for a matrix zonotope. 
\end{definition}


Zonotopes have been extended in \cite{conf:const_zono} to represent arbitrary convex polytopes by applying constraints on the $\beta$ factors. 

\begin{definition}\label{df:contzono}
(\textbf{Constrained Zonotope} \cite[Prop. 1]{conf:const_zono}) An $n_x$-dimensional constrained zonotope is defined by
\begin{equation}\label{eq:conszono}
    \mathcal{C} = \setdef[x\in\mathbb{R}^{n_x}]{x=c_{\mathcal{C}}+G_{\mathcal{C}} \beta, \ A_{\mathcal{C}} \beta=b_{\mathcal{C}}, \, \norm{\beta}_\infty\leq 1}, 
\end{equation}
where $c_{\mathcal{C}} \in \R^{n_x}$ is the center, $G_{\mathcal{C}}$ $\in$ $\R^{n_x \times n_g}$ is the generator matrix and $A_{\mathcal{C}} \in $ $\R^{n_c \times n_g}$ and $b_{\mathcal{C}} \in \R^{n_c}$ denote the constraints. In short, we use the shorthand notation $\mathcal{C}= \zono{c_{\mathcal{C}},G_{\mathcal{C}},A_{\mathcal{C}},b_{\mathcal{C}}}$ for a constrained zonotope.
\end{definition}

The main advantage of constrained zonotopes compared to polyhedral sets is that constrained zonotopes inherit the excellent scaling properties of zonotopes for increasing state-space dimensions since they are also based on a generator representation for sets \cite{conf:cora}.


\subsection{Problem Statement}
We consider a discrete-time system
\begin{align}
\begin{split}
    x(k+1) &= f(x(k),u(k))+ w(k),\\
    y(k) &= x(k) + v(k).
\end{split}
    \label{eq:sysnonlingen}
\end{align}
where $f:\mathbb{R}^{n_x}\times\mathbb{R}^{n_u} \rightarrow \mathbb{R}^{n_x}$ a twice differentiable unknown function, $w(k) \in \mathcal{Z}_w \subset \mathbb{R}^{n_x}$ denotes the noise bounded by a noise zonotope $\mathcal{Z}_w$, ${u(k) \in \mathcal{U}_k \subset \mathbb{R}^{n_u}}$ the input bounded by an input zonotope $\mathcal{U}_k$, $y(k)  \in \mathbb{R}^{n_x}$ the measured state that is additionally corrupted by measurement noise ${v(k) \in \mathcal{Z}_v \subset \mathbb{R}^{n_x}}$ bounded by the measurement noise zonotope $\mathcal{Z}_v$, and ${x(0) \in \mathcal{X}_0 \subset \mathbb{R}^{n_x}}$ the initial state of the system bounded by the initial set $\mathcal{X}_0$. 

Reachability analysis computes the set of states $x(k)$ which can be reached given a set of uncertain initial states $\mathcal{X}_0$ and a set of uncertain inputs $\mathcal{U}_k$. More formally, it can be defined as follows:
\begin{definition} (\textbf{Exact Reachable Set})
The exact reachable set $\mathcal{R}_{N}$ after $N$ time steps subject to inputs ${u(k) \in \mathcal{U}_k}$, $\forall k {=}\{ 0, \dots, N-1\}$, and noise $w( \cdot) \in \mathcal{Z}_w$, is the set of all states trajectories starting from initial set $\mathcal{X}_0$ after $N$ steps: 
\begin{align} \label{eq:R}
        \mathcal{R}_{N} = \big\{& x(N) \in \mathbb{R}^{n_x} \, \big| x(k{+}1) = f(x(k),u(k)) + w(k), \nonumber\\
       & \, x(0) \in \mathcal{X}_0,
        u(k) \in \mathcal{U}_k, w(k) \in \mathcal{Z}_w: \nonumber\\ & \forall k \in \{0,...,N{-}1\}\big\}.
\end{align}
\end{definition}
We aim to compute an over-approximation of the exact reachable sets when the model of the system in \eqref{eq:sysnonlingen} is unknown, but input and noisy state trajectories are available. More specifically, we aim to compute data-driven reachable sets in the following cases:
\begin{enumerate}
    \item LTI systems in Subsection~\ref{sec:reachlin_processnoise}:
\begin{equation*}
        x(k+1) = A_{\text{tr}} x(k) + B_{\text{tr}} u(k)+  w(k).
\end{equation*}
 where $\begin{bmatrix} A_{\text{tr}} & B_{\text{tr}} \end{bmatrix}$ denotes the true system model. 
    \item LTI systems given additional side information about the unknown model in Subsection~\ref{sec:sideinfo}. 
    \item LTI systems with measurement noise in Subsection~\ref{sec:measnoise}: 
    \begin{align*}
\begin{split}
    x(k+1) &= A_{\text{tr}} x(k) + B_{\text{tr}} u(k) + w(k),\\
     y(k) &= x(k) + v(k).
    \end{split}
\end{align*}
    \item Polynomial systems in Subsection~\ref{sec:poly}.
    \item Lipschitz nonlinear systems in Subsection~\ref{sec:libs}.
\end{enumerate}

Instead of having access to a mathematical model of the system, we consider $K$ input-state trajectories of different lengths $T_i+1$, denoted by $\{u^{(i)}(k)\}_{k=0}^{T_i - 1}$, and $\{x^{(i)}(k)\}_{k=0}^{T_i}$, $i=1, \dots, K$. We collect the set of all data sequences in the following matrices
\begin{align*}
     X &=  \begin{bmatrix} x^{(1)}(0) \dots  x^{(1)}(T_1)  \dots  x^{(K)}(0) \dots  x^{(K)}(T_K)\end{bmatrix}.
 \end{align*}
Let us further denote the shifted signals
 \begin{align*}
     X_+ &= \begin{bmatrix} x^{(1)}(1)\dots  x^{(1)}(T_1) \dots  x^{(K)}(1)  \dots  x^{(K)}(T_K) \end{bmatrix}, \nonumber\\
     X_- &= \begin{bmatrix} x^{(1)}(0) \dots  x^{(1)}(T_1\!-\!1)  \dots  x^{(K)}(0)  \dots  x^{(K)}(T_K\!-\!1) \end{bmatrix},\\
    U_- &= \begin{bmatrix} u^{(1)}(0)  \dots  u^{(1)}(T_1\!-\!1) \dots u^{(K)}(0)  \dots  u^{(K)}(T_K\!-\!1) \end{bmatrix}.
 \end{align*}
The total amount of data points from all available shifted signals is denoted by $T = \sum_{i=1}^{K} T_i$, and we denote the set of all available data by $D=(U_-,X)$. Note that when dealing with measurement noise, we will consider the trajectories $\{y^{(i)}(k)\}_{k=0}^{T_i}$ instead of $\{x^{(i)}(k)\}_{k=0}^{T_i}$. 

\subsection{Noise Zonotope and Notations}
We denote the unknown process noise in state trajectory $i$ by $\hat{w}^{(i)}(\cdot)$. It follows directly that the stacked matrix of the noise $\hat{w}^{(i)}(k)$ in the collected data:
\begin{align*}
    \hat{W}_- = \begin{bmatrix} 
    \hat{w}^{(1)}(0) \dots  \hat{w}^{(1)}(T_1\!-\!1) \dots  \hat{w}^{(K)}(0)  \dots \hat{w}^{(K)}(T_K\!-\!1)\end{bmatrix}
\end{align*}
is an element of the set $\mathcal{M}_w$ where $\mathcal{M}_w =\zono{ C_{\mathcal{M}_w},\tilde{G}_{\mathcal{M}_w} }$, with 
\begin{align}
\tilde{G}_{\mathcal{M}_w}=\begin{bmatrix} G_{\mathcal{M}_w}^{(1)}&\dots&G_{\mathcal{M}_w}^{(\gamma_{\mathcal{M}_w})}\end{bmatrix}. \label{eq:Gtildemw}
\end{align}
Note that $\mathcal{M}_w$ is the matrix zonotope resulting from the concatenation of multiple noise zonotopes $\mathcal{Z}_w=\zono{c_{\mathcal{Z}_w},\begin{bmatrix} g_{\mathcal{Z}_w}^{(1)}& \dots & g_{\mathcal{Z}_w}^{(\gamma_{\mathcal{Z}_w})}\end{bmatrix}}$ as follows:
\begin{align}
    C_{\mathcal{M}_w} &= \begin{bmatrix}c_{\mathcal{Z}_w} & \dots & c_{\mathcal{Z}_w}\end{bmatrix}, \label{eq:Cmw} \\
    \begin{split}
    \label{eq:Gmw}
    G^{(1+(i-1)T)}_{\mathcal{M}_w} &= \begin{bmatrix} g_{\mathcal{Z}_w}^{(i)} & 0_{n \times  (T-1)}\end{bmatrix}, \\
    G^{(j+(i-1)T)}_{\mathcal{M}_w} &= \begin{bmatrix} 0_{n \times  (j-1)} &g_{\mathcal{Z}_w}^{(i)}  & 0_{n \times  (T-j)}\end{bmatrix}, \\
    G^{(T+(i-1)T)}_{\mathcal{M}_w} &= \begin{bmatrix} 0_{n \times (T-1)}& g_{\mathcal{Z}_w}^{(i)}\end{bmatrix}.
    \end{split}
\end{align}
$\forall i =\{1, \dots, \gamma_{\mathcal{Z}_w}\}$, $j=\{2,\dots,T-1\}$. The set of real and natural numbers are denoted as $\mathbb{R}$ and $\mathbb{N}$, respectively, and $\mathbb{N}_0 = \mathbb{N}\cup \{0\}$. The transpose and Moore-Penrose pseudoinverse of a matrix $X$ are denoted as $X^\t$ and $X^\dagger$, respectively. We denote the Kronecker product by $\otimes$. We denote the element at row $i$ and column $j$ of matrix $A$ by $(A)_{i,j}$ and column $j$ of $A$ by $(A)_{.,j}$. 
For a list or vector of elements, we denote the element $i$ of vector or list $a$ by $a^{(i)}$. 
For a given matrix $A$, $A^0_{(i,j)}$ denotes a matrix of same size as A with zero entries everywhere except for the value $(A)_{i,j}$ at row i and column j. The vectorization of a matrix $A$ is defined by $\text{vec}(A)$. The element-wise multiplication of two matrices is denoted by $\odot$. 
We denote the over-approximation of a reachable set $\mathcal{R}_k$ by an interval by $\text{int}(\mathcal{R}_k)$. We define also for $N$ time steps
\begin{align}
    \mathcal{F} = \cup_{k=0}^{N} (\mathcal{R}_k \times \mathcal{U}_k).
    \label{eq:F}
\end{align}
Finally, we denote all system matrices $\begin{bmatrix} A & B \end{bmatrix}$ that are consistent with the data $D=(U_-,X)$ by $\mathcal{N}_{\Sigma}$:
\begin{align}
    \mathcal{N}_{\Sigma} = \{ \begin{bmatrix} A & B \end{bmatrix} | \; X_+ = A X_- + B U_- + W_-, W_- \in \mathcal{M}_w \}. \label{eq:Nsig}
\end{align}
By assumption, $\begin{bmatrix} A_{\text{tr}} & B_{\text{tr}} \end{bmatrix} \in \mathcal{N}_{\Sigma}$. 




\section{Constrained Matrix Zonotope}
\label{sec:cmz}

\begin{table}[tbp]
\caption{Complexity comparison between sets with $\gamma_{\mathcal{Z}}=n_x$ and $L \in \mathbb{R}^{m \times n_x}$ \cite{conf:annualreviewAlthoff}.}
\label{tab:complexity}
  \resizebox{0.49\textwidth}{!}{%
\centering
\normalsize
\begin{tabular}{l  c c c}
\toprule
 & Polytope (V-rep.) & Polytope (H-rep.) & Zonotope\\
\midrule
Linear Map $L$ & $\mathcal{O}(mn_x2^{n_x})$  &  $\mathcal{O}(n_x^3)$ &   $\mathcal{O}(mn_x^2)$\\
Minkowski Sum &  $\mathcal{O}(n_x2^{n_x})$  & $\mathcal{O}(2^{n_x})$  & $\mathcal{O}(n_x)$ \\
Cartesian Product & $\mathcal{O}(1)$  & $\mathcal{O}(1)$  & $\mathcal{O}(1)$ \\
\bottomrule
\end{tabular}
}
\vspace{-4mm}
\end{table}

Zonotopes have advantages over polytopes in both V-representation and H-representations as summarized in Table~\ref{tab:complexity}. It is beneficial to extend the notion of matrix zonotopes to constrained matrix zonotopes as follows.

\begin{definition}\textbf{(Constrained Matrix Zonotope)} \label{def:conmatzonotopes}  
Given a center matrix $C_{\mathcal{N}} \in \mathbb{R}^{n_x \times p}$ and a number $\gamma_{\mathcal{N}} \in \mathbb{N}$ of generator matrices $\tilde{G}_{\mathcal{N}}=[G_{\mathcal{N}}^{(1)}\, \dots \,G_{\mathcal{N}}^{(\gamma_{\mathcal{N}})}] \in \mathbb{R}^{n_x \times (p  \gamma_{\mathcal{N}}) }$, as well as matrices $\tilde{A}_{\mathcal{N}}=[A_{\mathcal{N}}^{(1)}\, \dots \,A_{\mathcal{N}}^{(\gamma_{\mathcal{N}})}] \in \mathbb{R}^{n_c \times (n_a \gamma_{\mathcal{N}}) }$ and $B_{\mathcal{N}} \in \mathbb{R}^{n_c \times n_a}$ constraining the factors $\beta^{(1:\gamma_{\mathcal{N}})}$, a constrained matrix zonotope is defined by 
\begin{align*}
	\mathcal{N} = \Big\{ X \in \mathbb{R}^{n_x \times p} \; \Big|& \; X = C_{\mathcal{N}} + \sum_{i=1}^{\gamma_{\mathcal{N}}} \beta^{(i)} \, G_{\mathcal{N}}^{(i)}, \, \nonumber\\
	& \sum_{i=1}^{\gamma_{\mathcal{N}}} \beta^{(i)} A_{\mathcal{N}}^{(i)} = B_{\mathcal{N}} \,,  -1 \leq \beta^{(i)} \leq 1 \Big\} \; .
\end{align*}
Furthermore, we define the shorthand notation ${\mathcal{N} {=} \zono{C_{\mathcal{N}},\tilde{G}_{\mathcal{N}},\tilde{A}_{\mathcal{N}},B_{\mathcal{N}}}}$ for a constrained matrix zonotope. 
\end{definition}

The constrained matrix zonotopes are closed under Minkowski sum and multiplication by a scalar which can be computed as follows:

\begin{proposition}
\label{prob:minsum}
For every $\mathcal{N}_1 {=} \zono{C_{\mathcal{N}_1},\tilde{G}_{\mathcal{N}_1},\tilde{A}_{\mathcal{N}_1},B_{\mathcal{N}_1}} \subset \mathbb{R}^{n_x \times p}$, ${\mathcal{N}_2 {=} \zono{C_{\mathcal{N}_2},\tilde{G}_{\mathcal{N}_2},\tilde{A}_{\mathcal{N}_2},B_{\mathcal{N}_2}} \, {\subset} \, \mathbb{R}^{n_x \times p}}$, and $R \,{\in} \mathbb{R}^{k \times n_x}$ the following identities hold
\begin{align}
    R\mathcal{N}_1 &= \zono{RC_{\mathcal{N}_1},R\tilde{G}_{\mathcal{N}_1},\tilde{A}_{\mathcal{N}_1},B_{\mathcal{N}_1}}, \label{eq:RN} \\
    \mathcal{N}_1 {+} \mathcal{N}_2 &{=}\Bigg\langle C_{\mathcal{N}_1} + C_{\mathcal{N}_2},[\tilde{G}_{\mathcal{N}_1}, \tilde{G}_{\mathcal{N}_2} ], \tilde{A}_{\mathcal{N}_{12}}, \begin{bmatrix} B_{\mathcal{N}_1} & 0 \\ 
    0 & B_{\mathcal{N}_2}\end{bmatrix}  \Bigg\rangle,\label{eq:N1pN2} 
\end{align}
where
\begin{align*}
\tilde{A}_{\mathcal{N}_{12}}\!{=}\!\begin{bmatrix}  \begin{bmatrix} A^{(1)}_{\mathcal{N}_1} &\!\! 0 \\  0 &\!\! 0\end{bmatrix}\!\dots\! \begin{bmatrix} A^{(\gamma_{\mathcal{N}_1})}_{\mathcal{N}_1}\! &\! 0 \\  0 \!&\! 0\end{bmatrix}\!\!\begin{bmatrix} 0 & 0 \\  0 & A^{(1)}_{\mathcal{N}_2}\end{bmatrix}\!\dots\!\begin{bmatrix} 0 \!&\! 0 \\  0 \!&\! A^{(\gamma_{\mathcal{N}_2})}_{\mathcal{N}_2}\end{bmatrix} \end{bmatrix}.
\end{align*}
\end{proposition}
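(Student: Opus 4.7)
My plan is to prove both identities in Proposition \ref{prob:minsum} by mutual inclusion, working directly from the algebraic characterisation in Definition \ref{def:conmatzonotopes}.

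For the linear map identity \eqref{eq:RN}, the argument is essentially a one-line substitution. Any $X\in\mathcal{N}_1$ has the form $X = C_{\mathcal{N}_1} + \sum_{i=1}^{\gamma_{\mathcal{N}_1}} \beta^{(i)} G_{\mathcal{N}_1}^{(i)}$ with $\sum_i \beta^{(i)} A_{\mathcal{N}_1}^{(i)} = B_{\mathcal{N}_1}$ and $|\beta^{(i)}|\le 1$, so multiplying on the left by $R$ gives $RX = RC_{\mathcal{N}_1} + \sum_i \beta^{(i)}(R G_{\mathcal{N}_1}^{(i)})$, which exhibits $RX$ as an element of the right-hand side, using the very same factor constraints $\tilde{A}_{\mathcal{N}_1}$ and $B_{\mathcal{N}_1}$. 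Reading the derivation in reverse gives the opposite inclusion.

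For the Minkowski sum identity \eqref{eq:N1pN2}, I would handle the two inclusions separately. For the forward inclusion, pick $X_1\in\mathcal{N}_1$ and $X_2\in\mathcal{N}_2$ with factor vectors $\beta_1=(\beta^{(i)}_1)_{i=1}^{\gamma_{\mathcal{N}_1}}$ and $\beta_2=(\beta^{(j)}_2)_{j=1}^{\gamma_{\mathcal{N}_2}}$, each satisfying the box bound and the original constraint of its own zonotope. Concatenate them into a single factor vector $\beta=(\beta_1,\beta_2)$ of length $\gamma_{\mathcal{N}_1}+\gamma_{\mathcal{N}_2}$. Applied to the generator matrix $[\tilde{G}_{\mathcal{N}_1},\tilde{G}_{\mathcal{N}_2}]$ this reproduces $X_1+X_2-(C_{\mathcal{N}_1}+C_{\mathcal{N}_2})$ by construction, so it remains to verify the new matrix constraint. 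Because each block appearing in $\tilde{A}_{\mathcal{N}_{12}}$ places $A_{\mathcal{N}_1}^{(i)}$ in the top-left or $A_{\mathcal{N}_2}^{(j)}$ in the bottom-right, with zeros elsewhere, the weighted sum is automatically block-diagonal; its two diagonal blocks evaluate to $\sum_i \beta^{(i)}_1 A^{(i)}_{\mathcal{N}_1} = B_{\mathcal{N}_1}$ and $\sum_j \beta^{(j)}_2 A^{(j)}_{\mathcal{N}_2} = B_{\mathcal{N}_2}$, matching the stated right-hand side $\operatorname{blkdiag}(B_{\mathcal{N}_1},B_{\mathcal{N}_2})$ block by block.

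For the reverse inclusion, I would take any element of the right-hand side and split its factor vector into the first $\gamma_{\mathcal{N}_1}$ entries $\beta_1$ and the subsequent $\gamma_{\mathcal{N}_2}$ entries $\beta_2$. Reading the four blocks of the single matrix constraint equation separately yields $\sum_i \beta^{(i)}_1 A^{(i)}_{\mathcal{N}_1} = B_{\mathcal{N}_1}$ from the top-left block, $\sum_j \beta^{(j)}_2 A^{(j)}_{\mathcal{N}_2} = B_{\mathcal{N}_2}$ from the bottom-right block, and trivial identities $0=0$ on the off-diagonals. Hence the natural pieces $X_1 = C_{\mathcal{N}_1} + \sum_i \beta^{(i)}_1 G^{(i)}_{\mathcal{N}_1}$ and $X_2 = C_{\mathcal{N}_2} + \sum_j \beta^{(j)}_2 G^{(j)}_{\mathcal{N}_2}$ lie in $\mathcal{N}_1$ and $\mathcal{N}_2$ respectively, and their sum equals the chosen element. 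The only real obstacle is bookkeeping: the dimensions $n_c$ and $n_a$ may differ between $\mathcal{N}_1$ and $\mathcal{N}_2$, so the zero padding in $\tilde{A}_{\mathcal{N}_{12}}$ and in the block-diagonal right-hand side must be sized compatibly; once the dimensions are written out, the verification reduces to the decoupling observation above, and no machinery beyond Definition \ref{def:conmatzonotopes} is required.
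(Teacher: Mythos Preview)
Your proposal is correct and follows essentially the same approach as the paper: both identities are shown by mutual inclusion directly from Definition~\ref{def:conmatzonotopes}, with the Minkowski sum handled by concatenating/splitting the factor vectors and checking that the block-diagonal constraint decouples into the two original constraints. Your additional remark on sizing the zero padding when $n_c$ and $n_a$ differ between $\mathcal{N}_1$ and $\mathcal{N}_2$ is a useful point of care that the paper leaves implicit.
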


 A proof of Proposition~\ref{prob:minsum} is provided in the Appendix for completeness. During the propagation of the reachable sets, we additionally need to multiply the constrained matrix zonotope by a zonotope or constrained zonotope. The result of both operations can be over-approximated by a constrained zonotope. We provide these operations in the following proposition in which we multiply centers and generators of the constrained zonotope by the center and generators of the constrained matrix zonotope. Instead of finding the factor range $\beta$ that satisfies the constraints of the constrained matrix zonotope and constrained zonotope, we scale down the generators that result from the multiplications by a scalar $\bar{d}$.  
 \begin{proposition}
  \label{prop:cmzconstzonotope}
For every $\mathcal{N} = \zono{C_{\mathcal{N}},\tilde{G}_{\mathcal{N}},\tilde{A}_{\mathcal{N}},B_{\mathcal{N}}} \subset \mathbb{R}^{p \times n_x}$, and $\mathcal{C}= \zono{c_{\mathcal{C}},G_{\mathcal{C}},A_{\mathcal{C}},b_{\mathcal{C}}} \subset \mathbb{R}^{n_x}$ the following identity holds
\begin{align}
    \mathcal{N} \mathcal{C} \subseteq & \bigzono{C_{\mathcal{N}} c_{\mathcal{C}},\begin{bmatrix} \tilde{G}_{\mathcal{N}}c_{\mathcal{C}} & C_{\mathcal{N}} G_{\mathcal{C}}& G_{f} \end{bmatrix} , \nonumber\\
    &\begin{bmatrix} A_{\mathcal{N}\mathcal{C}}& 0& 0 \\
     0 & A_{\mathcal{C}}&0
   \end{bmatrix},\begin{bmatrix} \text{vec}(B_{\mathcal{N}}) \\  b_{\mathcal{C}}\end{bmatrix}  },  \label{eq:matczono}
\end{align}
where
\begin{align}
A_{\mathcal{N}\mathcal{C}}&=\begin{bmatrix} \text{vec}(A^{(1)}_{\mathcal{N}}) & \dots & \text{vec}(A_{\mathcal{N}}^{(\gamma_{\mathcal{N}})}) \end{bmatrix},\nonumber\\
G_{f} &= \begin{bmatrix} g_{f}^{(1)} & \dots & g_{f}^{(\gamma_{\mathcal{C}}\gamma_{\mathcal{N}})} \end{bmatrix}, \nonumber\\
g_{f}^{(k)} &= \bar{d}^{(k)} G^{(i)}_{\mathcal{N}} g^{(j)}_{\mathcal{C}}, \exists\ k \  \forall i =\{1, \dots, \gamma_{\mathcal{N}}\}, \; \text{and} \nonumber\\
&j = \{1,\dots,\gamma_{\mathcal{C}} \} \ \text{such that} \  k = \{1,\dots, \gamma_{\mathcal{C}}\gamma_{\mathcal{N}}\}, \label{eq:fGg2}\\
\bar{d}^{(k)} &{=} \max (\abs{\beta_{L,\mathcal{N}}^{(i)}\beta_{L,\mathcal{C}}^{(j)}},\abs{\beta_{L,\mathcal{N}}^{(i)}\beta_{U,\mathcal{C}}^{(j)}},\abs{\beta_{U,\mathcal{N}}^{(i)}\beta_{L,\mathcal{C}}^{(j)}},\abs{\beta_{U,\mathcal{N}}^{(i)}\beta_{U,\mathcal{C}}^{(j)}}), \label{eq:fmax2} \\
\beta_{L,\mathcal{N}}^{(i)} &= \min_{\beta^{(i)}} \beta^{(i)}:\  \sum_{j=1}^{\gamma_{\mathcal{N}}} \beta^{(j)} A_{\mathcal{N}}^{(j)} = B_{\mathcal{N}},\  \norm{\beta}_\infty \leq 1, \label{eq:betaLZono2}\\
\beta_{U,\mathcal{N}}^{(i)} &= \max_{\beta^{(i)}} \beta^{(i)}:\  \sum_{j=1}^{\gamma_{\mathcal{N}}} \beta^{(j)} A_{\mathcal{N}}^{(j)} = B_{\mathcal{N}},\  \norm{\beta}_\infty \leq 1, \label{eq:betaUZono2} \\
\beta_{L,\mathcal{C}}^{(j)} &= \min_{\beta^{(j)}} \beta^{(j)}:\  A_{\mathcal{C}} \beta=b_{\mathcal{C}},\  \norm{\beta}_\infty \leq 1, \label{eq:betaLCZono2}\\
\beta_{U,\mathcal{C}}^{(j)} &= \max_{\beta^{(j)}} \beta^{(j)}:\  A_{\mathcal{C}} \beta=b_{\mathcal{C}},\  \norm{\beta}_\infty \leq 1. \label{eq:betaUCZono2}
\end{align}
\end{proposition}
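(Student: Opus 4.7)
The plan is to mimic the proof of Proposition \ref{prop:cmzzonotope}, but to carry along the additional equality constraints coming from $\mathcal{C}$ in parallel with those coming from $\mathcal{N}$, and to over-approximate the cross-product terms $\beta^{(i)}\alpha^{(j)}$ by fresh factors scaled by $f^{(k)}$. Concretely, I start with an arbitrary $y\in\mathcal{N}\mathcal{C}$, write $y=Xz$ with $X=C_{\mathcal{N}}+\sum_{i=1}^{\gamma_{\mathcal{N}}}\beta^{(i)}G_{\mathcal{N}}^{(i)}$ satisfying $\sum_{i}\beta^{(i)}A_{\mathcal{N}}^{(i)}=B_{\mathcal{N}}$, $\|\beta\|_\infty\le1$, and $z=c_{\mathcal{C}}+G_{\mathcal{C}}\alpha$ satisfying $A_{\mathcal{C}}\alpha=b_{\mathcal{C}}$, $\|\alpha\|_\infty\le1$. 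Expanding the product gives
\begin{equation*}
y = C_{\mathcal{N}}c_{\mathcal{C}} + \sum_{i=1}^{\gamma_{\mathcal{N}}}\beta^{(i)}G_{\mathcal{N}}^{(i)}c_{\mathcal{C}} + C_{\mathcal{N}}G_{\mathcal{C}}\alpha + \sum_{i=1}^{\gamma_{\mathcal{N}}}\sum_{j=1}^{\gamma_{\mathcal{C}}}\beta^{(i)}\alpha^{(j)}\,G_{\mathcal{N}}^{(i)}g_{\mathcal{C}}^{(j)}.
\end{equation*}
The first three groups of terms slot directly into the center and into the blocks $\tilde{G}_{\mathcal{N}}c_{\mathcal{C}}$ and $C_{\mathcal{N}}G_{\mathcal{C}}$ of the claimed generator matrix, inheriting the factors $\beta$ and $\alpha$ together with their equality constraints (after vectorizing the matrix equality as in the proof of Proposition \ref{prop:cmzzonotope}: $\sum_i\beta^{(i)}A_{\mathcal{N}}^{(i)}=B_{\mathcal{N}}$ becomes $A_{\mathcal{N}\mathcal{C}}\beta=\operatorname{vec}(B_{\mathcal{N}})$).

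The main obstacle is the bilinear cross term $\beta^{(i)}\alpha^{(j)}$, which need not lie in $[-1,1]$ and which also is not freely rangeable because of the two equality systems. Here I use exactly the device already present in Proposition \ref{prop:cmzzonotope}: for each pair $(i,j)$ I compute the feasible intervals $[\beta_{L,\mathcal{N}}^{(i)},\beta_{U,\mathcal{N}}^{(i)}]$ and $[\beta_{L,\mathcal{C}}^{(j)},\beta_{U,\mathcal{C}}^{(j)}]$ from the two linear programs \eqref{eq:betaLZono2}--\eqref{eq:betaUCZono2}, and I bound $|\beta^{(i)}\alpha^{(j)}|\le f^{(k)}$, where $f^{(k)}$ is the maximum of the four corner products in \eqref{eq:fmax2}. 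I can therefore write $\beta^{(i)}\alpha^{(j)}=f^{(k)}\eta^{(k)}$ for some $\eta^{(k)}\in[-1,1]$; this substitution is where the $\subset$ (rather than $=$) appears, because $\eta^{(k)}$ is now treated as a free, unconstrained factor decoupled from all other $\eta^{(k')}$ and from $\beta,\alpha$. The corresponding generator column becomes $g_f^{(k)} = f^{(k)} G_{\mathcal{N}}^{(i)} g_{\mathcal{C}}^{(j)}$, matching \eqref{eq:fGg2}.

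Finally, stacking the factors as $[\beta^\top,\alpha^\top,\eta^\top]^\top$ and the generators as $[\tilde{G}_{\mathcal{N}}c_{\mathcal{C}},\ C_{\mathcal{N}}G_{\mathcal{C}},\ G_f]$, I collect the two equality systems into the block-diagonal constraint matrix $\operatorname{diag}(A_{\mathcal{N}\mathcal{C}},A_{\mathcal{C}})$ with right-hand side $[\operatorname{vec}(B_{\mathcal{N}})^\top,b_{\mathcal{C}}^\top]^\top$, padded with zero columns in the positions corresponding to the new factors $\eta$ (which are deliberately unconstrained, to make the enclosure convex and computable). This yields the constrained zonotope on the right-hand side of \eqref{eq:matczono}, establishing $\mathcal{N}\mathcal{C}\subset$ that set. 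The only subtlety I would double-check is the indexing convention pairing $k$ with $(i,j)$, which is inherited verbatim from \eqref{eq:fGg}, and the fact that the inclusion is not an equality precisely because of the decoupling of the $\eta$ factors and the dropping of the product constraint $\eta^{(k)}=\beta^{(i)}\alpha^{(j)}/f^{(k)}$.
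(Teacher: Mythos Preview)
Your proposal is correct and follows essentially the same approach as the paper's own proof: expand the product $Xz$ into the center plus the three generator blocks, vectorize the matrix constraint on $\beta$ to obtain $A_{\mathcal{N}\mathcal{C}}\beta=\operatorname{vec}(B_{\mathcal{N}})$, keep $A_{\mathcal{C}}\alpha=b_{\mathcal{C}}$ for the second block, and over-approximate each bilinear term $\beta^{(i)}\alpha^{(j)}$ by a fresh unconstrained factor in $[-1,1]$ after scaling the corresponding generator by $f^{(k)}$ computed from the LP interval bounds. The paper carries out exactly these steps with the same decomposition and the same scaling argument; only the notation differs (the paper uses $\beta_{\mathcal{C}}$ where you use $\alpha$, and does not name the new factors $\eta$ explicitly).
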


A proof of Proposition~\ref{prop:cmzconstzonotope} is provided in the Appendix.

\section{Data-Driven Reachability for Linear Systems}\label{sec:reachlineardis}

We consider in this section LTI systems given (i) data corrupted by process noise, (ii) data corrupted by process noise while having additional prior information on the system matrices, and (iii) data corrupted by process noise and measurement noise. 
\subsection{Linear Systems with Process Noise}\label{sec:reachlin_processnoise}
Consider a discrete-time linear system
\begin{equation}
    \begin{split}
        x(k+1) &= A_{\text{tr}} x(k) + B_{\text{tr}} u(k)+  w(k),
    \end{split}
    \label{eq:sys}
\end{equation}
where $A_{\text{tr}} \in \mathbb{R}^{n_x \times n_x}$, and $B_{\text{tr}} \in \mathbb{R}^{n_x \times n_u}$. Due to the presence of noise, there generally exist multiple matrices $\begin{bmatrix}A & B \end{bmatrix}$ that are consistent with the data. 
To provide reachability analysis guarantees, we need to consider all models that are consistent with the data. Therefore, we are interested in computing a set $\mathcal{M}_\Sigma$ that contains all possible $\begin{bmatrix}A & B \end{bmatrix}$ that are consistent with the input-state measurements and the given noise bound and we guarantee that the true model $\begin{bmatrix} A_{\text{tr}} & B_{\text{tr}} \end{bmatrix}$ is inside the set of models $\mathcal{M}_\Sigma$. We build upon ideas from \cite{conf:dissipativity1} to our zonotopic noise descriptions, which yields a \textit{matrix zonotope} $\mathcal{M}_\Sigma \supseteq \mathcal{N}_{\Sigma}$ 
paving the way to a computationally simple reachability analysis.

%
%
\begin{lemma}
\label{lm:sigmaM}
Given input-state trajectories $D = (U_-,X)$ of the system in \eqref{eq:sys} such that $\begin{bmatrix} 
    X_-^\t & U_-^\t 
    \end{bmatrix}^\t$ has a full row rank, then the matrix zonotope 
\begin{align}
    \mathcal{M}_\Sigma = (X_{+} - \mathcal{M}_w) \begin{bmatrix} 
    X_- \\ U_- 
    \end{bmatrix}^{\dagger}
    \label{eq:zonoAB}
\end{align} 
 contains all matrices $\begin{bmatrix}A & B \end{bmatrix}$ that are consistent with the data $D = (U_-,X)$ and the noise bound, i.e., $\mathcal{M}_\Sigma \supseteq \mathcal{N}_{\Sigma}$. 
\end{lemma}
\begin{proof}
For any $\begin{bmatrix}A & B \end{bmatrix} \in \mathcal{N}_\Sigma$, 
we know that there exists a $W_- \in \mathcal{M}_w$ such that 
\begin{align}
    A X_- + B U_- = X_+ - W_-.
    \label{eq:pf1_1}
\end{align}
Every $W_- \in \mathcal{M}_w$ can be represented by a specific choice $\hat{\beta}^{(i)}_{\mathcal{M}_w}$, $-1 \leq \hat{\beta}^{(i)}_{\mathcal{M}_w} \leq 1$, $i=1,\dots,\gamma_{\mathcal{M}_w}$, that results in a matrix inside the matrix zonotope $\mathcal{M}_w$:
\begin{align*}
    W_- &= C_{\mathcal{M}_w} + \sum_{i=1}^{\gamma_{\mathcal{M}_w}} \hat{\beta}^{(i)}_{\mathcal{M}_w} G_{\mathcal{M}_w}^{(i)}.
\end{align*}
Multiplying by $\begin{bmatrix} X_- \\ U_- \end{bmatrix}^{\dagger}$ from the right to both sides in \eqref{eq:pf1_1} yields
\begin{align}
   \begin{bmatrix} A & B \end{bmatrix} = \left( X_{+} -  C_{\mathcal{M}_w} -\sum_{i=1}^{\gamma_{\mathcal{M}_w}} \hat{\beta}^{(i)}_{\mathcal{M}_w} G_{\mathcal{M}_w}^{(i)} \right) \begin{bmatrix} 
    X_- \\ U_- 
    \end{bmatrix}^{\dagger}. 
   \label{eq:pf2}
\end{align}
Hence, for all $\begin{bmatrix}A & B \end{bmatrix} \in \mathcal{N}_{\Sigma}$, 
there exists $\hat{\beta}^{(i)}_{\mathcal{M}_w}$, $-1~\leq~\hat{\beta}^{(i)}_{\mathcal{M}_w}~\leq~1$, $i=1,\dots,\gamma_{\mathcal{M}_w}$, such that \eqref{eq:pf2} holds. Therefore, for all $\begin{bmatrix}A & B \end{bmatrix} \in \mathcal{N}_{\Sigma}$, it also holds that $\begin{bmatrix} A & B \end{bmatrix} \in \mathcal{M}_\Sigma$ as defined in \eqref{eq:zonoAB}, which concludes the proof.
\end{proof}
%
%
%

\begin{remark}
The condition of having the full matrix row rank, i.e., $\mathrm{rank} \begin{bmatrix}     X_-^\t & U_-^\t 
\end{bmatrix}^\t = n_x + n_u$ in Lemma~\ref{lm:sigmaM} implies that there exists a right-inverse of the matrix $\begin{bmatrix} X_-^\t & U_-^\t \end{bmatrix}^\t$. This condition can be easily checked given the data. Note that for noise-free measurements, this rank condition can also be enforced by choosing the input persistently exciting of order $n_x+1$ if the system is controllable (compare to  \cite[Cor. 2]{conf:willems}).
\end{remark}


To guarantee an over-approximation of the reachable sets for the unknown system, we need to consider the union of reachable sets of all $\begin{bmatrix}A & B \end{bmatrix}$ that are consistent with the data. We apply the results of Lemma~\ref{lm:sigmaM} and do reachability analysis to all systems in the set $\mathcal{M}_\Sigma$. Let $\hat{\mathcal{R}}_{k}$ denotes the reachable set computed based on the noisy data using matrix zonotopes. We propose Algorithm~\ref{alg:LTIreach} to compute $\hat{\mathcal{R}}_{k}$ as an over-approximation of the exact reachable set $\mathcal{R}_{k}$. The set of models that is consistent with data is computed in line~\ref{ln:algLTIMsigma} which is then utilized in the recursion of computing the reachable set $\hat{\mathcal{R}}_{k+1}$ in line~\ref{ln:algLTIRhat}. The following theorem proves that $\hat{\mathcal{R}}_{k} \supseteq \mathcal{R}_{k}$.

\begin{algorithm}[t]
  \caption{LTI-Reachability}
  \label{alg:LTIreach}
  \textbf{Input}: input-state trajectories $D = (U_-,X)$, initial set $\mathcal{X}_{0}$, process noise zonotope $\mathcal{Z}_w$ and matrix zonotope $\mathcal{M}_w$, and input zonotope $\mathcal{U}_k$, $\forall k = 0, \dots,N-1$\\
  \textbf{Output}: reachable sets $\hat{\mathcal{R}}_{k}, \forall k = 1, \dots,N$
  \begin{algorithmic}[1]
  \State $\hat{\mathcal{R}}_{0} =\mathcal{X}_{0}$
  \State $\mathcal{M}_\Sigma = (X_{+} - \mathcal{M}_w) \begin{bmatrix} X_- \\ U_- \end{bmatrix}^\dagger$ \label{ln:algLTIMsigma}
  \For{$k = 0:N-1$}
  \State $\hat{\mathcal{R}}_{k+1} = \mathcal{M}_{\Sigma} (\hat{\mathcal{R}}_{k} \times \mathcal{U}_{k}  ) +  \mathcal{Z}_w$ \label{ln:algLTIRhat}
  \EndFor
  \end{algorithmic}
\end{algorithm}


\begin{theorem}
\label{th:reach_lin}
Given input-state trajectories $D = (U_-,X)$ of the system in \eqref{eq:sys} such that $\begin{bmatrix} 
    X_-^\t & U_-^\t 
    \end{bmatrix}^\t$ has a full row rank, then the reachable set computed in Algorithm~\ref{alg:LTIreach} over-approximates the exact reachable set, i.e., $\hat{\mathcal{R}}_{k} \supseteq \mathcal{R}_{k}$. 
\end{theorem}

\begin{proof} The reachable set computed based on the model can be found using
\begin{align}
\mathcal{R}_{k+1} &=\begin{bmatrix} A_{\text{tr}} & B_{\text{tr}} \end{bmatrix} (\mathcal{R}_{k} \times \mathcal{U}_{k} )+  \mathcal{Z}_w.
\end{align}
Since $\begin{bmatrix} A_{\text{tr}} & B_{\text{tr}} \end{bmatrix} {\in} \mathcal{M}_{\Sigma}$ according to Lemma~\ref{lm:sigmaM} and both $\mathcal{R}_k$ and $\hat{\mathcal{R}}_k$ start from the same initial set $\mathcal{X}_0$, it holds that ${\mathcal{R}_{k+1} \subseteq \hat{\mathcal{R}}_{k+1}}$.
\end{proof}

Lemma~\ref{lm:sigmaM} provides a matrix zonotope $\mathcal{M}_\Sigma$ which comprises all $\begin{bmatrix}A & B \end{bmatrix}$ that are consistent with the data and the noise bound. However, not all elements of the matrix zonotope $\mathcal{M}_\Sigma$ correspond to a system in \eqref{eq:sys} that can explain the data given the noise bound, i.e., $\mathcal{M}_\Sigma$ is in fact a superset of all $\begin{bmatrix}A & B \end{bmatrix}$ that are consistent with the data ($\mathcal{N}_{\Sigma} \subseteq \mathcal{M}_\Sigma$). As discussed in~\cite{conf:annerobustcontrol,conf:dissipativity1}, $X_+ - W_-$ might not be explainable by $AX_-~+~BU_-$ for all possible $W_- \in \mathcal{M}_w$. More precisely, there might not exists a solution $\begin{bmatrix}A & B \end{bmatrix}$ to the system of linear equations
\begin{align*}
    \begin{bmatrix}A & B \end{bmatrix} \begin{bmatrix} X_- \\ U_- \end{bmatrix} = X_+ - W_-
\end{align*}
for all $W_- \in \mathcal{M}_w$. An exact description for all systems consistent with the data and the noise bound would therefore be the set 
\begin{align}
    \mathcal{N}_{\Sigma} = (X_{+} - \mathcal{N}_{w}) \begin{bmatrix} 
    X_- \\ U_- 
    \end{bmatrix}^{\dagger}
    \label{eq:N_Sigma}
\end{align} 
with
\begin{align}
    \mathcal{N}_{w} = \Bigg\{ W_- \in \mathcal{M}_w \; \Bigg| \; (X_+ - W_-) \begin{bmatrix} X_- \\ U_- \end{bmatrix}^\perp = 0\Bigg\},
    \label{eq:N_w}
\end{align}
where $\begin{bmatrix} X_- \\ U_- \end{bmatrix}^\perp$ denotes a matrix containing a basis of the kernel of $\begin{bmatrix} 
    X_-^\t & U_-^\t 
    \end{bmatrix}^\t$. Representing $\mathcal{N}_{w}$ and $\mathcal{N}_{\Sigma}$ is not possible using state of the art zonotopes representations. Therefore, we propose the constrained matrix zonotope introduced in Section~\ref{sec:cmz} as a new set representation that can represent the sets $\mathcal{N}_{w}$ and thereby $\mathcal{N}_{\Sigma}$ 
to compute a less conservative reachable set $\bar{\mathcal{R}}_{k}$ at the cost of increasing the computational complexity in Algorithm~\ref{alg:LTIConstrainedReachability}. Due to adding constraints, $\bar{\mathcal{R}}_{k}$ is a constrained zonotope for $k>0$ different from $\hat{\mathcal{R}}_{k}$ which is a zonotope. We first compute the exact noise description $\mathcal{N}_w$ in line~\ref{ln:algconstANw} to line~\ref{ln:algconstNw}. Then, we compute the set of models $\mathcal{N}_{\Sigma}$ that is consistent with the exact noise description in line~\ref{ln:algconstNsigma} which is further utilized in the recursion of computing the reachable set $\bar{\mathcal{R}}_{k+1}$ in line~\ref{ln:algconstRbar}. The following theorem proves that $\mathcal{R}_{k} \subseteq  \bar{\mathcal{R}}_k$.

\begin{algorithm}[t]
  \caption{LTI-Constrained-Reachability}
  \label{alg:LTIConstrainedReachability}
  \textbf{Input}: input-state trajectories $D = (U_-,X)$, initial set $\mathcal{X}_{0}$, process noise zonotope $\mathcal{Z}_w$ and matrix zonotope $\mathcal{M}_w$, and input zonotope $\mathcal{U}_k$, $\forall k = 0, \dots,N-1$\\
  \textbf{Output}: reachable sets $\bar{\mathcal{R}}_{k}, \forall k = 1, \dots,N$  
  \begin{algorithmic}[1]
  \State $\bar{\mathcal{R}}_{0} =\mathcal{X}_{0}$
  \State $A^{(i)}_{\mathcal{N}_w} = G^{(i)}_{\mathcal{M}_w} \begin{bmatrix} X_- \\ U_- \end{bmatrix}^\perp,\quad  \forall i=\{1, \dots,\gamma_{\mathcal{Z}_w} T\}$ \label{ln:algconstANw}
  \State $\tilde{A}_{\mathcal{N}_w} =\begin{bmatrix} A_{\mathcal{N}_w}^{(1)}& \dots & A_{\mathcal{N}_w}^{(\gamma_{\mathcal{Z}_w}T)} \end{bmatrix}$
  \State $B_{\mathcal{N}_w} = (X_{+} - C_{\mathcal{M}_w})\begin{bmatrix} X_- \\ U_- \end{bmatrix}^\perp$
  \State $\mathcal{N}_w = \zono{C_{\mathcal{M}_w},\tilde{G}_{\mathcal{M}_w},\tilde{A}_{\mathcal{N}_w},B_{\mathcal{N}_w}}$ \label{ln:algconstNw}
  \State $\mathcal{N}_{\Sigma} = (X_{+} - \mathcal{N}_w) \begin{bmatrix} X_- \\ U_- \end{bmatrix}^\dagger$ \label{ln:algconstNsigma}
  \For{$k = 0:N-1$} 
  \State $\bar{\mathcal{R}}_{k+1} =\mathcal{N}_{\Sigma} (\bar{\mathcal{R}}_{k} \times \mathcal{U}_{k}  ) +  \mathcal{Z}_w$ \label{ln:algconstRbar}
  \EndFor
  \end{algorithmic}
\end{algorithm}

\begin{theorem}
\label{th:reach_lin_cmz}
Given input-state trajectories $D = (U_-,X)$ of the system in \eqref{eq:sys} such that $\begin{bmatrix} 
    X_-^\t & U_-^\t 
    \end{bmatrix}^\t$ has a full row rank, then the reachable set computed in Algorithm~\ref{alg:LTIConstrainedReachability} over-approximates the exact reachable set, i.e., $\bar{\mathcal{R}}_k \supseteq \mathcal{R}_{k}$.
\end{theorem}

\begin{proof}
As pointed out in \cite{conf:annerobustcontrol}, the condition for the existence of a solution $F_{[A\ B]}$ to the system of linear equations
\begin{align*}
    F_{[A\ B]} \begin{bmatrix}
    X_- \\ U_-
    \end{bmatrix} = X_+ - W_-, 
    \end{align*}
    or equivalently
    \begin{align}
    \begin{bmatrix}
    X_- \\ U_-
    \end{bmatrix}^\t F_{[A\ B]}^\t = (X_+ - W_-)^\t
    \label{eq:fred}
\end{align}
can be reformulated via the Fredholm alternative as
\begin{align*}
    \begin{bmatrix}
    X_- \\ U_-
    \end{bmatrix} \tilde{z} = 0 \quad \Rightarrow 
    \quad (X_+ - W_-)\tilde{z} = 0,
\end{align*}
which means that any vector $\tilde{z} \in \mathbb{R}^{T}$ in the kernel of $\begin{bmatrix} 
    X_-^\t & U_-^\t 
    \end{bmatrix}^\t$ must also lie in the kernel of $X_+ - W_-$.
Since 
$\begin{bmatrix}
    X_- \\ U_-
    \end{bmatrix}^\perp$ contains a basis of the kernel of $\begin{bmatrix} 
    X_-^\t & U_-^\t 
    \end{bmatrix}^\t$, another equivalent condition for the existence of a solution $F_{[A\ B]}$ in \eqref{eq:fred} is hence
\begin{align}
    (X_+ - {W}_-) \begin{bmatrix} X_- \\ U_- \end{bmatrix}^\perp = 0.
    \label{eq:constraint}
\end{align}
Considering the constraint~\eqref{eq:constraint} together with the bounding matrix zonotope $\mathcal{M}_w = \zono{C_{\mathcal{M}_w},\tilde{G}_{\mathcal{M}_w}}$, we find:
\begin{align}
    \Bigg(X_+ - C_{\mathcal{M}_w} - \sum_{i=1}^{\gamma_{\mathcal{Z}_w} T} \beta^{(i)} G_{\mathcal{M}_w}^{(i)} \Bigg) \begin{bmatrix} X_- \\ U_- \end{bmatrix}^\perp = 0. \label{eq:nullmw}
\end{align}
Rearranging \eqref{eq:nullmw} results in 
\begin{align*}
    \underbrace{(X_+ - C_{\mathcal{M}_w}  ) \begin{bmatrix} X_- \\ U_- \end{bmatrix}^\perp}_{B_{\mathcal{N}_w}} = \sum_{i=1}^{\gamma_{\mathcal{Z}_w} T} \beta^{(i)} \underbrace{G_{\mathcal{M}_w}^{(i)}  \begin{bmatrix} X_- \\ U_- \end{bmatrix}^\perp}_{A^{(i)}_{\mathcal{N}_w}}.
\end{align*}
\end{proof}

\begin{remark}
Algorithm~\ref{alg:LTIConstrainedReachability} provides a less conservative description of the data-driven reachable set compared to Algorithm~\ref{alg:LTIreach} by utilizing a less conservative description of the set of systems consistent with the data. To be more precise, $\mathcal{N}_\Sigma$ in \eqref{eq:N_Sigma} with \eqref{eq:N_w} is an equivalent description of all systems consistent with the data and the noise bound (compare \cite[Lemma 8]{conf:dissipativity1}). However, applying the reachability analysis in line~\ref{ln:algconstRbar} of Algorithm~\ref{alg:LTIConstrainedReachability} requires multiplying constrained matrix zonotopes by zonotopes and constrained zonotopes. For this multiplication, we introduced a guaranteed over-approximation in Proposition~\ref{prop:cmzconstzonotope}, which hence introduces conservatism into the proposed reachability analysis approach. 
\end{remark}

Note that initial zonotope $\mathcal{X}_0$ captures all the uncertainty in the initial state. Next, we provide a general framework for incorporating side information about the unknown model. 
\subsection{Linear Systems with Side Information}\label{sec:sideinfo}

Consider a scenario in which we have prior side information about the unknown model from the physics of the problem or any other source. It would be beneficial to make use of this side information to have less conservative reachable sets. In the following, we propose a framework to incorporate side information about the unknown model, like decoupled dynamics, partial model knowledge, or prior bounds on entries in the system matrices. More specifically, we consider any side information that can be formulated as 
\begin{align}
    | \bar{Q} \begin{bmatrix}
    A_{\text{tr}} & B_{\text{tr}}
    \end{bmatrix} - \bar{Y} | \leq \bar{R}, \label{eq:sideinfo}
\end{align}
where $\bar{Q}\,{\in}\, \mathbb{R}^{n_s \times n_x}$, $\bar{Y}\,{\in}\, \mathbb{R}^{n_s \times (n_x+n_u)}$, and ${\bar{R}\,{\in}\, \mathbb{R}^{n_s \times (n_x+n_u)}}$ are matrices defining the side information which is known to hold for the true system matrices $\begin{bmatrix}
    A_{\text{tr}} & B_{\text{tr}}
    \end{bmatrix}$. Here, the operators $|\cdot|$ and $\leq$ are element-wise operators. To incorporate such side information into the reachability analysis, we utilize once again the newly introduced set of constrained matrix zonotopes. We introduce a reachability analysis in Algorithm~\ref{alg:LTISideInfoReachability} on the basis of the set of system matrices $\begin{bmatrix}A & B\end{bmatrix}$ that is consistent with the data (including the less conservative noise handling in $\mathcal{N}_w$) as well as the a priori known side information in \eqref{eq:sideinfo}. We denote the reachable set computed based on the side information by $\bar{\mathcal{R}}^{\text{s}}_{k}$. Algorithm~\ref{alg:LTISideInfoReachability} summarizes the required computation to incorporate the side information. After setting $\bar{\mathcal{R}}^{\text{s}}_{0} =\mathcal{X}_{0}$ in line~\ref{ln:algSideRs0}, we compute the exact noise description $\mathcal{N}_w$ and exact set of models $\mathcal{N}_\Sigma$ consistent with the noisy data in lines 2:6 similar to Algorithm~\ref{alg:LTIConstrainedReachability}. Next, we compute the set of models $\mathcal{N}_{\text{s}}$ consistent with the side of the information in line~\ref{ln:algSideGNs1} to line~\ref{ln:algSideNs}. Finally, we compute the recursion of the reachable sets in line~\ref{ln:algSideRs}. The following theorem proves that $\bar{\mathcal{R}}^{\text{s}}_{k} \supseteq \mathcal{R}_{k}$.

\begin{algorithm}[t]
  \caption{LTI-Side-Info-Reachability}
  \label{alg:LTISideInfoReachability}
  \textbf{Input}: input-state trajectories $D = (U_-,X)$, initial set $\mathcal{X}_{0}$, process noise zonotope $\mathcal{Z}_w$ and matrix zonotope $\mathcal{M}_w$, side information in terms of $\bar{Q}$, $\bar{Y}$, and $\bar{R}$, and input zonotope $\mathcal{U}_k$, $\forall k = 0, \dots,N-1$\\
  \textbf{Output}: reachable sets $\bar{\mathcal{R}}_{k}, \forall k = 1, \dots,N$ 
  \begin{algorithmic}[1]
  \State $\bar{\mathcal{R}}^{\text{s}}_{0} =\mathcal{X}_{0}$ \label{ln:algSideRs0}
  \Statex Equivalent to lines~\ref{ln:algconstANw}:\ref{ln:algconstNsigma} of Algorithm~\ref{alg:LTIConstrainedReachability} \label{ln:algSideSee2}
  \setcounter{ALG@line}{6}
  \State $G_{\mathcal{N}_{\text{s}}}^{(i)} = G_{\mathcal{N}_\Sigma}^{(i)}, \quad \forall i=\{1, \dots, \gamma_{\mathcal{Z}_w} T\}$ \label{ln:algSideGNs1}
  \State $G_{\mathcal{N}_{\text{s}}}^{(i)} = 0, \quad  \forall i=\{\gamma_{\mathcal{Z}_w} T +1, \dots,\gamma_{\mathcal{Z}_w} T +n_s(n_x+n_u)\}$ \label{ln:algSideGNs2}
  \State $\tilde{G}_{\mathcal{N}_{\text{s}}}=\begin{bmatrix}  G_{\mathcal{N}_{\text{s}}}^{(1)}&\dots&G_{\mathcal{N}_{\text{s}}}^{(\gamma_{\mathcal{Z}_w} T + n_s(n_x+n_u))}\end{bmatrix}$ \label{ln:algSidetildeGNs}
  \State  $A_{\mathcal{N}_{\text{s}}}^{(i)} = \begin{bmatrix} \multicolumn{2}{c}{A_{\mathcal{N}_\Sigma}^{(i)}} \\ \bar{Q} G_{\mathcal{N}_\Sigma}^{(i)} & 0 \end{bmatrix}, \quad \forall i=\{1, \dots, \gamma_{\mathcal{Z}_w} T\}$ \label{ln:algSideANs1}
  \State  $A_{\mathcal{N}_{\text{s}}}^{(\gamma_{\mathcal{Z}_w} T+k)} = \begin{bmatrix}  \multicolumn{2}{c}{0}\\ -\bar{R}^0_{(i,j)} & 0 \end{bmatrix},\exists\ k \  \forall\  i=\{1, \dots, n_x\},$\newline $j=\{1, \dots, n_u\},\ \text{such that} \  k = \{1,\dots,n_s(n_x+n_u)\}$ \label{ln:algSideANs2}
  \State $\tilde{A}_{\mathcal{N}_{\text{s}}}=\begin{bmatrix}  A_{\mathcal{N}_{\text{s}}}^{(1)}&\dots&A_{\mathcal{N}_{\text{s}}}^{(\gamma_{\mathcal{Z}_w} T+ n_s(n_x+n_u))}\end{bmatrix}$ \label{ln:algSidetildeANs}
  \State $B_{\mathcal{N}_{\text{s}}} = \begin{bmatrix} \multicolumn{2}{c}{B_{\mathcal{N}_\Sigma}} \\ \bar{Y} - \bar{Q} C_{\mathcal{N}_\Sigma} & 0 \end{bmatrix}$  \label{ln:algSideBNs}
  \State $\mathcal{N}_{\text{s}} = \zono{C_{\mathcal{N}_\Sigma},\tilde{G}_{\mathcal{N}_{\text{s}}},\tilde{A}_{\mathcal{N}_{\text{s}}},B_{\mathcal{N}_{\text{s}}}}$  \label{ln:algSideNs}
  \For{$k = 0:N-1$}
  \State $\bar{\mathcal{R}}^{\text{s}}_{k+1} =\mathcal{N}_{\text{s}} (\bar{\mathcal{R}}^{\text{s}}_{k} \times \mathcal{U}_{k}  ) +  \mathcal{Z}_w$ \label{ln:algSideRs}
  \EndFor
  \end{algorithmic}
\end{algorithm}

\begin{theorem}
\label{th:sideinfo}
Given input-state trajectories $D = (U_-,X)$ of the system in \eqref{eq:sys} such that $\begin{bmatrix} 
    X_-^\t & U_-^\t 
    \end{bmatrix}^\t$ has a full row rank, and side information in the form of \eqref{eq:sideinfo}, then, the reachable set computed in Algorithm~\ref{alg:LTISideInfoReachability} over-approximates the exact reachable set, i.e., $\bar{\mathcal{R}}^{\text{s}}_{k} \supseteq \mathcal{R}_{k}$. 
\end{theorem}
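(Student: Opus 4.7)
The plan is to mirror the logic of Theorem~\ref{th:reach_lin_cmz} and augment it with the side-information constraint~\eqref{eq:sideinfo}. Concretely, I would first show that the true pair $\begin{bmatrix} A_{\text{tr}} & B_{\text{tr}} \end{bmatrix}$ lies in the constrained matrix zonotope $\mathcal{N}_{\text{s}}$ built in Algorithm~\ref{alg:LTISideInfoReachability}, and then conclude by a one-step induction that each $\bar{\mathcal{R}}^{\text{s}}_{k}$ over-approximates $\mathcal{R}_{k}$.

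For the first part, recall from Theorem~\ref{th:reach_lin_cmz} that $\begin{bmatrix} A_{\text{tr}} & B_{\text{tr}} \end{bmatrix} \in \mathcal{N}_{\Sigma} = (X_{+} - \mathcal{N}_w) H$, so there exist factors $\hat{\beta}^{(i)} \in [-1,1]$, $i = 1, \dots, \gamma_{\mathcal{Z}_w} T$, realizing the true matrices as $C_{\mathcal{N}_\Sigma} + \sum_{i} \hat{\beta}^{(i)} G_{\mathcal{N}_\Sigma}^{(i)}$ and simultaneously satisfying the Fredholm constraint $\sum_i \hat{\beta}^{(i)} A_{\mathcal{N}_\Sigma}^{(i)} = B_{\mathcal{N}_\Sigma}$. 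Next, I would rewrite the element-wise side-information constraint~\eqref{eq:sideinfo}: for every entry $(i,j)$, the inequality $\lvert (\bar{Q}\begin{bmatrix}A_{\text{tr}} & B_{\text{tr}}\end{bmatrix} - \bar{Y})_{i,j} \rvert \leq \bar{R}_{i,j}$ is equivalent to the existence of a slack $\hat{\beta}^{(i,j)}_{\text{s}} \in [-1,1]$ with $(\bar{Q}\begin{bmatrix}A_{\text{tr}} & B_{\text{tr}}\end{bmatrix} - \bar{Y})_{i,j} = \hat{\beta}^{(i,j)}_{\text{s}} \bar{R}_{i,j}$. Concatenating these slacks with the data factors gives a single $\beta$-vector of length $\gamma_{\mathcal{Z}_w} T + nm$.

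The key step is then to verify that substituting $\begin{bmatrix}A_{\text{tr}} & B_{\text{tr}}\end{bmatrix} = C_{\mathcal{N}_\Sigma} + \sum_{i} \hat{\beta}^{(i)} G_{\mathcal{N}_\Sigma}^{(i)}$ into $\bar{Q}\begin{bmatrix}A_{\text{tr}} & B_{\text{tr}}\end{bmatrix} - \bar{Y} = \sum_{k=1}^{nm} \hat{\beta}^{(\gamma_{\mathcal{Z}_w}T + k)}_{\text{s}} \bar{R}^0_{(i(k),j(k))}$ produces exactly the linear constraint rows appended in lines~\ref{ln:algSideANs1}--\ref{ln:algSideBNs}; that is, $\sum_{i} \hat{\beta}^{(i)} \bar{Q} G_{\mathcal{N}_\Sigma}^{(i)} - \sum_k \hat{\beta}^{(\gamma_{\mathcal{Z}_w}T+k)}_{\text{s}} \bar{R}^0_{(i(k),j(k))} = \bar{Y} - \bar{Q} C_{\mathcal{N}_\Sigma}$. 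Because the generators $G_{\mathcal{N}_{\text{s}}}^{(i)}$ for $i > \gamma_{\mathcal{Z}_w} T$ are chosen to be zero in line~\ref{ln:algSideGNs2}, the new slack factors do not perturb the value of the matrix element of $\mathcal{N}_{\text{s}}$, yet they provide the freedom needed to fulfill the side-information equality. This shows $\begin{bmatrix} A_{\text{tr}} & B_{\text{tr}} \end{bmatrix} \in \mathcal{N}_{\text{s}}$. The rank hypothesis $T > 2(n+m)$ would be invoked to guarantee both that $H$ exists (so Lemma~\ref{lm:sigmaM} applies) and that the kernel of $\begin{bmatrix} X_- \\ U_- \end{bmatrix}$ has the dimension needed for $\mathcal{N}_w$ to be non-trivial.

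For the final step, I would close the argument by induction on $k$. Since $\bar{\mathcal{R}}^{\text{s}}_{0} = \mathcal{X}_0 \supseteq \mathcal{R}_{0}$ and, assuming $\bar{\mathcal{R}}^{\text{s}}_{k} \supseteq \mathcal{R}_{k}$, the exact update $\mathcal{R}_{k+1} = \begin{bmatrix} A_{\text{tr}} & B_{\text{tr}} \end{bmatrix}(\mathcal{R}_{k} \times \mathcal{U}_k) + \mathcal{Z}_w$ is contained in $\mathcal{N}_{\text{s}}(\bar{\mathcal{R}}^{\text{s}}_{k} \times \mathcal{U}_k) + \mathcal{Z}_w = \bar{\mathcal{R}}^{\text{s}}_{k+1}$, because $\begin{bmatrix} A_{\text{tr}} & B_{\text{tr}} \end{bmatrix} \in \mathcal{N}_{\text{s}}$ and the propagation of line~\ref{ln:algSideRs} uses the guaranteed over-approximations of Propositions~\ref{prop:cmzzonotope} and~\ref{prop:cmzconstzonotope}. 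The main technical obstacle is the bookkeeping needed to verify that the block structure of $\tilde{A}_{\mathcal{N}_{\text{s}}}$ and $B_{\mathcal{N}_{\text{s}}}$ in lines~\ref{ln:algSideANs1}--\ref{ln:algSideBNs} exactly reproduces the Fredholm constraint in its upper block while encoding the element-wise side-information equality in its lower block; I would carry out this verification entry by entry to ensure every constraint row of $\mathcal{N}_{\text{s}}$ is consistently satisfied by the concatenated factor vector constructed from the data factors of $[A_{\text{tr}}\ B_{\text{tr}}]$ and the slack factors of the side-information relation.
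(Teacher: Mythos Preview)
Your proposal is correct and follows essentially the same route as the paper: represent the side information \eqref{eq:sideinfo} via entrywise slack factors in $[-1,1]$, concatenate them with the $\mathcal{N}_\Sigma$ factors, and verify that the augmented constraint system in lines~\ref{ln:algSideANs1}--\ref{ln:algSideBNs} is satisfied, so that $\begin{bmatrix}A_{\text{tr}} & B_{\text{tr}}\end{bmatrix}\in\mathcal{N}_{\text{s}}$ and the reachable-set inclusion follows by induction. One small correction: the hypothesis $T>2(n+m)$ is not needed for the existence of $H$ or for the kernel to be non-trivial (those only require $T\geq n+m$ and $T>n+m$, respectively); its actual role is purely notational, ensuring that the column count $T-(n+m)$ of $A_{\mathcal{N}_\Sigma}^{(i)}$ exceeds the column count $n+m$ of $\bar{Q}G_{\mathcal{N}_\Sigma}^{(i)}$ so that the block layout with zero padding in lines~\ref{ln:algSideANs1}--\ref{ln:algSideBNs} is well-defined.
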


\begin{proof}
For all matrices $\begin{bmatrix} A_{\text{s}} & B_{\text{s}} \end{bmatrix}$ that satisfy the side information \eqref{eq:sideinfo}, there exists a matrix $\bar{D} \in \mathbb{R}^{n_s \times (n_x + n_u)}$ with $(\bar{D})_{i,j} \in [-1,1]$ such that
\begin{align}
     \bar{Q} \begin{bmatrix} A_{\text{s}} & B_{\text{s}} \end{bmatrix} - \bar{Y}  = \sum_{i=1}^{n_s} \sum_{j=1}^{n_x+n_u} \bar{R}^0_{(i,j)}\odot \bar{D}. \label{eq:sideodotD}
\end{align}
Additionally, we know that all system matrices consistent with the data $\begin{bmatrix} A & B \end{bmatrix} \in \mathcal{N}_{\Sigma}$ are bounded by the constrained matrix zonotope $N_\Sigma$, i.e.
\begin{align}
    \begin{bmatrix} A & B \end{bmatrix} = C_{\mathcal{N}_\Sigma} + \sum_{i=1}^{\gamma_{\mathcal{Z}_w} T} \beta^{(i)}_{\mathcal{N}_\Sigma} G_{\mathcal{N}_\Sigma}^{(i)}, \label{eq:FAB}
\end{align}
with 
\begin{align}
    \sum_{i=1}^{\gamma_{\mathcal{Z}_w} T} \beta^{(i)}_{\mathcal{N}_\Sigma} A_{\mathcal{N}_\Sigma}^{(i)} = B_{\mathcal{N}_\Sigma}. \label{eq:FABcont}
\end{align}
Inserting \eqref{eq:FAB} in \eqref{eq:sideodotD} results in 
\begin{align}
    \bar{Y} - \bar{Q} C_{\mathcal{N}_\Sigma} =\bar{Q} \sum_{i=1}^{\gamma_{\mathcal{Z}_w} T} \beta^{(i)}_{\mathcal{N}_\Sigma} G_{\mathcal{N}_\Sigma}^{(i)} -\sum_{i=1}^{n_s} \sum_{j=1}^{n_x+n_u} \bar{R}^0_{(i,j)}\odot \bar{D}. \label{eq:sideodotD_CG}
\end{align}
With $(\bar{D})_{i,j} \in [-1,1]$, we can concatenate $(\bar{D})_{i,j}$ to $\beta_{\mathcal{N}_\Sigma}$ constituting $\beta_{\mathcal{N}_{\text{s}}}$. Then, combining \eqref{eq:FABcont} with the new constraints in \eqref{eq:sideodotD_CG} yields $\tilde{A}_{\mathcal{N}_{\text{s}}}$ and $B_{\mathcal{N}_{\text{s}}}$. We add zero generators to maintain the correct number of generators.
\end{proof}
\begin{remark}
Note that the reachable sets computed in Algorithm~\ref{alg:LTISideInfoReachability} using side information are less conservative than the ones computed in Algorithm~\ref{alg:LTIConstrainedReachability} using constrained matrix zonotopes which in turn are less conservative than the ones computed in Algorithm~\ref{alg:LTIreach} using the matrix zonotope, i.e., $\mathcal{R}_k \subseteq \bar{\mathcal{R}}^{\text{s}}_{k} \subseteq \bar{\mathcal{R}}_{k} \subseteq \hat{\mathcal{R}}_k$, as additional information is included in the form of additional constraints.
\end{remark}

 \begin{remark}
 The multiplication between a zonotope and matrix zonotopes is computed exactly in line~\ref{ln:algLTIRhat} of Algorithm~\ref{alg:LTIreach}. However, the multiplication between a constrained zonotope and a constrained matrix zonotope is over-approximated using Proposition~\ref{prop:cmzconstzonotope} in line~\ref{ln:algconstRbar} of Algorithm~\ref{alg:LTIConstrainedReachability} and line~\ref{ln:algSideRs} of Algorithm~\ref{alg:LTISideInfoReachability}.
 \end{remark}
\begin{remark}
The computational complexity of our proposed algorithms depends on the number of generators of the reachable sets, the number of generators of the input zonotope, and the number of generators of the matrix zonotope. A reduce operator for zonotopes \cite{Girard2005} or constrained zonotopes \cite{conf:const_zono} is usually used to get over-approximated reachable sets with a lower number of generators at each iteration in order to decrease the complexity. It is $\mathcal{O}(n_x(n_x+n_u)\gamma_{\mathcal{M}_\Sigma}(\gamma_{\hat{\mathcal{R}}}+\gamma_{\mathcal{U}_k}))$ for one step of Algorithm~\ref{alg:LTIreach} due to the multiplication in line~\ref{ln:algLTIRhat}. 
\end{remark}
Next, we consider dealing with measurement noise in combination with process noise. 
%

\subsection{Linear Systems with Measurement Noise} 
\label{sec:measnoise}
In the following, we consider measurement noise in addition to process noise, i.e., 
\begin{align}
\begin{split}
    x(k+1) &= A_{\text{tr}} x(k) + B_{\text{tr}} u(k) + w(k),\\
     y(k) &= x(k) + v(k).
    \end{split}
    \label{eq:sys_v}
\end{align}
Besides the input data matrix $U_-$, we collect the noisy state measurements $Y$ in the matrices 
 \begin{align*}
     Y_+ &= \begin{bmatrix} y^{(1)}(1)\dots  y^{(1)}(T_1) \dots  y^{(K)}(1)  \dots  y^{(K)}(T_K) \end{bmatrix}, \nonumber\\
     Y_- &= \begin{bmatrix} y^{(1)}(0) \dots  y^{(1)}(T_1\!-\!1)  \dots  y^{(K)}(0)  \dots  y^{(K)}(T_K\!-\!1) \end{bmatrix}.
 \end{align*}
%
%
Additionally, let $\hat{O} = \hat{V}_+ - A \hat{V}_-$ with
 \begin{align*}
     \hat{V}_+ &= \begin{bmatrix} \hat{v}^{(1)}(1)\dots  \hat{v}^{(1)}(T_1) \dots  \hat{v}^{(K)}(1)  \dots  \hat{v}^{(K)}(T_K) \end{bmatrix}, \nonumber\\
     \hat{V}_- &= \begin{bmatrix} \hat{v}^{(1)}(0) \dots  \hat{v}^{(1)}(T_1\!-\!1)  \dots  \hat{v}^{(K)}(0)  \dots  \hat{v}^{(K)}(T_K\!-\!1) \end{bmatrix},
 \end{align*}
where $\hat{v}^{(i)}(k)$, $k=0,1,\dots,T_i$, denotes again the actual measurement noise sequence on trajectory $i$ that led to the measured input-state trajectories. If we assume knowledge of the bound on $\hat{O}$, the same approach as presented before can be pursued.

\begin{assumption}
\label{as:mmt_noise}
The matrix $\hat{O}$ is bounded by a matrix zonotope $\hat{O} \in \mathcal{M}_o$ which is known.    
\end{assumption}
\begin{proposition}
\label{prop:meas_zono_av}
Given input-state trajectories $(U_-,Y)$ of the system in \eqref{eq:sys_v} such that $\begin{bmatrix} 
    Y_-^\t & U_-^\t 
    \end{bmatrix}^\t$ has a full row rank, then the reachable set
    \begin{align}
\hat{\mathcal{R}}^{\text{m}}_{k+1} = \mathcal{M}_{\tilde{\Sigma}} (\hat{\mathcal{R}}^{\text{m}}_{k} \times \mathcal{U}_k ) +  \mathcal{Z}_w, \quad \hat{\mathcal{R}}^{\text{m}}_{0}=\mathcal{X}_0,
\end{align}
with 
\begin{align}
    \mathcal{M}_{\tilde{\Sigma}} = (Y_+ - \mathcal{M}_o - \mathcal{M}_w) \begin{bmatrix} 
    Y_- \\ U_- 
    \end{bmatrix}^\dagger \label{eq:mtildsigma}
\end{align}
over-approximates the exact reachable set, i.e., $\mathcal{R}_{k} \subseteq  \hat{\mathcal{R}}^{\text{m}}_{k}$. 
\end{proposition}
\begin{proof}
With
\begin{align*}
    Y_+ - (V_+ - A_{\text{tr}} V_-) - W_- = A_{\text{tr}} Y_- + B_{\text{tr}} U_-,
\end{align*}
 the proof follows the proofs of Lemma~\ref{lm:sigmaM} and Theorem~\ref{th:reach_lin} given Assumption~\ref{as:mmt_noise}.
\end{proof}

Next, we utilize the introduced constrained matrix zonotope in Section~\ref{sec:cmz} to find a less conservative set given Assumption~\ref{as:mmt_noise}. 
\begin{proposition}
\label{prop:meas_cmz_av}
Given input-state trajectories $(U_-,Y)$ of the system in \eqref{eq:sys_v} such that $\begin{bmatrix} 
    Y_-^\t & U_-^\t 
    \end{bmatrix}^\t$ has a full row rank, then the reachable set
\begin{align}
\bar{\mathcal{R}}^{\text{m}}_{k+1} = \mathcal{N}_{\tilde{\Sigma}} (\bar{\mathcal{R}}^{\text{m}}_{k} \times \mathcal{U}_{k}  ) +  \mathcal{Z}_w, \quad \bar{\mathcal{R}}^{\text{m}}_{0}=\mathcal{X}_0, \label{eq:barRm}
\end{align}
with
\begin{align}
    \mathcal{N}_{\tilde{\Sigma}} &= \zono{C_{\mathcal{M}_{\tilde{\Sigma}}},\tilde{G}_{\mathcal{M}_{\tilde{\Sigma}}},\tilde{A}_{\mathcal{N}_{\tilde{\Sigma}}},B_{\mathcal{N}_{\tilde{\Sigma}}}}, \label{eq:Ntildesigma} \\
    \tilde{A}_{\mathcal{N}_{\tilde{\Sigma}}}&=\begin{bmatrix}
    A_{\mathcal{N}_{\tilde{\Sigma}}}^{(1)}&\dots&A_{\mathcal{N}_{\tilde{\Sigma}}}^{(\gamma_{\mathcal{M}_o}+\gamma_{\mathcal{M}_w} )}\end{bmatrix},\nonumber\\
    A^{(i)}_{\mathcal{N}_{\tilde{\Sigma}}} &= G^{(i)}_{\mathcal{M}_w}  \begin{bmatrix} Y_- \\ U_- \end{bmatrix}^\perp\!\!, i =\{1,\dots, \gamma_{\mathcal{M}_w} \},\nonumber\\
    A^{(i)}_{\mathcal{N}_{\tilde{\Sigma}}} &=  G^{(i)}_{\mathcal{M}_o} \begin{bmatrix} Y_- \\ U_- \end{bmatrix}^\perp\!\!, i =\{\gamma_{\mathcal{M}_w} +1,\dots, \gamma_{\mathcal{M}_o}+\gamma_{\mathcal{M}_w}  \},\nonumber\\
    B_{\mathcal{N}_{\tilde{\Sigma}}} &= (Y_{+} - C_{\mathcal{M}_w} - C_{\mathcal{M}_o} )\begin{bmatrix} Y_- \\ U_- \end{bmatrix}^\perp,\nonumber
\end{align} 
over-approximates the exact reachable set, i.e., $\bar{\mathcal{R}}^{\text{m}}_{k+1} \supseteq \mathcal{R}_{k+1}$, where $C_{\mathcal{M}_{\tilde{\Sigma}}}$ and $\tilde{G}_{\mathcal{M}_{\tilde{\Sigma}}}$ are defined in \eqref{eq:mtildsigma}.
\end{proposition}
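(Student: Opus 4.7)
The plan is to follow the same two-step strategy used in the proof of Theorem \ref{th:reach_lin_cmz}, but applied to the noise-inflated system identity that appears under measurement noise. First, I would write out the identity that links input-output data to the true system: substituting $x(k) = y(k) - v(k)$ into \eqref{eq:sys_v} yields
\begin{align*}
    Y_+ = A_{\text{tr}} Y_- + B_{\text{tr}} U_- + \hat{O} + \hat{W}_-,
\end{align*}
with $\hat{O} \in \mathcal{M}_o$ by Assumption~\ref{as:mmt_noise} and $\hat{W}_- \in \mathcal{M}_w$. Hence for every pair $\begin{bmatrix} A & B \end{bmatrix}$ consistent with the noisy data there exist $O \in \mathcal{M}_o$ and $W_- \in \mathcal{M}_w$ such that
\begin{align*}
    \begin{bmatrix} A & B \end{bmatrix} \begin{bmatrix} Y_- \\ U_- \end{bmatrix} = Y_+ - O - W_-.
\end{align*}

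Second, I would invoke the Fredholm alternative exactly as in Theorem~\ref{th:reach_lin_cmz} to convert the existence of a solution into the kernel condition
\begin{align*}
    (Y_+ - O - W_-) \begin{bmatrix} Y_- \\ U_- \end{bmatrix}^\perp = 0.
\end{align*}
Substituting the matrix-zonotope parameterizations $O = C_{\mathcal{M}_o} + \sum_i \beta_o^{(i)} G_{\mathcal{M}_o}^{(i)}$ and $W_- = C_{\mathcal{M}_w} + \sum_j \beta_w^{(j)} G_{\mathcal{M}_w}^{(j)}$ and collecting the center terms on the left and the generator terms on the right reproduces the definitions of $B_{\mathcal{N}_{\tilde{\Sigma}}}$ and of each $A_{\mathcal{N}_{\tilde{\Sigma}}}^{(i)}$ stated in the proposition; the factors $\beta_w^{(j)}$ and $\beta_o^{(i)}$ then play the role of the concatenated $\beta$-vector of the constrained matrix zonotope. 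Once this is shown, multiplying on the right by $\tilde{H}$ (which exists by \eqref{eq:tildeH}) gives $\begin{bmatrix} A & B \end{bmatrix} \in \mathcal{N}_{\tilde{\Sigma}}$, where the center and generators agree with those of $\mathcal{M}_{\tilde{\Sigma}}$ from \eqref{eq:mtildsigma} and the constraints are the ones just derived. In particular $\begin{bmatrix} A_{\text{tr}} & B_{\text{tr}} \end{bmatrix} \in \mathcal{N}_{\tilde{\Sigma}}$.

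Third, the reachability part is a direct induction: $\bar{\mathcal{R}}^{\text{m}}_0 = \mathcal{X}_0 \supseteq \mathcal{R}_0$, and assuming $\bar{\mathcal{R}}^{\text{m}}_k \supseteq \mathcal{R}_k$, the recursion \eqref{eq:barRm} together with $\begin{bmatrix} A_{\text{tr}} & B_{\text{tr}} \end{bmatrix} \in \mathcal{N}_{\tilde{\Sigma}}$, $\mathcal{U}_k \supseteq \mathcal{U}_k$, and $\mathcal{Z}_w \ni w(k)$ implies
\begin{align*}
    \mathcal{R}_{k+1} = \begin{bmatrix} A_{\text{tr}} & B_{\text{tr}} \end{bmatrix}(\mathcal{R}_k \times \mathcal{U}_k) + \mathcal{Z}_w \subseteq \mathcal{N}_{\tilde{\Sigma}} (\bar{\mathcal{R}}^{\text{m}}_k \times \mathcal{U}_k) + \mathcal{Z}_w = \bar{\mathcal{R}}^{\text{m}}_{k+1},
\end{align*}
where the multiplication of a constrained matrix zonotope by a (constrained) zonotope is the over-approximation operation from Propositions~\ref{prop:cmzzonotope} and \ref{prop:cmzconstzonotope}.

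The main technical obstacle is bookkeeping: two independent noise sources $\mathcal{M}_o$ and $\mathcal{M}_w$ must be lifted into a single constrained matrix zonotope, so the constraint generators $A_{\mathcal{N}_{\tilde{\Sigma}}}^{(i)}$ must be split into a block indexed by $\{1,\dots,\gamma_{\mathcal{M}_w}\}$ (coming from $G^{(i)}_{\mathcal{M}_w}$) and a block indexed by $\{\gamma_{\mathcal{M}_w}+1,\dots,\gamma_{\mathcal{M}_w}+\gamma_{\mathcal{M}_o}\}$ (coming from $G^{(i)}_{\mathcal{M}_o}$), with the combined $\beta$ shared between the generators of $\mathcal{M}_{\tilde{\Sigma}}$ and the new constraints. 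Once this indexing is written out carefully, the rest of the argument is essentially identical to Theorem~\ref{th:reach_lin_cmz} combined with Proposition~\ref{prop:meas_zono_av}.
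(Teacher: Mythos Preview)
Your proposal is correct and follows essentially the same approach as the paper: both derive the kernel condition $(Y_+ - \hat{W}_- - \hat{O})\begin{bmatrix} Y_- \\ U_- \end{bmatrix}^\perp = 0$ by analogy with Theorem~\ref{th:reach_lin_cmz}, substitute the matrix-zonotope parameterizations of $\hat{W}_- \in \mathcal{M}_w$ and $\hat{O} \in \mathcal{M}_o$, and rearrange to read off $B_{\mathcal{N}_{\tilde{\Sigma}}}$ and the blocks $A^{(i)}_{\mathcal{N}_{\tilde{\Sigma}}}$ with the concatenated factor vector $\beta_{\mathcal{N}_{\tilde{\Sigma}}} = [\beta_{\mathcal{M}_w}\ \beta_{\mathcal{M}_o}]$. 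Your write-up is in fact more explicit than the paper's (you spell out the reachability induction and the role of Propositions~\ref{prop:cmzzonotope} and~\ref{prop:cmzconstzonotope}), but the underlying argument is identical.
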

\begin{proof}
Similar to Theorem~\ref{th:reach_lin_cmz}, we have 
\begin{align*}
    (Y_+ - \hat{W}_- - \hat{O}) \begin{bmatrix} Y_- \\ U_- \end{bmatrix}^\perp = 0.
\end{align*}
We do not know $\hat{W}_-$ and $\hat{O}$ but we can bound them by $\hat{W}_- \in \mathcal{M}_w = \zono{C_{\mathcal{M}_w},\tilde{G}_{\mathcal{M}_w}}$ and  $\hat{O} \in \mathcal{M}_o = \zono{C_{\mathcal{M}_o},\tilde{G}_{\mathcal{M}_o}}$. Therefore, we have:
\begin{align}
    \Bigg(Y_+ - C_{\mathcal{M}_w} - C_{\mathcal{M}_o} - \sum_{i=1}^{\gamma_{\mathcal{M}_w} } \beta^{(i)}_{\mathcal{M}_w} G_{\mathcal{M}_w}^{(i)} \nonumber\\
    - \sum_{i=1}^{\gamma_{\mathcal{M}_o} } \beta^{(i)}_{\mathcal{M}_o} G_{\mathcal{M}_o}^{(i)} \Bigg) \begin{bmatrix} Y_- \\ U_- \end{bmatrix}^\perp = 0. \label{eq:nullmw2}
\end{align}
Let $\beta_{\mathcal{N}_{\tilde{\Sigma}}} = \begin{bmatrix}
\beta_{\mathcal{M}_w} &\beta_{\mathcal{M}_o}
\end{bmatrix}$. Thus we rewrite \eqref{eq:nullmw2} as
\begin{align}
    &\underbrace{\Big(Y_+ - C_{\mathcal{M}_w} - C_{\mathcal{M}_o}  \Big) \begin{bmatrix} Y_- \\ U_- \end{bmatrix}^\perp}_{B_{\mathcal{N}_{\tilde{\Sigma}}}} \nonumber\\
    &= \Bigg( \sum_{i=1}^{\gamma_{\mathcal{M}_w} } \beta^{(i)}_{\mathcal{N}_{\tilde{\Sigma}}} G_{\mathcal{M}_w}^{(i)} 
    + \sum_{i=1}^{\gamma_{\mathcal{M}_o} } \beta^{(\gamma_{\mathcal{M}_w} +i)}_{\mathcal{N}_{\tilde{\Sigma}}} G_{\mathcal{M}_o}^{(i)} \Bigg) \begin{bmatrix} Y_- \\ U_- \end{bmatrix}^\perp, \label{eq:nullmw3}
\end{align}
which yields $B_{\mathcal{N}_{\tilde{\Sigma}}}$ and $A_{\mathcal{N}_{\tilde{\Sigma}}}$.
\end{proof}
%


Note that a similar assumption to Assumption~\ref{as:mmt_noise} has been taken in \cite[Asm.~2]{conf:formulas}. However, it might be difficult in practice to find a suitable set $\mathcal{M}_o$ even with a given bound on $v(k)$, $k=0,1,\dots,T$, since $A$ is assumed to be unknown. 
Therefore, we introduce a data-based approximation for the reachable set under the influence of the measurement noise from data. Instead of Assumption~\ref{as:mmt_noise}, we now only consider a bound on $v(k) \in \mathcal{Z}_v$. Similar to the matrix zonotope $\mathcal{M}_w$ of the modeling noise, we have $\mathcal{M}_v=\zono{C_{\mathcal{M}_v},\tilde{G}_{\mathcal{M}_v}}$ where $\hat{V}_+,\hat{V}_- \in \mathcal{M}_v$. 
Algorithm~\ref{alg:LTIMeasReachability} summarizes the proposed approach to deal with measurement noise. The general idea can be described as follows:

\begin{enumerate}
    \item Obtain an approximate model $\tilde{M} $. 
    \item Obtain a zonotope that gives an over-approximation of the model mismatch between the true model and the approximate model $\tilde{M} $, and the term $A_{\text{tr}}V_-$ from data.
\end{enumerate}

\begin{algorithm}[t]
  \caption{LTI-Meas-Reachability}
  \label{alg:LTIMeasReachability}
    \textbf{Input}: input-state trajectories $D = (U_-,Y)$, initial set $\mathcal{X}_{0}$, process noise zonotope $\mathcal{Z}_w$ and matrix zonotope $\mathcal{M}_w$, measurement noise zonotope $\mathcal{Z}_v$ and matrix zonotope $\mathcal{M}_v$, and input zonotope $\mathcal{U}_k$, $\forall k = 0, \dots,N-1$\\
  \textbf{Output}: reachable sets $\hat{\mathcal{R}}_{k}, \forall k = 1, \dots,N$ 
  \begin{algorithmic}[1]
  \State $\tilde{\mathcal{R}}^{\text{m}}_{0} =\mathcal{X}_{0}$
  \State $\tilde{M}  = (Y_+ - C_{\mathcal{M}_v} - C_{\mathcal{M}_w}  )\begin{bmatrix} 
    Y_- \\ U_- 
    \end{bmatrix}^\dagger$ \label{ln:algmeasMdash}
  \State $\overline{AV}  =\max_j \Bigg( {(Y_{+})}_{.,j} - \tilde{M}  \begin{bmatrix}{(Y_{-})}_{.,j}\\ {(U_{-})}_{.,j} \end{bmatrix} \Bigg)$ \label{ln:algmeasAVupper}
\State $\underline{AV}  =\min_j \Bigg( {(Y_{+})}_{.,j} - \tilde{M}  \begin{bmatrix}{(Y_{-})}_{.,j}\\ {(U_{-})}_{.,j} \end{bmatrix} \Bigg)$ \label{ln:algmeasAVlower}
\State $\mathcal{Z}_{AV} = \text{zonotope}(\underline{AV},\overline{AV}) - \mathcal{Z}_w - \mathcal{Z}_v$ \label{ln:algmeasZAV}
  \For{$k = 0:N-1$} 
  \State $\tilde{\mathcal{R}}^{\text{m}}_{k+1} = \tilde{M} \bigg( (\tilde{\mathcal{R}}^{\text{m}}_k + \mathcal{Z}_v) \times \mathcal{U} \bigg) + \mathcal{Z}_{AV} +\mathcal{Z}_w$ \label{ln:algmeasRtilde}
  \EndFor
  \end{algorithmic}
\end{algorithm}

We obtain an approximate model using a least-squares approach as shown in line~\ref{ln:algmeasMdash} of Algorithm~\ref{alg:LTIMeasReachability}. Rewriting \eqref{eq:sys_v} in terms of the available data results in
\begin{align}
    Y_+ - V_+ = (\tilde{M}  + \Delta \tilde{M} ) \begin{bmatrix} Y_- \\ U_- \end{bmatrix} - A_{\text{tr}} V_- + W_-, \label{eq:xpvp}
\end{align}
where $\Delta \tilde{M} $ is the model mismatch, i.e., $\Delta \tilde{M}  =\begin{bmatrix}A_{\text{tr}} & B_{\text{tr}} \end{bmatrix} - \tilde{M} $. Rearranging \eqref{eq:xpvp} 
to have the terms for which we do not have a bound on the left-hand side results in 
\begin{align}
 \Delta \tilde{M}  \begin{bmatrix} Y_- \\ U_- \end{bmatrix} - A_{\text{tr}} V_- =  Y_+ - \tilde{M} \begin{bmatrix} Y_- \\ U_- \end{bmatrix} -  W_- - V_+   . \label{eq:xpvp2}
\end{align}
%
%
We aim to find a zonotope $\mathcal{Z}_{AV}$ such that, $\forall j = 0,...,T-1$,
\begin{align}
 {(Y_{+})}_{.,j} - \tilde{M}  \begin{bmatrix}{(Y_{-})}_{.,j}\\ {(U_{-})}_{.,j} \end{bmatrix} -  {(W_-)}_{.,j} - {(V_+)}_{.,j} \in \mathcal{Z}_{AV}. \label{eq:ZAV}
\end{align}
To do so, we compute $\overline{AV}$ and $\underline{AV}$ in lines~\ref{ln:algmeasAVupper} and \ref{ln:algmeasAVlower}, respectively. Then, $\mathcal{Z}_{AV}$ is computed in line~\ref{ln:algmeasZAV}. 
Given that $ \Delta \tilde{M} \begin{bmatrix} {(Y_-)}_{.,j} \\ {(U_-)}_{.,j} \end{bmatrix}- A_{\text{tr}} {(V_-)}_{.,j}  \in \mathcal{Z}_{AV}$, $\forall~j~=~0,\dots,T-1$, and $X_+ = Y_+ - V_+$, we rewrite \eqref{eq:xpvp} in terms of sets starting from $\tilde{\mathcal{R}}^{\text{m}}_{0}=\mathcal{X}_0$ as shown in line~\ref{ln:algmeasRtilde}.

\begin{remark}
Algorithm~\ref{alg:LTIMeasReachability} provides a practical approach for computing the reachable set from noisy data (including process and measurement noise). In order to guarantee that the resulting reachable set is indeed an over-approximation of the true reachable set, one would need to assume that the data contains the upper and lower bounds on 
$  \Delta \tilde{M}  \begin{bmatrix} Y_- \\ U_- \end{bmatrix} - A_{\text{tr}} V_- $. Mathematically speaking, we would require the availability of data points 
at some indices $i_1$ and $i_2$ for which  
\begin{align}
\Delta \tilde{M}  z_{i_1} - A_{\text{tr}} v_{i_1} \geq \Delta \tilde{M}  z - A_{\text{tr}} v,\,\,\forall v\in \mathcal{Z}_v, z \in \mathcal{F}. \nonumber \\
\Delta \tilde{M}  z_{i_2} - A_{\text{tr}} v_{i_2} \leq \Delta \tilde{M}  z - A_{\text{tr}} v,\,\,\forall v\in \mathcal{Z}_v, z \in \mathcal{F}. \nonumber 
\end{align} 
holds, where $z_i = \begin{bmatrix} (X_-)_{\cdot,i} \\ (U_-)_{\cdot,i} \end{bmatrix} \in D$.
However, even if this condition is not satisfied and no formal guarantees can be provided, the above approach showed its potential correctly over-approximating the reachable sets in numerical examples.
\end{remark}

\section{Data-Driven Reachability for Nonlinear Systems}\label{sec:reachnonlinear}

 We consider two classes of nonlinear systems, namely, polynomial systems and Lipschitz nonlinear systems.

\subsection{Polynomial Systems}\label{sec:poly}
We consider next a polynomial discrete-time control system
\begin{align}
    x(k+1) &= f_p(x(k),u(k))+ w(k), 
    \label{eq:sysnonlin_poly}
\end{align}
where $f_p:\mathbb{R}^{n_x}\times\mathbb{R}^{n_u} \rightarrow \mathbb{R}^{n_x}$ is a polynomial nonlinear function. In the interest of clarity, we will sometimes omit $k$ as the argument of signal variables, however, the dependence on $k$ should be understood implicitly. 
Let $n_{z}=n_x+n_u$ and
\begin{align}
z=\begin{bmatrix} x^\t & u^\t \end{bmatrix}^\t=\begin{bmatrix} z_1^\t & \dots & z_{n_{z}}^\t \end{bmatrix}^\t \in \mathbb{R}^{n_{z}}
\end{align}
with a misuse of the notations. By a polynomial system, we mean that $f_p(z) \in \mathbb{R}[z]^{n_x}$ is a polynomial nonlinearity, where $\mathbb{R}[z]^{n_x}$ is an $n_x$-dimensional vector with entries in $\mathbb{R}[z]$, which is the set of all polynomials in the variables $z_1,\dots,z_{n_{z}}$ of some degree $d>0$ given by
\begin{align*}
 f_p^{(i)}(z) = \sum_{j=1}^{m_i} \theta_j z_1^{\alpha_{j,1}} z_2^{\alpha_{j,2}} \dots z_{n_{z}}^{\alpha_{j,n_{z}}} 
\end{align*}
with $m_i$ the number of terms in $f_p^{(i)}(z)$, $\theta_j \in \mathbb{R}$ the coefficients, and $\alpha_j = \begin{bmatrix} \alpha_{j,1} & \dots & \alpha_{j,n_{z}} \end{bmatrix}^\t\in\mathbb{N}_0^{n_{z}}$ the vectors of exponents with $\sum_{i=1}^{n_{z}} \alpha_{j,i} \leq d$, for every $j\in\{1,\dots,m_i\}$. 

We write $f_p(z)$ as follows (see \cite{conf:polyDissipat})
\begin{align}
    f_p(z) &{=} \Theta_{\text{tr}} \, h(z) 
    \label{eq:pcg}
\end{align}
%
%
%
where $h(z) \in \mathbb{R}[z]^{m_a}$ contains at least all the monomials present in $f_p(z)$ and ${\Theta_{\text{tr}} \in \mathbb{R}^{n_x \times m_a}}$ contains the unknown coefficients of the monomials in $h(z)$. These monomials can be included in $h(z)$ if, for instance, the upper bound on the degree of polynomials in $f_p(z)$ is known. Moreover, if the structure of the polynomial function $f_p(z)$ is known, then $h(z)$ contains all the monomials of $f_p(z)$. Similarly to the definition of $\mathcal{N}_\Sigma$ in \eqref{eq:Nsig}, we denote the set of unknown coefficients consistent with the data including the true coefficients $\Theta_{\text{tr}}$ by $\mathcal{N}_\Sigma^p$. 
From \eqref{eq:pcg}, let
\begin{align*}
\Omega=\begin{bmatrix} h(x(0),u(0)) \, \dots \, h(x(T-1),u(T-1))\end{bmatrix}
\end{align*}
Then, the following result computes a set of coefficients that is consistent with the data and includes the true coefficients~$\Theta_{\text{tr}}$.

\begin{algorithm}[t!]
  \caption{Polynomial-Reachability}
  \label{alg:PolyReachability}
  \textbf{Input}: input-state trajectories $D = (U_-,X)$ of the polynomial system in \eqref{eq:sysnonlin_poly}, initial set $\mathcal{X}_{0}$, process noise zonotope $\mathcal{Z}_w$ and matrix zonotope $\mathcal{M}_w$, and input zonotope $\mathcal{U}_k, \forall k = 0, \dots,N-1$. \\
  \textbf{Output}: reachable sets $\hat{\mathcal{R}}_{k}^p, \forall k = 1, \dots,N$
 \begin{algorithmic}[1]
  \State $\hat{\mathcal{R}}_{0}^p =\mathcal{X}_{0}$ \label{ln:R_0}
  \State $\Omega=\begin{bmatrix} h(x(0),u(0)) \, \dots \, h(x(T-1),u(T-1))\end{bmatrix}$
  \State $\mathcal{M}_{\Sigma}^p = (X_{+} - \mathcal{M}_w) \Omega^\dagger$\label{ln:algMsigma}
  \For{$k = 0:N-1$}
  \State $\hat{\mathcal{R}}_{k+1}^p = \mathcal{M}_{\Sigma}^p \, h(\text{int}(\hat{\mathcal{R}}_k^p),\text{int}(\mathcal{U}_{k})) +  \mathcal{Z}_w$ \label{ln:timeupdate}
  \EndFor
  \end{algorithmic}
\end{algorithm}


\begin{lemma}
\label{lm:sigmaM_p}
Given a matrix $\Omega$ of the polynomial system in \eqref{eq:sysnonlin_poly} with a full row rank, then the matrix zonotope 
\begin{align}
    \mathcal{M}_{\Sigma}^p = (X_{+} - \mathcal{M}_w) \Omega^\dagger \label{eq:Msigma}
\end{align} 
 contains all matrices $\Theta$ that are consistent with the data and the noise bound, i.e., $\mathcal{M}_{\Sigma}^p \supseteq \mathcal{N}_{\Sigma}^p$. 
\end{lemma}
\begin{proof}
The proof is similar to Lemma~\ref{lm:sigmaM}. We have from data and the polynomial system in \eqref{eq:sysnonlin_poly}:
\begin{align}
    X_+ = \Theta_{\text{tr}} \Omega  + W_-, \label{eq:xpczw}
\end{align}
where $W_-$ is the noise in the data. We do not know $W_-$ but we can bound it by $W_- \in \mathcal{M}_w$. Hence, rearranging \eqref{eq:xpczw} results in $\mathcal{M}_{\Sigma}^p$ in \eqref{eq:Msigma} where $\Theta_{\text{tr}} \in \mathcal{M}_{\Sigma}^p$ given that $W_- \in \mathcal{M}_w$.
\end{proof}

After computing the set of coefficients that is consistent with the data, the next open question is how to forward propagate the reachable set. In the linear case in Section~\ref{sec:reachlineardis}, we required a linear map and Minkowski sum operations, which are provided by zonotope properties. For polynomial systems, we need to compute monomials of the reachable sets as shown in \eqref{eq:pcg}, which is not possible using zonotopes. Thus, we propose over-approximating the reachable set, represented by a zonotope, by an interval, as it is possible to compute the monomials of an interval set. 

The algorithm is summarized in Algorithm~\ref{alg:PolyReachability}. We first initialize the reachable set $\hat{\mathcal{R}}_{0}^p$ in line~\ref{ln:R_0}. Then, at each time step $k=0,\dots,N-1$, we convert the reachable set and the input set into intervals by writing 
$\text{int}(\hat{\mathcal{R}}_k^p)$ and $\text{int}(\mathcal{U}_{k})$, respectively. Then, we substitute in the list of monomials $h(\text{int}(\hat{\mathcal{R}}_k^p),\text{int}(\mathcal{U}_{k}))$ using interval arithmetic. Then, in line~\ref{ln:timeupdate}, we propagate ahead the estimated set using the  matrix zonotope $\mathcal{M}_{\Sigma}^p$, interval of all monomials $h(\text{int}(\hat{\mathcal{R}}_k^p),\text{int}(\mathcal{U}_{k}))$, and the noise zonotope $\mathcal{Z}_w$. 
%
%

%
%
%
\begin{theorem}
\label{th:reach_lin_poly}
Given a matrix $\Omega$ with a full row rank of the polynomial system in \eqref{eq:sysnonlin_poly}, then the reachable set computed in Algorithm~\ref{alg:PolyReachability} over-approximates the exact reachable set, i.e., $\hat{\mathcal{R}}_{k+1}^p \supseteq \mathcal{R}_{k+1}$.
\end{theorem}
%
%
\begin{proof} Given that $\Theta_{\text{tr}} \in \mathcal{M}_{\Sigma}^p$, $\hat{\mathcal{R}}^p_{k} \subseteq \text{int}(\hat{\mathcal{R}}^p_{k})$, $\mathcal{U}_{k} \subseteq \text{int}(\mathcal{U}_{k})$, and both $\mathcal{R}_k$ and $\hat{\mathcal{R}}_k^p$ start from the same initial set $\mathcal{X}_0$, it holds that $\mathcal{R}_{k+1} \subseteq \hat{\mathcal{R}}_{k+1}^p$.
\end{proof}
%
%
%
\begin{remark}
The condition in Lemma~\ref{lm:sigmaM_p} of requiring $\Omega$ with a full row rank implies that there exists a right-inverse of the matrix $\Omega$. This condition can be easily checked given the data. 
\end{remark}

\begin{remark}
Similar to LTI systems, we can utilize constrained matrix zonotopes to obtain less conservative reachable sets, denoted by $\bar{\mathcal{R}}_{k}^p$, using the improved description of the noise matrix zonotope $\mathcal{N}_w$ and propagating forward using interval arithmetic. Furthermore, we can also include side information
\begin{align*}
    | \bar{Q}^p \Theta_{\text{tr}} - \bar{Y}^p | \leq \bar{R}^p, 
\end{align*}
where $\bar{Q}^p \in \mathbb{R}^{n_s \times n_x}$, $\bar{Y}^p \in \mathbb{R}^{n_s \times m_a}$, and $\bar{R}^p \in \mathbb{R}^{n_s \times m_a}$ are matrices defining the side information which is known to hold for the true system matrix $\Theta_{\text{tr}} \in \mathbb{R}^{n_x \times m_a}$. The reachable sets, denoted by $\bar{\mathcal{R}}^{\text{s},p}_{k}$, while taking the side information into account can be computed similar to LTI systems. This will be evaluated in the evaluation section.
\end{remark}

\subsection{Lipschitz Nonlinear Systems} \label{sec:libs}
We consider a discrete-time Lipschitz nonlinear control system
\begin{align}
    x(k+1) &= f(x(k),u(k))+ w(k). 
    \label{eq:sysnonlin}
\end{align}
We assume in this subsection $f$ to be twice differentiable. A local linearization of \eqref{eq:sysnonlin} is performed by a Taylor series expansion around the linearization point $z^\star=\begin{bmatrix}x^\star \\u^\star \end{bmatrix}$:
\begin{align*}
f(z) =& f(z^\star) + \frac{\partial f(z)}{\partial z}\Big|_{z=z^\star} (z - z^\star)+ \dots 
\end{align*}
The infinite Taylor series \cite{conf:taylor} can be represented by a first-order Taylor series and a Lagrange remainder term $L(z)$ \cite[p.65]{conf:thesisalthoff}, that depends on the model, as follows.
\begin{align}
f(z) = f(z^\star) + \frac{\partial f(z)}{\partial z}\Big|_{z=z^\star} (z - z^\star) 
+ L(z).
\label{eq:linfL}
\end{align}
Since the model is assumed to be unknown, we aim to over-approximate $L(z)$ from data. We rewrite \eqref{eq:linfL} as follows:

\begin{align}
f(x,u) =& f(x^\star,u^\star) + \underbrace{\frac{\partial f(x,u)}{\partial x}\Big|_{x=x^\star,u=u^\star}}_{\tilde{A}} (x - x^\star) \nonumber\\
&+ \underbrace{\frac{\partial f(x,u)}{\partial u}\Big|_{x=x^\star,u=u^\star}}_{\tilde{B}} (u - u^\star) + L(x,u)\nonumber,
\end{align}
i.e.,
\begin{align}
f(x,u) = \begin{bmatrix}f(x^\star,u^\star) & \tilde{A} & \tilde{B}\end{bmatrix} \begin{bmatrix}1\\x-x^\star\\ u-u^\star\end{bmatrix} +L(x,u).
\label{eq:fz_incl}
\end{align}


Algorithm~\ref{alg:LipReachability} shows the proposed approach. We conduct data-driven reachability analysis for nonlinear systems by the following two steps: 
\begin{enumerate}
    \item Obtain an approximate linearized model from the noisy data.
    \item Obtain a zonotope that over-approximates the modeling mismatch together with the Lagrange remainder $L(z)$ for the chosen system. 
\end{enumerate}

\begin{algorithm}[t]
  \caption{Lipschitz-Reachability}
  \label{alg:LipReachability}
  \textbf{Input}: input-state trajectories $D = (U_-,X)$, initial set $\mathcal{X}_{0}$, process noise zonotope $\mathcal{Z}_w$ and matrix zonotope $\mathcal{M}_w$, Lipschitz constant $L^\star$, covering radius $\delta$, and input zonotope $\mathcal{U}_k$, $\forall k = 0, \dots,N-1$\\
  \textbf{Output}: reachable sets $\mathcal{R}^\prime_{k}, \forall k = 1, \dots,N$ 
  \begin{algorithmic}[1]
  \State $\mathcal{R}^\prime_{0} =\mathcal{X}_{0}$
\State $\mathcal{Z}_\epsilon = \zono{0,\textup{diag}({L^\star}^{(1)} \delta/2,\dots,{L^\star}^{(n_x)} \delta/2)}$ \label{ln:alglipZeps}
  \For{$k = 0:N-1$}
    \State $M^\prime = (X_+ - C_{\mathcal{M}_w}) \begin{bmatrix} 
    1_{1 \times T}\\ X_{-}-1 \otimes x^\star(k) \\ U_{-}-1 \otimes u^\star(k) 
    \end{bmatrix}^\dagger$ \label{ln:alglipMtilde}
  \State $\overline{l}  =\max_j \Bigg( {(X_{+})}_{.,j} - M^\prime \begin{bmatrix}1\\ 
{(X_{-})}_{.,j} -  x^\star(k)\\ {(U_{-})}_{.,j} - u^\star(k)\end{bmatrix} \Bigg)$ \label{ln:algliplupper}
\State $\underline{l}  =\min_j \Bigg( {(X_{+})}_{.,j} - M^\prime \begin{bmatrix}1\\ 
{(X_{-})}_{.,j} -  x^\star(k)\\ {(U_{-})}_{.,j} - u^\star(k)\end{bmatrix} \Bigg)$ \label{ln:alglipllower}
\State $\mathcal{Z}_{L} = \text{zonotope}(\underline{l},\overline{l}) - \mathcal{Z}_w$ \label{ln:alglipZl}
  \State $\mathcal{R}^\prime_{k+1} {=} M^\prime \Big(1 \times( \mathcal{R}^\prime_{k} -x^\star )\times (\mathcal{U}_k - u^\star )\Big) +  \mathcal{Z}_w +  \mathcal{Z}_L + \mathcal{Z}_\epsilon$\label{ln:alglipRprime}
  \EndFor
  \end{algorithmic}
\end{algorithm}

To obtain an approximate linearized model, we apply a least-squares approach. Without additional knowledge on $L(z)$ and $w(k) \in \mathcal{Z}_w$ (or $W_- \in \mathcal{M}_w = \langle C_{\mathcal{M}_w}, \tilde{G}_{\mathcal{M}_w}  \rangle$), a best guess in terms of a least-square approach is $M^\prime$ in line~\ref{ln:alglipMtilde} of Algorithm~\ref{alg:LipReachability}. To over-approximate the remainder term $L(z)$ from data, we need to assume that $f$ is Lipschitz continuous for all $z$ in the reachable set $\mathcal{F}$ as defined in \eqref{eq:F}.



\begin{assumption}
It holds that $f: \mathcal{F} \rightarrow \mathbb{R}^{n_x}$ is Lipschitz continuous, i.e., that there is some $L^\star \geq 0$ such that 
$\| f(z) - f(z^{\prime}) \|_2 \leq L^\star \| z - z^{\prime}\|_2$
holds for all $z, z^{\prime} \in \mathcal{F}$.
\label{as:lipschitz}
\end{assumption}

For data-driven methods of nonlinear systems, Lipschitz continuity is a common assumption (e.g. \cite{conf:montenbruckLipschitz,conf:novaraLipschitz}). 
%
%
By compactness of $\mathcal{U}_k$, $\mathcal{R}_k$, $k=0,\dots,N$, also $\mathcal{F}$ is compact. Therefore, the data points $D = (U_-,X)$ are relatively dense in $\mathcal{F}$ such that for any $z \in \mathcal{F}$  there exists a $z_i = \begin{bmatrix} (X_-)_{\cdot,i} \\ (U_-)_{\cdot,i} \end{bmatrix} \in D$ such that $\|z - z_i \| \leq \delta$. The quantity $\delta$ is sometimes referred to as the covering radius or the dispersion. The following theorem proves the over-approximation of the reachable sets $\mathcal{R}^\prime_{k}$ out of Algorithm~\ref{alg:LipReachability} for the exact reachable sets $\mathcal{R}_k$.




\begin{theorem}
\label{th:reachdisnonlin}
Given data $D = (U_-,X)$ from a system in \eqref{eq:sysnonlin}, then the reachable set computed in Algorithm~\ref{alg:LipReachability} over-approximates the exact reachable set, i.e., $\mathcal{R}_{k} \subseteq \mathcal{R}^\prime_{k}$.
\end{theorem}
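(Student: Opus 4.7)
The plan is to proceed by induction on $k$, mirroring the inductive structure in the proofs of Theorems~\ref{th:reach_lin} and \ref{th:reach_lin_cmz}. The base case is immediate since line~1 of Algorithm~\ref{alg:LipReachability} sets $\mathcal{R}^\prime_0 = \mathcal{X}_0 = \mathcal{R}_0$. For the inductive step, I assume $\mathcal{R}_k \subseteq \mathcal{R}^\prime_k$ and pick an arbitrary $x(k{+}1) \in \mathcal{R}_{k+1}$, so there exist $x(k) \in \mathcal{R}_k \subseteq \mathcal{R}^\prime_k$, $u(k) \in \mathcal{U}_k$, $w(k) \in \mathcal{Z}_w$ with $x(k{+}1) = f(x(k),u(k)) + w(k)$.

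Let $z = [x(k)^\top, u(k)^\top]^\top$ and denote by $\tilde{M}(z) = M^\prime [1; \, x(k)-x^\star;\, u(k)-u^\star]$ the affine surrogate coming from line~\ref{ln:alglipMtilde}. The key decomposition is
\begin{equation*}
x(k{+}1) \;=\; \tilde{M}(z) \;+\; \underbrace{\bigl(f(z) - \tilde{M}(z)\bigr)}_{=: \, r(z)} \;+\; w(k).
\end{equation*}
By the inductive hypothesis, $\tilde M(z) \in M^\prime(1 \times \mathcal{R}^\prime_k \times \mathcal{U}_k)$, and trivially $w(k) \in \mathcal{Z}_w$. What remains is to show that the residual $r(z)$ lies in $\mathcal{Z}_L + \mathcal{Z}_\epsilon$, matching the remaining two summands in the update of line~\ref{ln:alglipRprime}.

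To bound $r(z)$, I would compare $z$ to a nearby data point. Using compactness of $\mathcal{F}$ defined in \eqref{eq:F} together with the covering radius $\delta$, I pick an index $i$ with $z_i = \bigl[(X_-)_{\cdot,i};\,(U_-)_{\cdot,i}\bigr]$ satisfying $\|z - z_i\| \leq \delta$. At the data point $z_i$, the observed residual computed in lines~\ref{ln:algliplupper}-\ref{ln:alglipllower} satisfies
\begin{equation*}
(X_+)_{\cdot,i} - \tilde M(z_i) \;=\; f(z_i) + \hat{w}_i - \tilde M(z_i) \;=\; r(z_i) + \hat{w}_i \;\in\; \mathrm{zonotope}(\underline{l},\overline{l}),
\end{equation*}
and since $\hat w_i \in \mathcal{Z}_w$, subtraction yields $r(z_i) \in \mathrm{zonotope}(\underline{l},\overline{l}) + (-\mathcal{Z}_w) = \mathcal{Z}_L$. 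For the transport from $z_i$ to $z$, I write
\begin{equation*}
r(z) - r(z_i) \;=\; \bigl(f(z)-f(z_i)\bigr) - \bigl(\tilde M(z) - \tilde M(z_i)\bigr),
\end{equation*}
and bound the componentwise magnitude by $L^\star \delta$ via Assumption~\ref{as:lipschitz}, so that $r(z) - r(z_i) \in \mathcal{Z}_\epsilon$ as defined in line~\ref{ln:alglipZeps}. Concluding, $r(z) \in \mathcal{Z}_L + \mathcal{Z}_\epsilon$, which gives $x(k{+}1) \in \mathcal{R}^\prime_{k+1}$ and closes the induction.

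The main obstacle I anticipate is the last bounding step: strictly speaking, $L^\star$ is the Lipschitz constant of $f$, whereas the residual $r = f - \tilde M$ has Lipschitz constant at most $L^\star + \|M^\prime\|_{\mathrm{op}}$. Either (i) Assumption~\ref{as:lipschitz} should be read as providing a Lipschitz bound on the residual $r$ directly, or (ii) $\mathcal{Z}_\epsilon$ should be inflated by $\|M^\prime\|_{\mathrm{op}}\delta$ to cover both contributions. I would flag this subtlety and take the interpretation most consistent with the rest of the paper; once that is pinned down, the remainder of the argument is a clean chaining of Lipschitz continuity, the covering property of the data in $\mathcal{F}$, and the set-arithmetic inclusions introduced in Section~\ref{sec:pb}.
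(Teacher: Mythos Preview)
Your approach mirrors the paper's proof in its overall structure: decompose $f(z)$ into the affine surrogate plus a residual, bound the residual at data points via $\mathcal{Z}_L$, then transport to an arbitrary $z\in\mathcal{F}$ using the covering radius $\delta$ and Lipschitz continuity to pick up $\mathcal{Z}_\epsilon$. The one substantive difference is where the surrogate is evaluated. The paper evaluates $M'$ at the \emph{data point} $z_i$, writing
\[
f(z_i)\in M'\begin{bmatrix}1\\ z_i-z^\star\end{bmatrix}+\mathcal{Z}_L,\qquad
\|f(z)-f(z_i)\|\le L^\star\delta,
\]
so that only the Lipschitz constant of $f$ itself enters, and your concern about the extra $\|M'\|_{\mathrm{op}}\delta$ never arises. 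You, by contrast, evaluate $M'$ at $z$ (which is what the induction actually needs, since $z\in\mathcal{R}'_k\times\mathcal{U}_k$) and then must bound $r(z)-r(z_i)$, which drags in the Lipschitz constant of $M'$.

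The subtlety you flag is real, and it is worth noting that the paper's route does not fully escape it either: having shown $f(z)\in M'[1;\,z_i-z^\star]+\mathcal{Z}_L+\mathcal{Z}_\epsilon$, the paper tacitly needs $M'[1;\,z_i-z^\star]\in M'(1\times\mathcal{R}'_k\times\mathcal{U}_k)$ to close the induction, which is not guaranteed since $z_i$ is merely $\delta$-close to $z$ and need not itself lie in $\mathcal{R}'_k\times\mathcal{U}_k$. So both arguments leave essentially the same $\|M'\|_{\mathrm{op}}\delta$-sized gap, just surfaced at different points; your proposal is simply more explicit about it.
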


\begin{proof}
We know from~\eqref{eq:fz_incl} that
\begin{align*}
    f(z) = (M^\prime + \Delta M^\prime) \begin{bmatrix} 1 \\ z - z^\star \end{bmatrix} + L(z),
\end{align*}
where $\Delta M^\prime$ captures the model mismatch defined by $\Delta M^\prime = \begin{bmatrix} f(z^\star) & \tilde{A} & \tilde{B} \end{bmatrix} - M^\prime$.
Hence, we need to show that $\mathcal{Z}_L + \mathcal{Z}_\epsilon$ over-approximates the modeling mismatch and the term $L(z)$, i.e.,
\begin{align*}
    \Delta M^\prime \begin{bmatrix} 1 \\ z - z^\star \end{bmatrix} + L(z) \in \mathcal{Z}_L + \mathcal{Z}_\epsilon
\end{align*}
for all $z \in \mathcal{F}$. We start by proving for the available data  $z_i \in D = (U_-,X)$ then generalize to $z \in \mathcal{F}$. We know that for all $z_i \in D = (U_-,X)$ and $(W_-)_{\cdot, i} \in \mathcal{Z}_w$, it holds that
\begin{align*}
    (X_+)_{\cdot,i} - (W_-)_{\cdot, i} = (M^\prime + \Delta M^\prime) \begin{bmatrix} 1 \\ z_i - z^\star \end{bmatrix} + L(z_i),
\end{align*}
which implies
\begin{align}
     (X_+)_{\cdot,i}  - M^\prime \begin{bmatrix} 1 \\ z_i - z^\star \end{bmatrix} \in
     \Delta M^\prime \begin{bmatrix} 1 \\ z_i - z^\star \end{bmatrix} + L(z_i) + \mathcal{Z}_w.
    \label{eq:deltamindata}
\end{align}
Next, we aim to find one zonotope $\mathcal{Z}_L$ that over-approximates $\Delta M^\prime \begin{bmatrix} 1 \\ z_i - z^\star \end{bmatrix} + L(z_i)$ for all the data points, i.e.,  $\forall z_i \in D$
\begin{align*}
 (X_+)_{\cdot,i} - M^\prime \begin{bmatrix} 1 \\ z_i - z^\star \end{bmatrix}
 \in \mathcal{Z}_L + \mathcal{Z}_w.    
\end{align*}
This can be done by finding the upper bound ($\overline{l}$ in line~\ref{ln:algliplupper}) and lower bound ($\underline{l}$ in line~\ref{ln:alglipllower}) from data and hence $\mathcal{Z}_L$ in line~\ref{ln:alglipZl}. Thus, we can over-approximate the model mismatch and the nonlinearity term for all data points $z_i \in D = (U_-,X)$, $i=0,1,\dots,T$, by
\begin{align*}
    f(z_i) \in M^\prime \begin{bmatrix} 1 \\ z_i - z^\star  \end{bmatrix} + \mathcal{Z}_L.
\end{align*}
Given the covering radius $\delta$ of our system together with Assumption~\ref{as:lipschitz}, we know that for every $z \in \mathcal{F}$, there exists a $z_i \in D = (U_-,X)$ such that
    $\| f(z) - f(z_i) \| \leq L^\star \| z - z_i \| \leq L^\star \delta$.
This yields 
\begin{align*}
    f(z) \in M^\prime \begin{bmatrix} 1 \\ z_i - z^\star \end{bmatrix} +  \mathcal{Z}_L + \mathcal{Z}_\epsilon,
\end{align*}
with $\mathcal{Z}_\epsilon = \zono{0,\text{diag}(L^\star \delta/2,\dots,L^\star \delta/2)}$.
\end{proof}
For an infinite amount of data, i.e., $\delta \rightarrow 0$, we can see that $\mathcal{Z}_\epsilon \rightarrow 0$, i.e., the formal $\mathcal{Z}_L$ then fully captures the modeling mismatch and the Lagrange reminder. 
Also, we would like to note that our approach works with any type of right inverse. 

\begin{remark}
\label{rem:approx}
Note that determining $L^\star$ as well as computing $\delta$ is non-trivial in practice. If we assume that the data is evenly spread out in the compact input set of $f$, then the following can be a good approximation of the upper bound on $L^\star$ and $\delta$ for each dimension $o$:
\begin{align*}
     \hat{L}^{\star^{(o)}} &= \max_{z_i, z_j \in D, i\neq j}  \frac{\| f^{(o)}(z_i) - f^{(o)}(z_j) \|}{\| z_i - z_j \|} \\
    \hat{\delta} &= \max_{z_i \in D} \min_{z_j \in D, j \neq i} \| z_i - z_j \|.
\end{align*}
Computing the Lipschitz constant for each dimension decreases the conservatism, especially when the data has a different scale for each dimension. Other methods to calculate the Lipschitz constant $L^\star$ can be found in~\cite{conf:montenbruckLipschitz,conf:novaraLipschitz}, and a sampling strategy to obtain a specific $\delta$ is introduced in~\cite{conf:montenbruckLipschitz}.

Furthermore, note that from the proof of the above theorem, we see that for every reachability step, $k=1, \dots, N$, local information on $L^\star$ and $\delta$ in the set $\mathcal{R}_k \times \mathcal{U}_k $ can be used, if available, to reduce conservatism. 
\end{remark}
\begin{remark}
The Lipschitz constant $L^\star$ and the covering radius $\delta$ are required to hold within the set $\mathcal{F}$. In practice, however, $\mathcal{F}$ is not known a priori. However, any over-approximation on $\mathcal{F}$ would also be sufficient in this sense. Taking any over-approximation would lead to the same guarantees but might result in larger required data sets or a more conservative reachability analysis. This requirement also makes sense on an intuitive level: We need data from all regions of significance for the reachability analysis in the general case of nonlinear systems.
\end{remark}

\begin{remark}
We choose the linearization points as the center of the current input zonotope $\mathcal{U}_k$ and state zonotope $\mathcal{R}^\prime_k$, and we repeat the linearization at each time step $k$. In model-based reachability analysis, the optimal linearization point is the center of the current state and input zonotopes as proved in \cite[Corollary 3.2]{conf:thesisalthoff}, which then minimizes the set of Lagrange remainders. Therefore, choosing the center of the current input and state zonotopes as linearization points is a natural choice, but the theoretical results are independent of this choice.
\end{remark}

\section{Evaluation}\label{sec:eval}

In this section, we apply the computational approaches to over-approximate the reachable exact sets from data. Firstly, we consider simulative data from a discrete-time LTI system, a polynomial system, and a nonlinear discrete-time system. Then, we collected real-world data from an autonomous car and performed the respective experiments. Throughout the section, we misuse the notation $\mathcal{R}_k$ to denote also the model-based reachable set. 
%
\begin{figure*}[!htbp]
\vspace{-0.05cm}
    \centering
    \begin{subfigure}[h]{0.32\textwidth}
     \centering
        \includegraphics[scale=0.26]{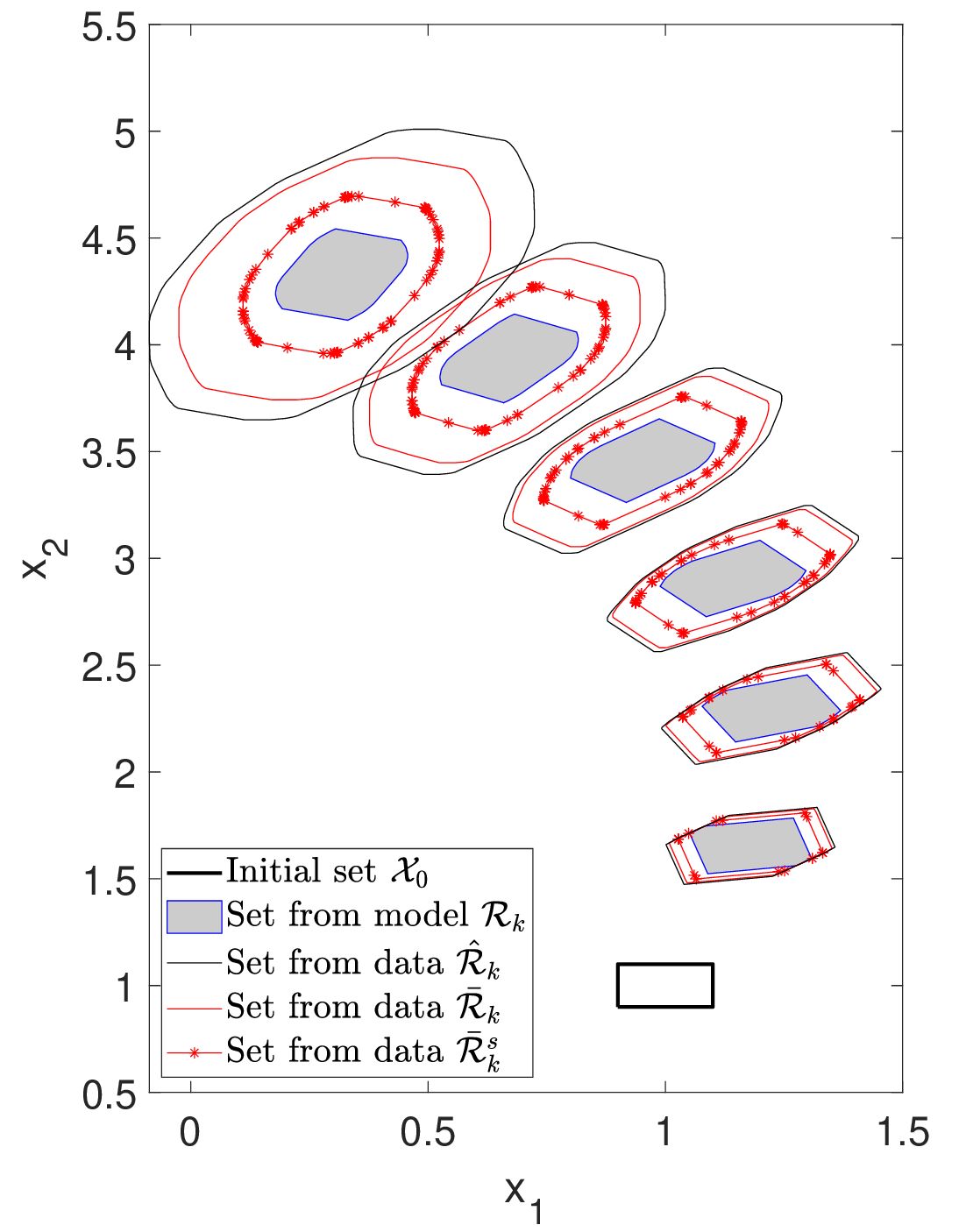}
        \caption{}
        \label{fig:x1x2sidew0-005-In10S3-red200}
    \end{subfigure}
    \begin{subfigure}[h]{0.32\textwidth}
     \centering
        \includegraphics[scale=0.26]{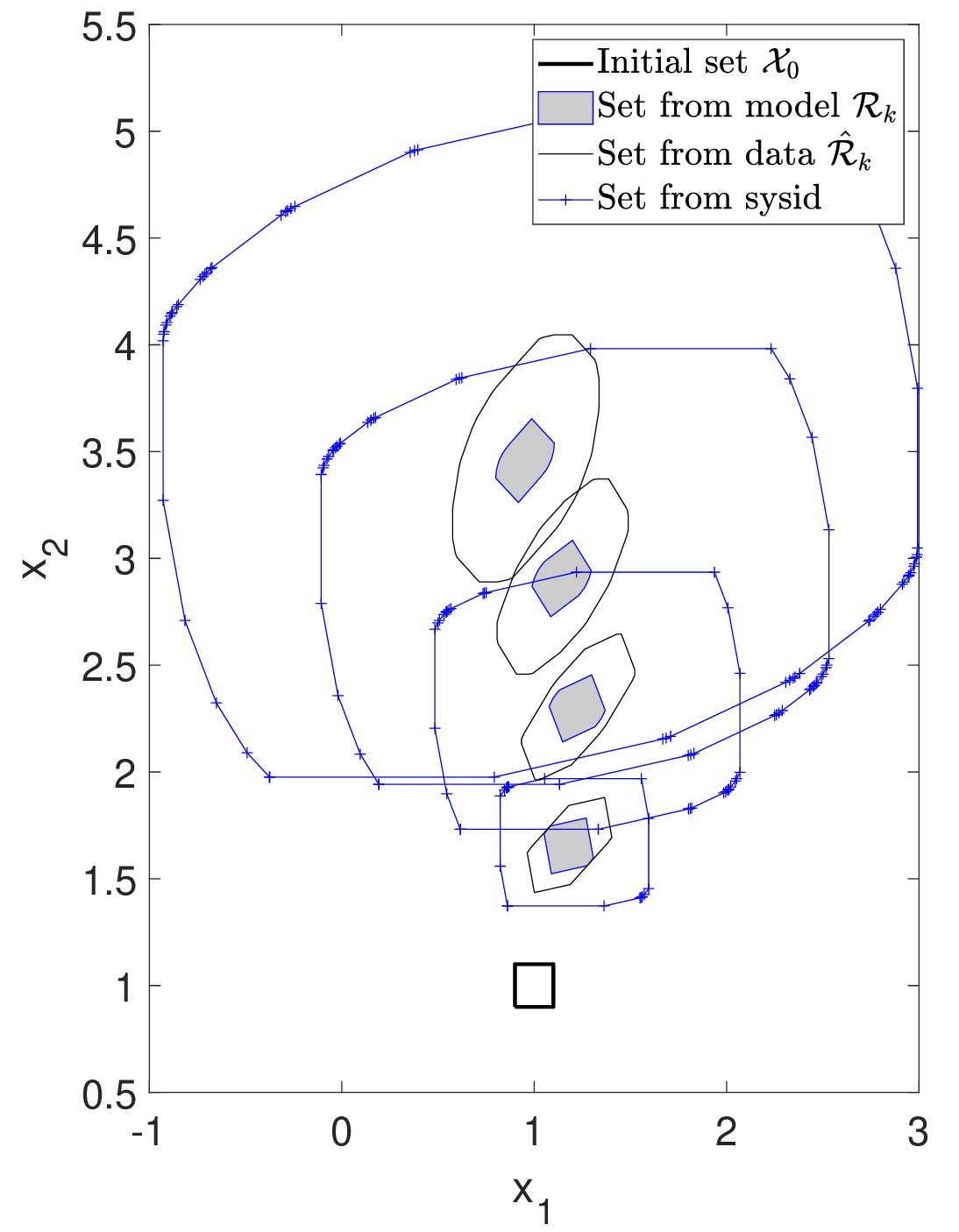}
        \caption{}
        \label{fig:x1x2_sysid}
    \end{subfigure}
    \begin{subfigure}[h]{0.32\textwidth}
     \centering
        \includegraphics[scale=0.26]{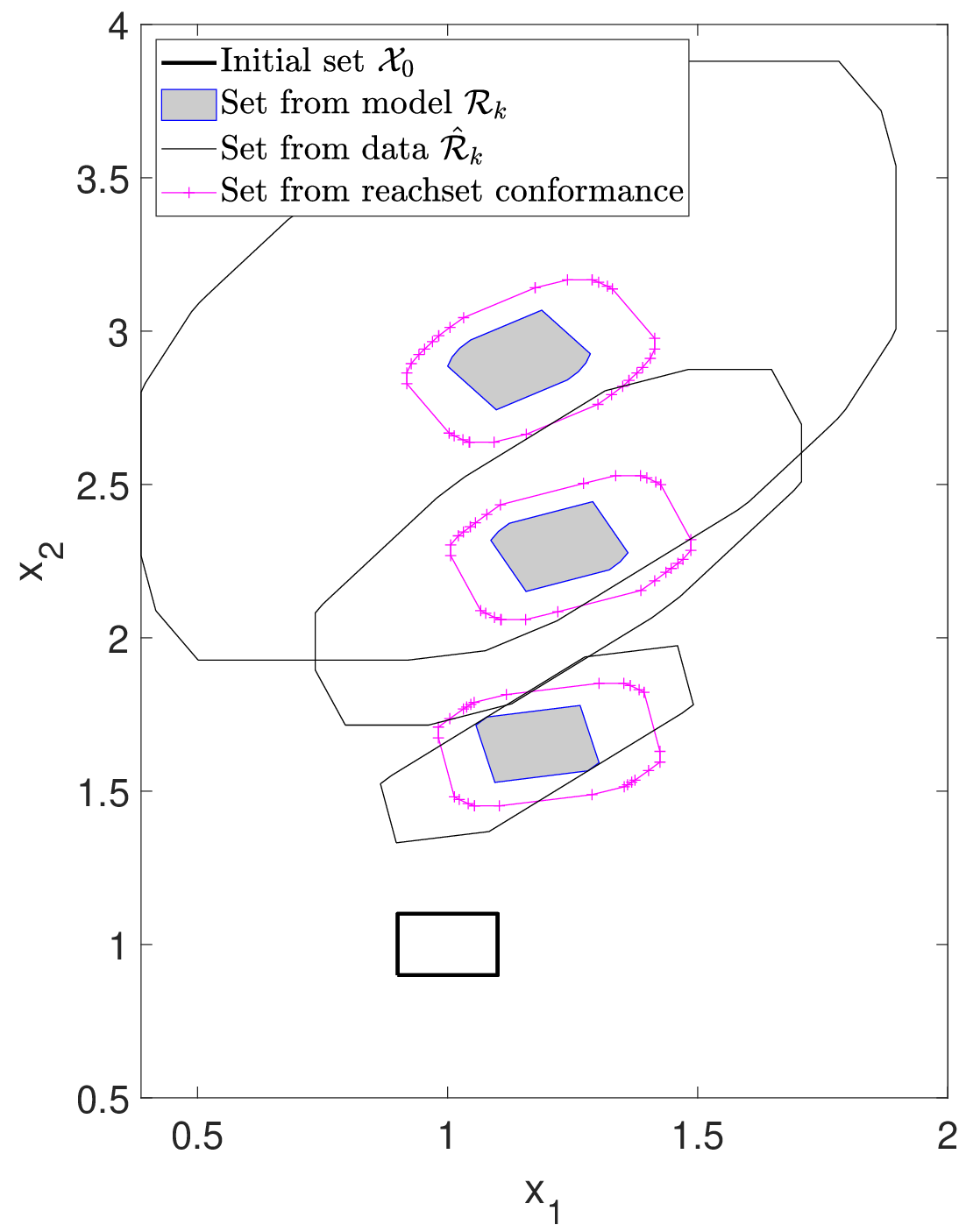}
        \caption{}
        \label{fig:reachconfx12}
    \end{subfigure}
\caption{The projection of the reachable sets of the LTI system in \eqref{eq:sysexamplediscrete} computed via Algorithm~\ref{alg:LTIreach} ($\hat{\mathcal{R}}_k$), Algorithm~\ref{alg:LTIConstrainedReachability} ($\bar{\mathcal{R}}_k$), and Algorithm~\ref{alg:LTISideInfoReachability} ($\bar{\mathcal{R}}_k^{\text{s}}$) from noisy input-state data is presented in (a). We compute in (b) and (c) the reachable sets via standard subspace system identification method (\textit{N4SID}) including $2\sigma$ uncertainty bound in the analysis and via synthesizing a reachset conformant model \cite{Kochdumper2020}, respectively. For better comparison, we also add $\hat{\mathcal{R}}_k$ from (a) to (b) and (c) (as the size required differently scaled axis).}
    \label{fig:contreach}
     \vspace{-6mm}
\end{figure*}

\subsection{LTI Systems}
To demonstrate the usefulness of the presented approach, we consider the reachability analysis of a five-dimensional system which is a discretization of the system used in \cite[p.39]{conf:thesisalthoff} with sampling time $0.05$ sec. 
The system has the following model.
\begin{align}
\begin{split}
A_{\text{tr}}&=\begin{bmatrix}
    0.9323 &  -0.1890   &      0   &      0   &      0 \\
    0.1890 &   0.9323  &       0  &       0   &      0 \\
         0 &        0  &  0.8596  &   0.0430  &        0 \\
         0 &         0   & -0.0430    & 0.8596      &    0 \\
         0 &         0  &        0    &      0   &  0.9048
\end{bmatrix}, \\
B_{\text{tr}}&=\begin{bmatrix} 
    0.0436&
    0.0533&
    0.0475&
    0.0453&
    0.0476
    \end{bmatrix}^\t.
\end{split}
\label{eq:sysexamplediscrete}
\end{align}
The initial set is chosen to be $\mathcal{X}_0=\zono{1,0.1 I}$ where $1$ and $I$ are vectors of one and the identity matrix, respectively. The input set is $\mathcal{U}_k=\zono{10,0.25}$. We consider computing the reachable set when there is a random noise sampled from the zonotope  $\mathcal{Z}_w=\zono{0,\begin{bmatrix}0.005 \, \dots \, 0.005\end{bmatrix}^T}$. Three trajectories of length 10 ($T=30$) are used as input data $D$. We present in Fig.~\ref{fig:x1x2sidew0-005-In10S3-red200} the projections of the following reachable sets on the first two dimensions:


\begin{itemize}
    \item The true model based reachable sets $\mathcal{R}_k$. 
    \item The reachable sets $\hat{\mathcal{R}}_k$ computed via Algorithm~\ref{alg:LTIreach} using matrix zonotopes.
    \item The reachable sets $\bar{\mathcal{R}}_k$ computed via Algorithm~\ref{alg:LTIConstrainedReachability} using constrained matrix zonotopes.
    \item The reachable sets $\bar{\mathcal{R}}_k^{\text{s}}$ utilizing the states decoupling as a side information computed via Algorithm~\ref{alg:LTISideInfoReachability}. The used parameters are $\bar{Q} = I$, $\bar{Y}=0$ and 
    \begin{align*}
       \bar{R}&=\begin{bmatrix}
    1 &  1   &      0.001   &      0.001   &      0.001  &1\\
    1 &   1  &       0.001  &       0.001   &      0.001 &1\\
         0.001 &        0.001  &  1  &   1  &        0.001 &1\\
         0.001 &         0.001   & 1    & 1      &   0.001 &1\\
         0.001 &         0.001  &        0.001    &      0.001   &  1&1
\end{bmatrix}. 
    \end{align*}
\end{itemize}

We compare the different data-driven reachability results to the true reachable set computed via model-based reachability analysis given the exact underlying model. Consistent with the theoretical analysis and guarantees derived in this work, the data-driven reachability results correctly over-approximate the true reachable sets at all times. 

We measured the execution time of the proposed algorithms in comparison to the model-based algorithm, which is done by the linear map and Minkowski sum operations \cite[p.17]{conf:thesisalthoff}. The experiments were done on a processor 11$^{th}$ Generation Intel(R) Core(TM) i7-1185G7 with 16.0 GB RAM. Table~\ref{tab:execTimeLinear} shows the execution time in minutes. Analysing Table~\ref{tab:execTimeLinear} and Fig.~\ref{fig:x1x2sidew0-005-In10S3-red200} shows a trade-off between the size of the reachable sets and the execution time. 

\begin{table}[t!]
\caption{Execution time in minutes for reachability analysis of the LTI system.}
\label{tab:execTimeLinear}
\centering
\normalsize
\begin{tabular}{c  c  }
\toprule
 Algorithm & Execution time \\
\midrule
Model-based reachability & $1.178 \times 10^{-04}$ \\
Algorithm~\ref{alg:LTIreach} & $1.366\times 10^{-04}$\\
Algorithm~\ref{alg:LTIConstrainedReachability}& $0.208$\\
Algorithm~\ref{alg:LTISideInfoReachability}& $0.325$\\
\bottomrule
\end{tabular}
\vspace{-5mm}
\end{table}

Furthermore, we compare the data-driven reachability results with one standard system identification method and apply state-of-the-art reachability analysis with the identified model. More specifically, we apply the \textit{N4SID} subspace identification algorithm \cite{conf:sysid} to the noisy data, and show the zonotope describing the resulting reachable set corresponding to the $2\sigma$ uncertainty bound in Fig.~\ref{fig:x1x2_sysid}. In comparison, the \textit{N4SID} reachable sets are quite conservative. We also compare our algorithms with the reachset conformance technique~\cite{Kochdumper2020} in Fig.~\ref{fig:reachconfx12}. We added a high amount of noise in the data by assuming the noise zonotope to be $\mathcal{Z}_w=\zono{0,\begin{bmatrix} 0.03 \, \dots \, 0.03 \end{bmatrix}^T}$ to quantify the main differences between the two approaches in Fig.~\ref{fig:reachconfx12}. Our approach comes with robust guarantees; however, it is conservative, especially with a high amount of noise. The reach conformance can't guarantee state inclusion in the computed set for the unseen measurements, which is guaranteed in our approach.  


Next, we consider the same problem setup but with additional measurement noise in the data as described in \eqref{eq:sys_v}, where $\mathcal{Z}_w=\zono{0,\begin{bmatrix}0.005 \, \dots \, 0.005\end{bmatrix}^T}$, $\mathcal{Z}_v=\zono{0,\begin{bmatrix}0.002 \, \dots \, 0.002 \end{bmatrix}^T}$, and $\mathcal{M}_o = \mathcal{M}_v - A_{\text{tr}} \mathcal{M}_v$ as an assumed a priori known over-approximation. The results of applying the approaches introduced in Section~\ref{sec:reachlineardis} can be seen in Fig.~\ref{fig:x1x2meas_v0_002w0-005-In10S2-red390} which shows the following sets



\begin{figure*}[!htbp]
\vspace{-0.05cm}
    \centering
    \begin{subfigure}[h]{0.32\textwidth}
     \centering
        \includegraphics[scale=0.26]{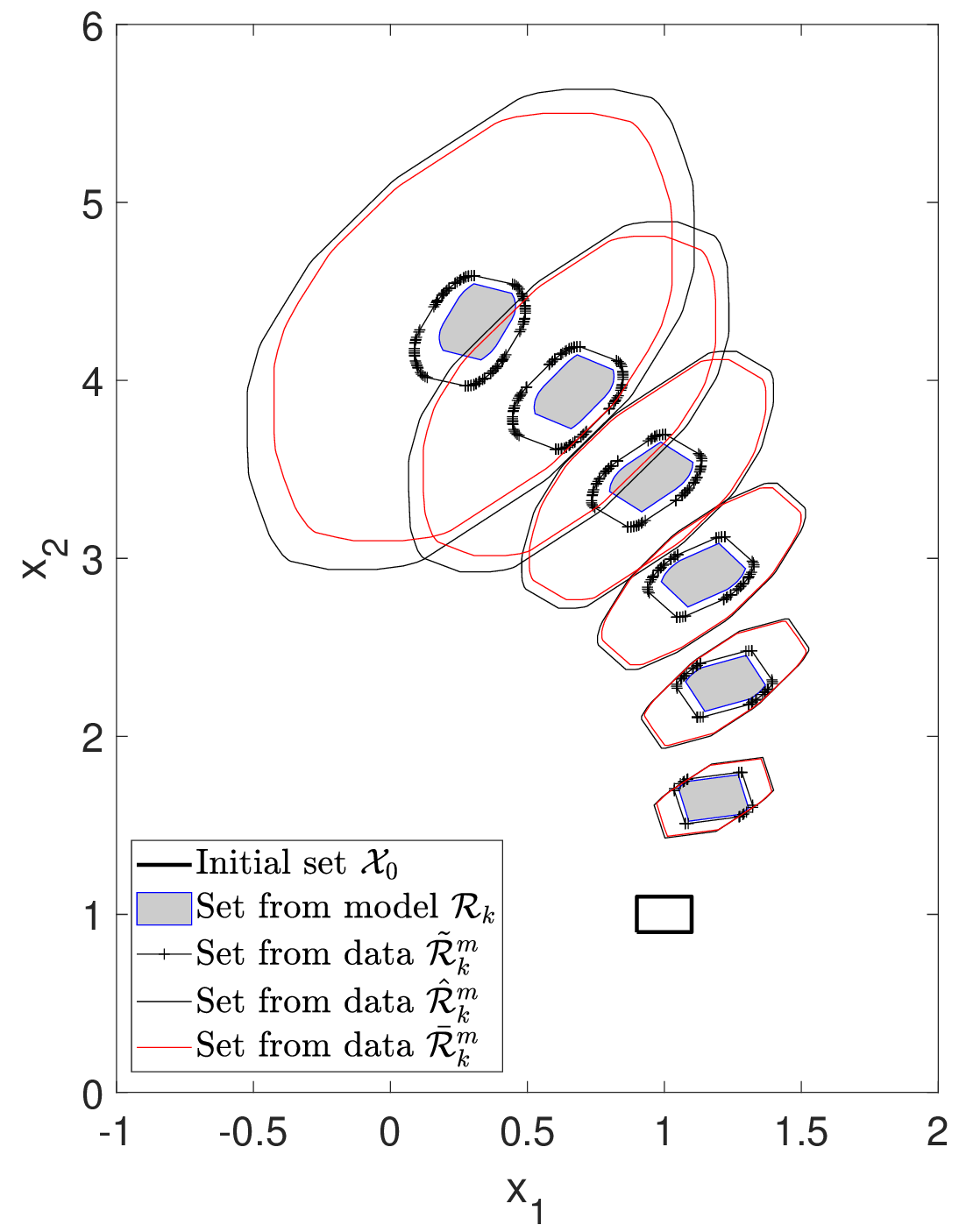}
        \caption{}
        \label{fig:x1x2meas_v0_002w0-005-In10S2-red390}
    \end{subfigure}
    \begin{subfigure}[h]{0.32\textwidth}
     \centering
        \includegraphics[scale=0.26]{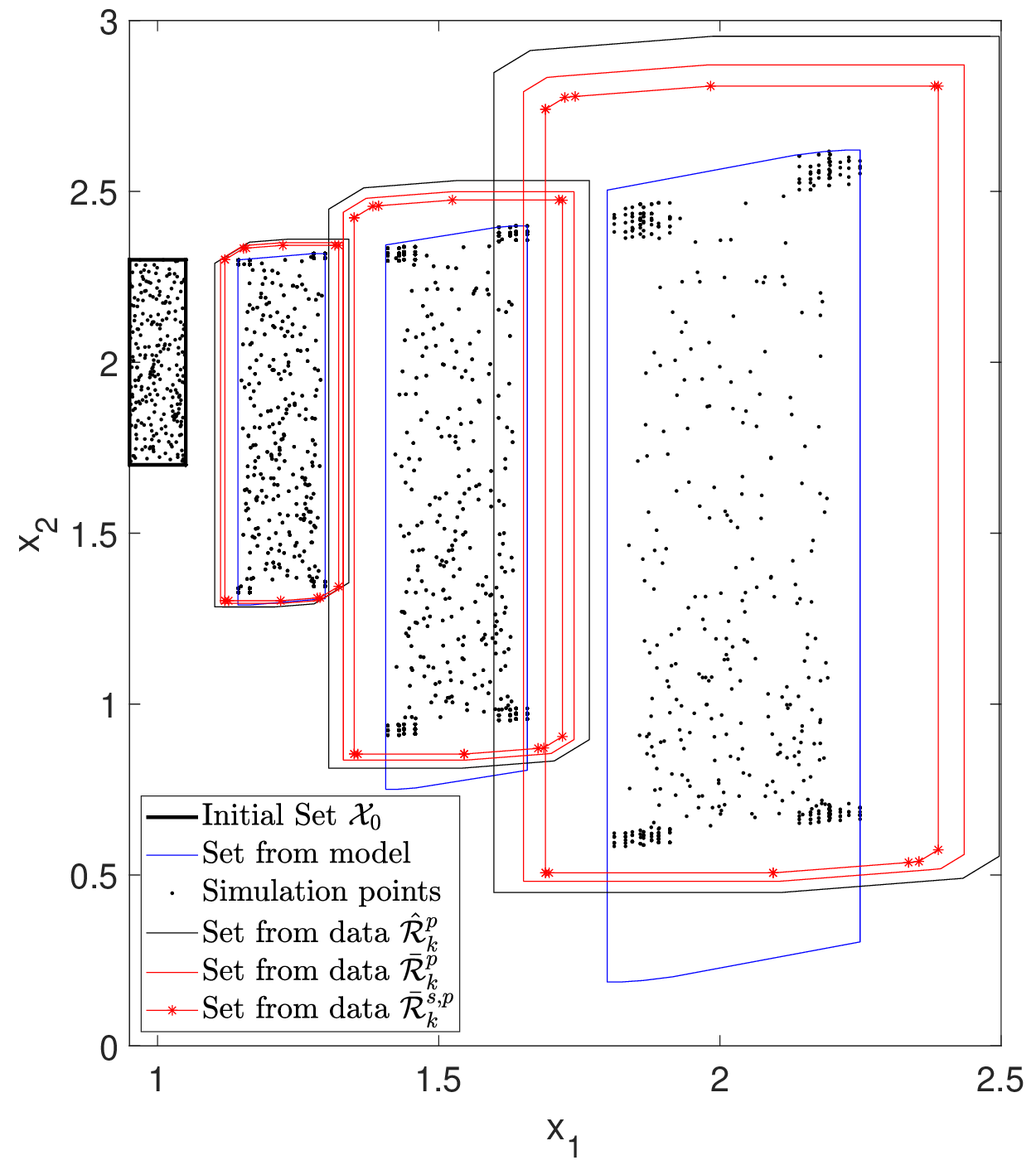}
        \caption{}
        \label{fig:poly_jor}
    \end{subfigure}
    \begin{subfigure}[h]{0.32\textwidth}
     \centering
        \includegraphics[scale=0.26]{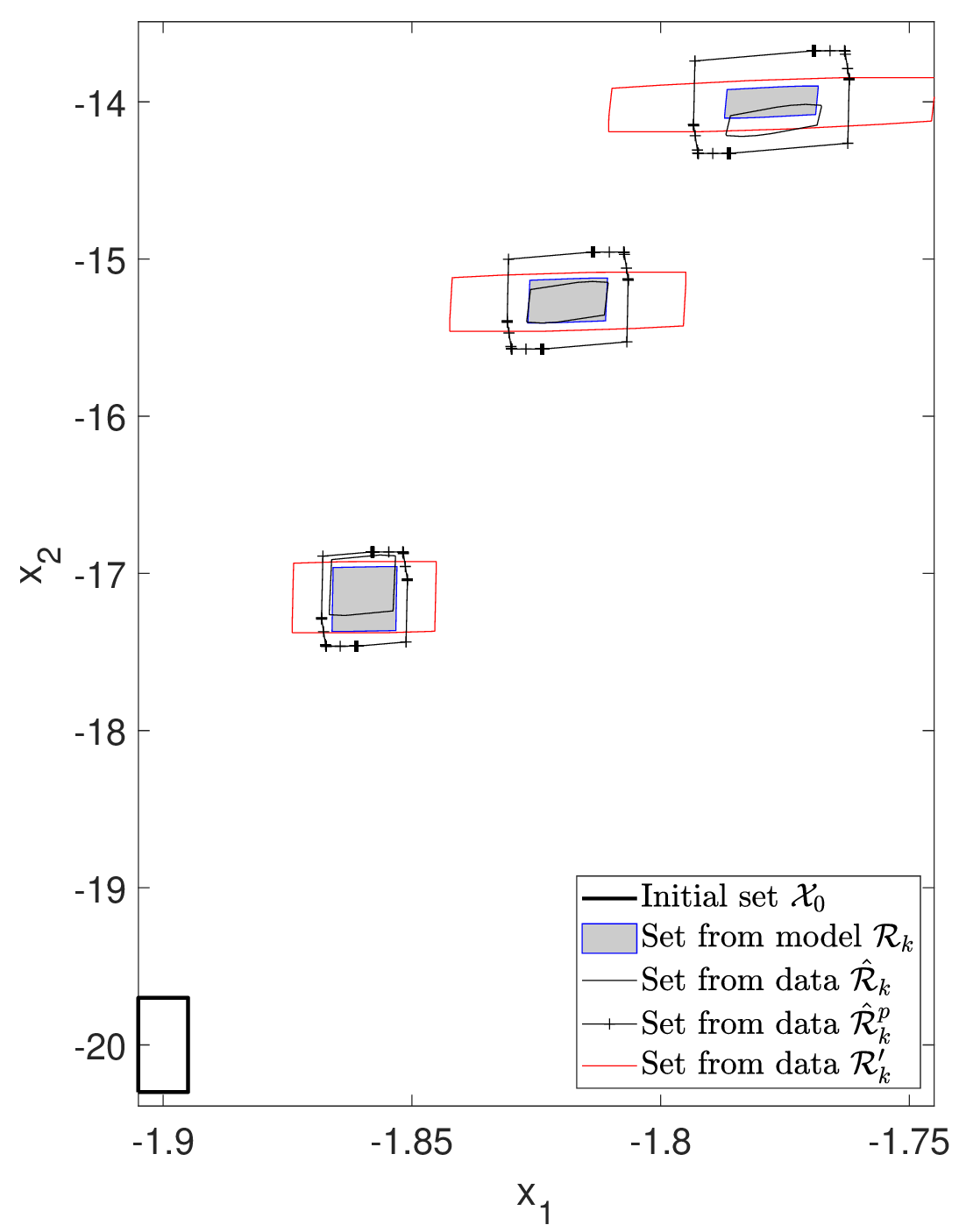}
        \caption{}
        \label{fig:singleex}
    \end{subfigure}
\caption{The projection of the reachable sets of the LTI system in \eqref{eq:sysexamplediscrete} from noisy input-state data with additional measurement noise computed via the proposed approaches in Propositions~\ref{prop:meas_zono_av} ($\hat{\mathcal{R}}_k^{\text{m}}$), Proposition~\ref{prop:meas_cmz_av} ($\bar{\mathcal{R}}_k^{\text{m}}$), 
  as well as the proposed practical approach from Algorithm~\ref{alg:LTIMeasReachability}  ($\tilde{\mathcal{R}}_k^{\text{m}}$) is presented in (a). The reachable sets in (b) of the polynomial system in \eqref{eq:pnonlinearexample} are computed using Algorithm~\ref{alg:PolyReachability} ($\hat{\mathcal{R}}^p_k$) and polynomial variants of Algorithm~\ref{alg:LTIConstrainedReachability} ($\bar{\mathcal{R}}^p_k$) and Algorithm~\ref{alg:LTISideInfoReachability} ($\bar{\mathcal{R}}_k^{\text{s},p}$). The reachable sets in (c) of the nonlinear tank system \cite{conf:nonlinearexample} from noisy measurement are computed using Algorithm~\ref{alg:LTIreach} ($\hat{\mathcal{R}}_k$), Algorithm~\ref{alg:PolyReachability} ($\hat{\mathcal{R}}^p_k$) and Algorithm~\ref{alg:LipReachability} ($\mathcal{R}_k^\prime$).}
    \label{fig:contreach}
     \vspace{-6mm}
\end{figure*}

\begin{itemize}
\item The true model based reachable sets $\mathcal{R}_k$.
\item The reachable sets $\hat{\mathcal{R}}_k^{\text{m}}$ using matrix zonotopes as introduced in Proposition~\ref{prop:meas_zono_av}.
\item The reachable sets $\bar{\mathcal{R}}_k^{\text{m}}$ using constrained matrix zonotopes as introduced in Proposition~\ref{prop:meas_cmz_av}.
\item The reachable sets $\tilde{\mathcal{R}}_k^{\text{m}}$ using the practical approach computed via Algorithm~\ref{alg:LTIMeasReachability}.
\end{itemize}
The data-driven reachable sets over-approximate the perfect model based reachable set correctly, and the practical approach provides the least conservative result. For further evaluation of the applicability of the practical approach, we additionally validate that
\begin{align*}
   \begin{bmatrix}A_{\text{tr}} & B_{\text{tr}}\end{bmatrix} \in \Bigg( M' \begin{bmatrix}Y_- \\ U_-\end{bmatrix} + \mathcal{M}_{AV} \Bigg)\begin{bmatrix}X_- \\ U_-\end{bmatrix}^\dagger.
\end{align*}
where $\mathcal{M}_{AV}$ is computed from $\mathcal{Z}_{AV}$ as described in \eqref{eq:Cmw} and \eqref{eq:Gmw}.
\subsection{Polynomial Systems}

We consider the problem of computing the reachable sets of the nonlinear discrete-time system described by
\begin{align}
    f_p(x,u) &= \begin{bmatrix}
    0.7x_1 + u_1 + 0.32x_1^2\\
    0.09x_1 + 0.32 u_2x_1 + 0.4 x_2^2
    \end{bmatrix}.
    \label{eq:pnonlinearexample}
\end{align}
The initial set is chosen to be $\mathcal{X}_0{=}\bigzono{\begin{bmatrix} 1 & 2 \end{bmatrix}^\t,\text{diag}(\begin{bmatrix} 0.05 & 0.3 \end{bmatrix})}$. The input set is $\mathcal{U}_k=\bigzono{\begin{bmatrix} 0.2 & 0.3 \end{bmatrix}^\t,\text{diag}(\begin{bmatrix} 0.01 & 0.02 \end{bmatrix})}$. We consider computing the reachable set when there is random noise sampled from the zonotope  $\mathcal{Z}_w=\bigzono{\begin{bmatrix} 0 & 0 \end{bmatrix}^\t,\begin{bmatrix}0.7 \times 10^{-4}& 0.7 \times 10^{-4}\end{bmatrix}^\t}$. We used as input data 140 data points ($T=140$) from 20 trajectories, of length seven. We present in Fig.~\ref{fig:poly_jor} the following reachable sets:
\begin{itemize}
    \item The reachable sets $\hat{\mathcal{R}}^p_k$ computed via Algorithm~\ref{alg:PolyReachability} using matrix zonotopes.
    \item The reachable sets $\bar{\mathcal{R}}^p_k$ using constrained matrix zonotopes computed via the polynomial variant of Algorithm~\ref{alg:LTIConstrainedReachability}.
    \item The reachable sets $\bar{\mathcal{R}}_k^{\text{s},p}$ utilizing the side information (similar to LTI example) computed via the polynomial variant of Algorithm~\ref{alg:LTISideInfoReachability}.
\end{itemize}

Due to the nonlinearity in the model, it is only possible using the state-of-art model-based reachability analysis techniques to compute an over-approximation of the exact reachable sets $\mathcal{R}_k$ \cite{conf:thesisalthoff}. Thus, it is acceptable to have the data-driven reachable set intersecting with the over-approximate model-based reachable set. We measured the execution time of the proposed algorithms as shown in Table~\ref{tab:execTimePoly}. Our approach in computing the set of monomials for the polynomial system using interval arithmetic is faster than the state-of-the-art nonlinear reachability analysis~\cite[p.18]{conf:cora}.

\begin{table}[h!]
\caption{Execution time in minutes for reachability analysis of the polynomial system.}
\label{tab:execTimePoly}
\centering
\normalsize
\begin{tabular}{c  c  }
\toprule
 Algorithm & Execution time \\
\midrule
Model-based polynomial reachability &$0.0458$  \\
Algorithm~\ref{alg:PolyReachability} &$4.695 \times 10^{-4}$ \\
Algorithm~\ref{alg:LTIConstrainedReachability} - polynomial version & $0.227$\\
Algorithm~\ref{alg:LTISideInfoReachability} - polynomial version & $0.287$\\
\bottomrule
\end{tabular}
\end{table}


\subsection{Lipschitz Nonlinear Systems}

We consider a scenario where we have collected data and we do not know the underlying system type. We apply the proposed data-driven reachability analysis to a continuous stirred tank reactor (CSTR) simulation model \cite{conf:nonlinearexample}. 
%
The initial set is a zonotope $\mathcal{X}_0 =\zono{\begin{bmatrix} -1.9 & -20 \end{bmatrix}^\t,\text{diag}(\begin{bmatrix} 0.005  & 0.3 \end{bmatrix})}$. The input set $\mathcal{U}_k =\zono{\begin{bmatrix}0.01 & 0.01 \end{bmatrix}^\t,\text{diag}(\begin{bmatrix} 0.1 & 0.2 \end{bmatrix})}$ and the noise set $\mathcal{Z}_w=\zono{0,\begin{bmatrix}5 \times 10^{-6} & 5 \times 10^{-6}\end{bmatrix}^\t}$. We show in Figure~\ref{fig:singleex} the following:
\begin{itemize}
    \item The model based reachable sets $\mathcal{R}_k$. 
    \item The reachable sets $\hat{\mathcal{R}}_k$ using matrix zonotopes in Algorithm~\ref{alg:LTIreach} for LTI system. The $\hat{\mathcal{R}}_k$ fails to over-approximate $\mathcal{R}_k$ as the system is nonlinear.
    \item The reachable sets $\hat{\mathcal{R}}^p_k$ using matrix zonotopes in Algorithm~\ref{alg:PolyReachability} for polynomial system. Approximating the underlying system using a polynomial is better than the LTI approximation. 
    \item The reachable sets $\mathcal{R}_k^\prime$ in Algorithm~\ref{alg:LipReachability} for Lipschitz nonlinear system. $\mathcal{R}_k^\prime$ provides theoretical guarantees and thus is more conservative than $\hat{\mathcal{R}}_k$ and $\hat{\mathcal{R}}^p_k$.
\end{itemize}

\begin{figure}
    \centering
    \includegraphics[scale=0.26]{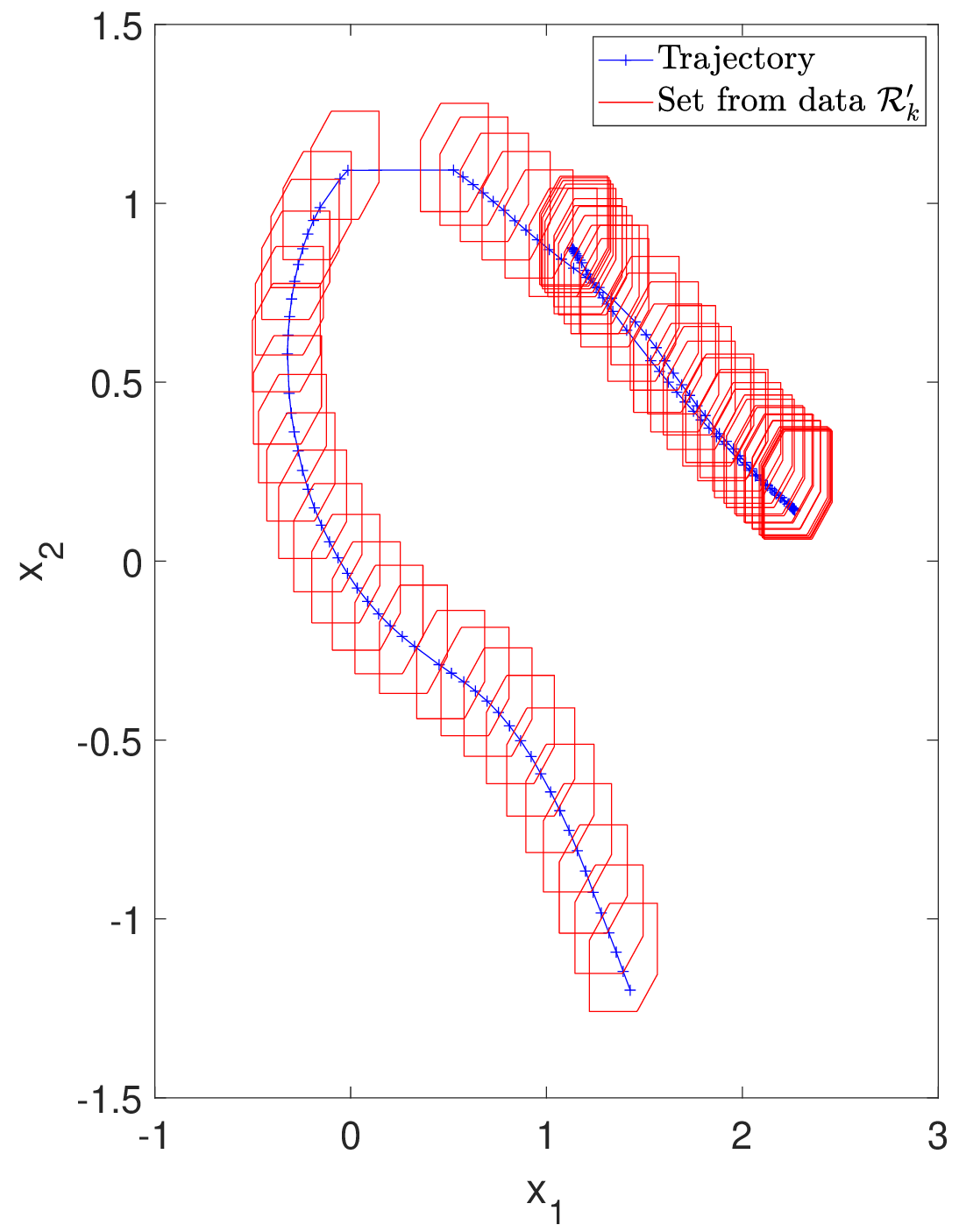}
    \caption{The computed reachable sets for the SVEA vehicle using Algorithm~\ref{alg:LipReachability}.}
    \label{fig:carlip}
    \vspace{-3mm}
\end{figure}
\subsection{Autonomous Vehicle}
%
We used in our experiments the SVEA (Small-Vehicles-for-Autonomy) vehicle \cite{Jiang2020a} shown in Fig.~\ref{fig:car}.  It is equipped with NVIDIA Jetson TX2 embedded computer and Qualisys motion capture system. We use historical data sets gathered from the same car from other driving scenarios than the presented ones. The input to the vehicle are the steering angle and the velocity and the output is the position of the vehicle. 
We consider process noise bounded by the zonotope $\mathcal{Z}_w=\zono{0,\begin{bmatrix}0.05& 0.05 \end{bmatrix}^\t}$. 
The reachable sets of a single step prediction using Algorithm~\ref{alg:LipReachability} are shown in Fig.~\ref{fig:carlip}. The reachable sets enclose the true trajectory.

\section{Conclusion and Future Work}\label{sec:con}

We consider the problem of computing the reachable sets directly from noisy data without requiring a mathematical model of the system. An approach to compute an over-approximation of the reachable set of the unknown system by over-approximating the reachable set of all sets of models consistent with the data and the noise bound is introduced. Further, we discuss some ideas for extending this result considering measurement noise added to the process noise. Moreover, we introduce a systematic approach to how prior information on the system can be included in the reachability analysis. Then, we provide algorithms to compute the reachable sets of polynomial systems given an upper bound on the degree of the polynomial and Lipschitz nonlinear systems, where we first fit a linear model and then over-approximate the model mismatch and Lagrange reminder from data.

It is part of ongoing work to investigate whether the proposed reachability analysis can be applied for (adaptive) robust model predictive control and how the resulting computational expenses and conservatism compare to the set-based parameter estimation and the respective forward-propagation with hyper-cubes \cite{conf:hypercude}, boxes \cite{conf:box} or more general set representations in \cite{conf:generalrep}. Furthermore, we want to quantify the amount of conservatism in our proposed approaches mathematically.

\section*{Acknowledgement}
The authors thank Prof. Marco Pavone for his useful insights and feedback on the proposed approaches. This work was supported by the Swedish Research Council, the Knut and Alice Wallenberg Foundation, the Deutsche Forschungsgemeinschaft (DFG, German Research Foundation) under Germany’s Excellence Strategy - EXC 2075 - 390740016, and the European Union's Horizon 2020 research and innovation programme under grant agreement No. 830927 (CONCORDIA project). The authors thank the International Max Planck Research School for Intelligent Systems (IMPRS-IS) for its support.

\begin{figure}[thp]
    \centering
    \includegraphics[scale=0.4]{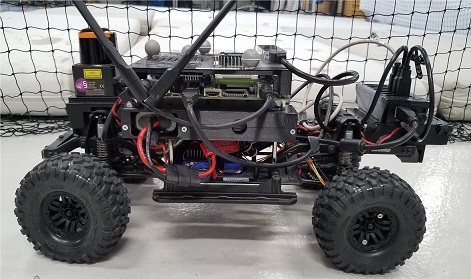}
    \caption{SVEA vehicle based on an NVIDIA Jetson TX2 embedded computer used to evaluate the proposed solutions.}
    \label{fig:car}
    \vspace{-6mm}
\end{figure}

%
%
\appendix
\section*{Proofs of Constrained Matrix Zonotopes Propositions}
\subsection{Proof of Proposition \ref{prob:minsum}}

\begin{proof}
The proof of \eqref{eq:RN} is along the lines of~\cite[Prop.~ 1]{conf:const_zono}. Let $\mathcal{N}_R$ be the right-hand side of \eqref{eq:RN}. The proof consists of proving two parts: 1) $R\mathcal{N}_1 \subseteq \mathcal{N}_R$ and 2) $\mathcal{N}_R \subseteq  R\mathcal{N}_1$. For any $X \in \mathcal{N}_1$, $\exists \beta^{(1:\gamma_{\mathcal{N}_1})}$, such that $X = C_{\mathcal{N}_1} + \sum_{i=1}^{\gamma_{\mathcal{N}_1}} \beta^{(i)} \, G_{\mathcal{N}_1}^{(i)}$ and $\sum_{i=1}^{\gamma_{\mathcal{N}_1}} \beta^{(i)} A_{\mathcal{N}_1}^{(i)} = B_{\mathcal{N}_1}$ and hence $RX = RC_{\mathcal{N}_1} + \sum_{i=1}^{\gamma_{\mathcal{N}_1}} \beta^{(i)}  R G_{\mathcal{N}_1}^{(i)}$. This implies that $RX \in \mathcal{N}_R$ by the definition of $\mathcal{N}_R$. Given that $X$ is arbitrary then $R\mathcal{N}_1 \subseteq \mathcal{N}_R$. Similarly, for any $X_R \in \mathcal{N}_R$, $\exists \beta^{(1:{\gamma_{\mathcal{N}_R}})}$, such that $X_R = R(C_{\mathcal{N}_1} + \sum_{i=1}^{\gamma_{\mathcal{N}_R}} \beta^{(i)} \, G_{\mathcal{N}_1}^{(i)})$ and $\sum_{i=1}^{\gamma_{\mathcal{N}_R}} \beta^{(i)} A_{\mathcal{N}_1}^{(i)} = B_{\mathcal{N}_1}$. it follows that $\exists X \in \mathcal{N}_1$ such that $X_R=RX$. Thus, $X_R \in R \mathcal{N}_1$ and therefore $\mathcal{N}_R \subseteq R \mathcal{N}_1$ as $X_R$ is arbitrary. We hence proved that $\mathcal{N}_R = R \mathcal{N}_1$.

Let $\mathcal{N}_{12}$ be the right-hand side of \eqref{eq:N1pN2} and let $X_1 \in \mathcal{N}_1$ and $X_2 \in \mathcal{N}_2$. Then, 
\begin{align*}
 \exists \beta_{\mathcal{N}_1}^{([1:\gamma_{\mathcal{N}_1}])}\!\!&:\! X_1 =\! C_{\mathcal{N}_1} + \sum_{i=1}^{\gamma_{\mathcal{N}_1}} \beta_{\mathcal{N}_1}^{(i)}  G_{\mathcal{N}_1}^{(i)},\quad \sum_{i=1}^{\gamma_{\mathcal{N}_1}} \beta_{\mathcal{N}_1}^{(i)} A_{\mathcal{N}_1}^{(i)} = B_{\mathcal{N}_1},
 \end{align*}
 and
 \begin{align*}
  \exists \beta_{\mathcal{N}_2}^{([1:\gamma_{\mathcal{N}_2}])}\!\!&:\! X_2 =\! C_{\mathcal{N}_2} + \sum_{i=1}^{\gamma_{\mathcal{N}_2}} \beta_{\mathcal{N}_2}^{(i)}  G_{\mathcal{N}_2}^{(i)},\quad \sum_{i=1}^{\gamma_{\mathcal{N}_2}} \beta_{\mathcal{N}_2}^{(i)} A_{\mathcal{N}_2}^{(i)} = B_{\mathcal{N}_2}.
\end{align*}
Let $\beta_{\mathcal{N}_{12}}^{([1:\gamma_{\mathcal{N}_{12}}])}=\begin{bmatrix}\beta_{\mathcal{N}_1}^{([1:\gamma_{\mathcal{N}_1}])}& \beta_{\mathcal{N}_2}^{([1:\gamma_{\mathcal{N}_2}])}\end{bmatrix}$. Then,
\begin{align*}
   X_1+X_2 =& C_{\mathcal{N}_{1}}+C_{\mathcal{N}_{2}} + \sum_{i=1}^{\gamma_{\mathcal{N}_{1}}} \beta_{\mathcal{N}_{12}}^{(i)}  G_{\mathcal{N}_{1}}^{(i)}+ \sum_{i=1}^{\gamma_{\mathcal{N}_{2}}} \beta_{\mathcal{N}_{12}}^{(\gamma_{\mathcal{N}_{1}}+i)}  G_{\mathcal{N}_{2}}^{(i)},\nonumber\\
   &\sum_{i=1}^{\gamma_{\mathcal{N}_{1}}} \beta_{\mathcal{N}_{12}}^{(i)} A_{\mathcal{N}_{1}}^{(i)} = B_{\mathcal{N}_{1}}, \sum_{i=1}^{\gamma_{\mathcal{N}_{2}}} \beta_{\mathcal{N}_{12}}^{(\gamma_{\mathcal{N}_{1}}+i)} A_{\mathcal{N}_{2}}^{(i)} = B_{\mathcal{N}_{2}}.
\end{align*}
Thus, $X_1+X_2 \in \mathcal{N}_{12}$ and therefore $\mathcal{N}_1+\mathcal{N}_2 \subseteq \mathcal{N}_{12}$. Conversely, let $X_{12} \in \mathcal{N}_{12}$, then 
\begin{align*}
\exists \beta_{\mathcal{N}_{12}}^{([1:\gamma_{\mathcal{N}_{12}}])}\!\!&:\! X_{12} =\! C_{\mathcal{N}_{12}} + \sum_{i=1}^{\gamma_{\mathcal{N}_{12}}} \beta_{\mathcal{N}_{12}}^{(i)}  G_{\mathcal{N}_{12}}^{(i)},\nonumber\\ &\sum_{i=1}^{\gamma_{\mathcal{N}_{12}}} \beta_{\mathcal{N}_{12}}^{(i)} A_{\mathcal{N}_{12}}^{(i)} = B_{\mathcal{N}_{12}}
 \end{align*}
 Partitioning $\beta_{\mathcal{N}_{12}}^{([1:\gamma_{\mathcal{N}_{12}}])}=\begin{bmatrix}\beta_{\mathcal{N}_1}^{([1:\gamma_{\mathcal{N}_1}])}& \beta_{\mathcal{N}_2}^{([1:\gamma_{\mathcal{N}_2}])}\end{bmatrix}$, it follows that there exist $X_1 \in \mathcal{N}_1$ and $X_2 \in \mathcal{N}_2$ such that $X_{12} = X_1 + X_2$. Therefore, $X_{12} \in \mathcal{N}_1 + \mathcal{N}_2$ and $ \mathcal{N}_{12} \subseteq \mathcal{N}_1+\mathcal{N}_2$.
 \end{proof}

\subsection{Proof of Proposition \ref{prop:cmzconstzonotope}}
\begin{proof}
  Let $\mathcal{C}_{1}$ be the right-hand side of \eqref{eq:matczono} and let $X_1 \in \mathcal{N}$ and $c \in \mathcal{C}$. Then, 
\begin{align*}
 \exists \beta_{\mathcal{N}}^{([1:\gamma_{\mathcal{N}}])}\!\!&:  X_1 =\! C_{\mathcal{N}} + \sum_{i=1}^{\gamma_{\mathcal{N}}} \beta_{\mathcal{N}}^{(i)}  G_{\mathcal{N}}^{(i)},\quad \sum_{i=1}^{\gamma_{\mathcal{N}}} \beta_{\mathcal{N}}^{(i)} A_{\mathcal{N}}^{(i)} = B_{\mathcal{N}}, \nonumber\\
  \exists \beta_{\mathcal{C}}^{([1:\gamma_{\mathcal{C}}])}\!\!&:  c =\! c_{\mathcal{C}} +  G_{\mathcal{C}} \beta_{\mathcal{C}}^{([1:\gamma_{\mathcal{C}}])}, \quad A_{\mathcal{C}}\beta_{\mathcal{C}}^{([1:\gamma_{\mathcal{C}}])}=b_{\mathcal{C}}.
 \end{align*}
Let 
\begin{align}
\beta_{\mathcal{C}_1}^{([1:\gamma_{\mathcal{C}_1}])}=\begin{bmatrix}\beta_{\mathcal{N}}^{([1:\gamma_{\mathcal{N}}])}&\beta_{\mathcal{C}}^{([1:\gamma_{\mathcal{C}}])}& 
\beta_{\mathcal{N}\mathcal{C}}^{([1:\gamma_{\mathcal{N}}\gamma_{\mathcal{C}}])}\end{bmatrix}. \label{eq:beta_c1_pp3}
\end{align} 
with
\begin{align*}
    \!\!&\beta_{\mathcal{N}\mathcal{C}}^{([1:\gamma_{\mathcal{N}}\gamma_{\mathcal{C}}])} = 
    \\ &\begin{bmatrix}\beta_{\mathcal{N}}^{(1)} \beta_{\mathcal{C}}^{(1)}  
    \dots \beta_{\mathcal{N}}^{(\gamma_{\mathcal{N}})}\beta_{\mathcal{C}}^{(1)}\!\!  &\!\! \beta_{\mathcal{N}}^{(1)}\beta_{\mathcal{C}}^{(2)}  \dots
    \beta_{\mathcal{N}}^{(\gamma_{\mathcal{N}})}\beta_{\mathcal{C}}^{(2)}  
    \dots  \beta_{\mathcal{N}}^{(\gamma_{\mathcal{N}})} \beta_{\mathcal{C}}^{(\gamma_{\mathcal{C}})}  \end{bmatrix}.
\end{align*}
Then, 
 \begin{align*}
 X_1\,c =& C_{\mathcal{N}} c_{\mathcal{C}} + \sum_{i=1}^{\gamma_{\mathcal{N}}} \beta_{\mathcal{C}_1}^{(i)}  G_{\mathcal{N}}^{(i)}  c_{\mathcal{C}} + C_{\mathcal{N}} G_{\mathcal{C}} \beta_{\mathcal{C}_1}^{([\gamma_{\mathcal{N}}+1:\gamma_{\mathcal{N}}+\gamma_{\mathcal{C}}])} \nonumber\\
 & + \sum_{i=1}^{\gamma_{\mathcal{N}}} \sum_{j=1}^{\gamma_{\mathcal{C}}} \beta_{\mathcal{N}}^{(i)} \beta_{\mathcal{C}}^{(j)} G_{\mathcal{N}}^{(i)}  g_{\mathcal{C}}^{(j)}. 
 \end{align*}
Next, we find the constraints on $\beta_{\mathcal{C}_1}^{([1:\gamma_{\mathcal{C}_1}])}$ in \eqref{eq:beta_c1_pp3}. The constraints on $\beta_{\mathcal{C}_{1}}^{([1:\gamma_{\mathcal{N}}])}$ and $\beta_{\mathcal{C}_{1}}^{([\gamma_{\mathcal{N}}+1:\gamma_{\mathcal{N}}+\gamma_{\mathcal{C}}])}$ can be captured by $ A_{\mathcal{N}} \beta_{\mathcal{C}_{1}}^{([1:\gamma_{\mathcal{N}}])}  = \mathrm{vec} (B_{\mathcal{N}})$ and $A_{\mathcal{C}}\beta_{\mathcal{C}_{1}}^{([\gamma_{\mathcal{N}}+1:\gamma_{\mathcal{N}}+\gamma_{\mathcal{C}}])}=b_{\mathcal{C}}$, respectively.

Finally, to find the constraint on $\beta_{\mathcal{N}\mathcal{C}}^{([1:\gamma_{\mathcal{N}}\gamma_{\mathcal{C}}])}$ (i.e. on $\beta_{\mathcal{C}_1}^{([\gamma_{\mathcal{N}}+\gamma_{\mathcal{C}}+1:\gamma_{\mathcal{C}_1}])}$), we first compute the intervals to which $\beta_{\mathcal{N}}^{([1:\gamma_{\mathcal{N}}])}$ and $\beta_{\mathcal{C}}^{([1:\gamma_{\mathcal{C}}])}$ are confined to:
\begin{align*}
\beta_{L,\mathcal{N}}^{(i)} \leq \beta_{\mathcal{N}}^{(i)} \leq \beta_{U,\mathcal{N}}^{(i)}, \quad
\beta_{L,\mathcal{C}}^{(j)} \leq \beta_{\mathcal{C}}^{(j)} \leq \beta_{U,\mathcal{C}}^{(j)}
\end{align*} in \eqref{eq:betaLZono2}-\eqref{eq:betaUCZono2}. Consequently, we know that
 \begin{align}
 \begin{split}
     \min ( 
     \beta_{L,\mathcal{N}}^{(i)}\beta_{L,\mathcal{C}}^{(j)},&\beta_{L,\mathcal{N}}^{(i)}\beta_{U,\mathcal{C}}^{(j)},\beta_{U,\mathcal{N}}^{(i)}\beta_{L,\mathcal{C}}^{(j)},\beta_{U,\mathcal{N}}^{(i)}\beta_{U,\mathcal{C}}^{(j)}
     ) \\
     &\leq  
     \beta_{\mathcal{N}}^{(i)}  \beta_{\mathcal{C}}^{(j)}
     \leq \\
     \max (
     \beta_{L,\mathcal{N}}^{(i)}\beta_{L,\mathcal{C}}^{(j)},&\beta_{L,\mathcal{N}}^{(i)}\beta_{U,\mathcal{C}}^{(j)},\beta_{U,\mathcal{N}}^{(i)}\beta_{L,\mathcal{C}}^{(j)},\beta_{U,\mathcal{N}}^{(i)}\beta_{U,\mathcal{C}}^{(j)}
     )
     \end{split}
     \label{eq:interval_correct}
 \end{align}
 holds for all $i=1,\dots,\gamma_{\mathcal{N}}$ and $j = 1, \dots, \gamma_{\mathcal{C}}$. This interval in \eqref{eq:interval_correct} can again be over-approximated by scaling the generator matrices $G_{\mathcal{N}}^{(i)} g_{\mathcal{C}}^{(j)}$ by $\max (\abs{\beta_{L,\mathcal{N}}^{(i)}\beta_{L,\mathcal{C}}^{(j)}},\abs{\beta_{L,\mathcal{N}}^{(i)}\beta_{U,\mathcal{C}}^{(j)}},\abs{\beta_{U,\mathcal{N}}^{(i)}\beta_{L,\mathcal{C}}^{(j)}},\abs{\beta_{U,\mathcal{N}}^{(i)}\beta_{U,\mathcal{C}}^{(j)}})$ and let $-1 \leq \beta_{\mathcal{C}_1}^{([\gamma_{\mathcal{N}}+\gamma_{\mathcal{C}}+1:\gamma_{\mathcal{C}_1}])} \leq 1$.
Thus, $X_1\,c \in \mathcal{C}_1$ or generally, $\mathcal{N}\,\mathcal{C} \subseteq \mathcal{C}_1$. 
\end{proof}


\bibliographystyle{ieeetr}        
\bibliography{ref} 
\vspace{-18mm}
\begin{IEEEbiography}[{\includegraphics[width=1in,height=1.25in,clip,keepaspectratio]{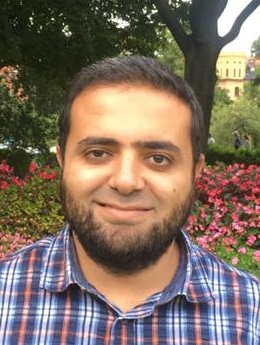}}]{Amr Alanwar} is an assistant professor at Constructor University, Germany. He received an M.Sc. in Computer Engineering from Ain Shams University, Cairo, Egypt, in 2013 and a Ph.D. in Computer Science from the Technical University of Munich, Germany, in 2020. He was a postdoctoral researcher at KTH Royal Institute of Technology. He was also a research assistant at the University of California, Los Angles. He received the Best Demonstration Paper Award at the 16th ACM/IEEE International Conference on Information Processing in Sensor Networks (IPSN/CPSWeek 2017) and was a finalist in the Qualcomm Innovation Fellowship for two years in a row.
\end{IEEEbiography}
\vspace{-18mm}
\begin{IEEEbiography}[{\includegraphics[width=1in,height=1.25in,clip,keepaspectratio]{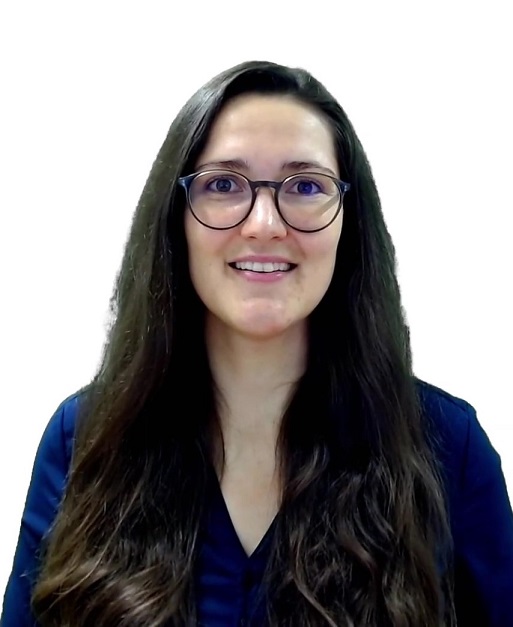}}]{Anne Koch (n\'ee Romer)}
received the M.Sc. in Engineering Science and Mechanics from the Georgia Institute of Technology, Atlanta, USA, in 2014, and the M.Sc. in Engineering Cybernetics from the University of Stuttgart, Germany, in 2016. In 2021, she received the Ph.D. from the Institute for Systems Theory and Automatic Control at the University of Stuttgart within the International Max Planck Research School for Intelligent Systems. Her re\-search interests include data-based systems analysis and controller design.
\end{IEEEbiography}
\vspace{-18mm}
\begin{IEEEbiography}[{\includegraphics[width=1in,height=1.25in,clip,keepaspectratio]{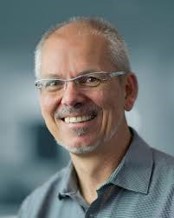}}]{Frank Allg\"{o}wer}
is a professor of mechanical engineering at the University of Stuttgart, Germany, and Director of the Institute for Systems Theory and Automatic Control (IST) there. Frank is active in serving the community in several roles: Among others he has been President of the International Federation of Automatic Control (IFAC) for the years 2017-2020, Vice-president for Technical Activities of the IEEE Control Systems Society for 2013/14, and Editor of the journal Automatica from 2001 until 2015. 
His research interests include predictive control, data-based control, cooperative control, and nonlinear control with application to a wide range of fields including systems biology.
\end{IEEEbiography}
\vspace{-18mm}
\begin{IEEEbiography}[{\includegraphics[width=1in,height=1.25in,clip,keepaspectratio]{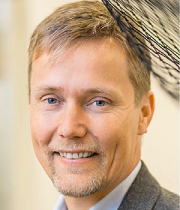}}]{Karl Henrik Johansson} is the Director of the Stockholm Strategic Research Area ICT The Next Generation and a Professor at the School of Electrical Engineering and Computer Science, KTH Royal Institute
of Technology. He received his M.Sc. and Ph.D. degrees from Lund University, Lund, Sweden. He has held visiting positions with the University of California, Berkeley, California Institute of Technology, Nanyang Technological University, HKUST Institute of Advanced Studies, and Norwegian University of Science and Technology. His research interests include networked control systems, cyber-physical systems, and applications in transportation, energy, and automation. 
\end{IEEEbiography}
\end{document}